
\documentclass[12pt]{article}
\usepackage{amsmath, amsthm}
\usepackage{amsfonts}
\usepackage{amssymb}
\usepackage{graphicx}
\usepackage{subcaption}
\usepackage{amscd}
\usepackage{bm}
\usepackage{xcolor, colortbl}


\usepackage{epstopdf}
\usepackage{algorithmic}
\ifpdf
  \DeclareGraphicsExtensions{.eps,.pdf,.png,.jpg}
\else
  \DeclareGraphicsExtensions{.eps}
\fi

\usepackage{enumitem}
\setlist[enumerate]{leftmargin=.5in}
\setlist[itemize]{leftmargin=.5in}




\title{Two-Dimensional Frequency-Difference-of-Arrival Varieties} 

\author{Jeanne Duflot
\and Margaret Cheney\footnotemark[2]
\and James A. Given 
}


\setlength{\topmargin}{-.25in}
\setlength{\textheight}{9in}
\oddsidemargin 0pt
\setlength{\textwidth}{6.5in}


\newtheorem{theorem}[equation]{Theorem}
\newtheorem{lemma}[equation]{Lemma}

\newtheorem{corollary}[equation]{Corollary}
\newtheorem{remark}[equation]{Remark}
\newtheorem{example}[equation]{Example}
\newtheorem{definition}[equation]{Definition}

\newcommand{\bxi}{\boldsymbol{\xi}}

\begin{document}
\maketitle

\begin{abstract}
This paper studies  Frequency-Difference-of-Arrival (FDOA) curves for the 2-dimensional, 2-sensor case.  The primary focus of this paper is to give a description of curves associated to the FDOA problem from the algebro-geometric point of view.  
 To be more precise,  the complex projective picture of the family of FDOA curves for all possible relative velocities is described.  
\end{abstract}




\tableofcontents

\setcounter{tocdepth}{1}

\section{Introduction}

 There are a variety of situations in which we might wish to determine the location from which electromagnetic or acoustic waves are emanating:  examples are locating gunshots, radar systems, or the ``black box" of a downed aircraft.  To locate such a source, typically one measures the associated signals at multiple receiver locations, and compares the measurements.  
One possible way to compare the measurements is to look for the time difference of arrival (TDOA) between the two signals.  
This is typically done by cross-correlating \cite{SMGC} the measurements and determining the time delay at which a peak occurs in the cross-correlation function.  For some signals, however, the TDOA resolution may be very poor.    

Another possible way to compare the measurements is to use one or more moving sensors with known positions and velocities.   
Alternatively, fixed sensors can be used to measure emanations from a moving source with known velocity.  
In either case, the relative motion results in signals that are Doppler-shifted, and one can look for the frequency difference of arrival (FDOA) between the signals.  This can be done by Doppler-shifting one signal by various amounts and then cross-correlating \cite{SMGC} with the other signal.  
This process results in the ``cross-ambiguity'' funtion, which has been thoroughly studied \cite{L}, and it turns out that the TDOA and FDOA measurements are complementary in the sense that they satisfy an uncertainty relation.  Thus it can easily happen that a signal that produces poor TDOA resolution will produce good FDOA resolution.  

Consequently it is of great interest to study source localization based on FDOA measurements.  This turns out to be much more difficult than localization based on TDOA measurements.  

Typically, using TDOA measurements to locate a source involves considering the TDOA for each pair of sensors in turn, and for each such pair, mapping out the set of source locations that could have produced the TDOA measurement.  In a free-space environment, the TDOA for each pair produces a constraint that the source must lie on a certain hyperboloid; consequently the source must be located at the intersection of the hyperboloids for the various sensor pairs.  If the source is known to lie on a flat plane, then the TDOA source localization problem \cite{HoCh} becomes a problem of determining the intersection of hyperbolas.  

For FDOA measurements, however the corresponding constraints are much more complicated than hyperbolas.  

This paper investigates  curves of constant FDOA, in a free-space environment,  for the case when both the source and the pair of receivers lie on a flat two-dimensional plane.  

Our approach is motivated by the  TDOA paper of Ho and Chan \cite{HoCh} 
and is based on their point of view;  their work differs from the present paper in that we exploit the 
 tools of algebraic geometry, and we consider the more difficult FDOA problem.
 
 The Conclusions section gives a detailed summary of the specific accomplishments of the document.
 
\section{Problem formulation and strategy}
We assume that the sensor locations and velocities are known, and from the FDOA measurement, we wish to find the location of a stationary source. 

\subsection{Notation and geometry}
We denote the source location by $\bm y \in \mathbb R^2$ and the sensor locations $\bm s_1, \bm s_2 \in \mathbb R^2$.
In the frame of reference in which the source is stationary, we denote the sensor velocities by 
$\bm v_1, \bm v_2 \in \mathbb R^2$, 
respectively.   
We note that the unit vector from the source to the sensor  at $\bm s$ is $(\bm s- \bm y)/|\bm s- \bm y|$ where $|\bm x |$ denotes the Euclidean norm of the vector $\bm x$.  

When the sensor velocities are both equal to $\bm v$, we can switch to the frame of reference in which the sensors are stationary and the source is moving with velocity $-\bm v$.  Thus a special case of the scenario with a stationary source and moving sensors also addresses the case of stationary sensors and a source moving with a known velocity.  

We consider only measurements made at one instant, so we do not address any change in sensor position.


\subsection{TDOAs}
The time difference of arrival of a signal at the two sensors located at $\bm s_1$ and $\bm s_2$ is
\begin{equation}	\label{TDOAdef}
\frac{1}{c}  \left( \underbrace{ |\bm s_1 - \bm y| - |\bm s_2 - \bm y| }_{\rm TDOA}\right).
\end{equation}
We neglect the factors of the (assumed known and constant) speed $c$ of signal propagation; 
alternatively, we choose units so that $c=1$.

\subsection{Doppler shifts and FDOA curves}
Suppose
 a stationary emitter at location $\bm y = (y_1, y_2)$ emits a signal of frequency $\omega_0 \in \mathbb R_+$ that is  then received by moving receivers at locations $\bm s_1, \bm s_2$ and velocities $\bm v_1, \bm v_2$.  
The signal at each receiver will be Doppler-shifted by an amount proportional to the line-of-sight component of its velocity.  
In particular, the received frequency at receiver $j$ will be  
$\omega = \omega_0  \left[ 1 +  \frac{(\bm s_j- \bm y)}{ |\bm s_j- \bm y |} \cdot \frac{\bm v_j}{c} \right]$ where $c$ is the (assumed constant) speed of signal propagation.  

Unfortunately, we cannot measure the Doppler shift itself, because we do not know the frequency $\omega_0$ emitted by the source.  However, if we measure the frequency at two stationary receiver locations, we obtain
$\omega_1=  \omega_0(1 + \frac{(\bm s_1 \bm -\bm y)}{ |\bm s_1-\bm  u |} \cdot \frac{\bm v_1}{c})$ and $\omega_2 = \omega_0(1 + \frac{(\bm s_2- \bm y)}{ |\bm s_2- \bm y |} \cdot \frac{\bm v_2}{c} )$, which we can divide to obtain
\begin{align}
\frac{\omega_2}{\omega_1} &= \frac{1 + \frac{(\bm s_2- \bm y)}{ |\bm s_2- \bm y |} \cdot  \frac{\bm v_2}{c} }{1 
	+ \frac{(\bm s_1-\bm y)}{ |\bm s_1- \bm y |} \cdot \frac{\bm v_1}{c} } 
= \left(1 +   \frac{(\bm s_2-\bm y)}{ |\bm s_2- \bm y |} \cdot \frac{\bm v_2}{c}  \right) 
	\left( 1- \frac{(\bm s_1-\bm y)}{ |\bm s_1- \bm y |} \cdot \frac{\bm v_1}{c}   + \cdots \right) \cr
	&= 1 +  \frac{1}{c} \cdot  \left(  \underbrace{\frac{(\bm s_2-\bm y)\cdot \bm v_2}{ |\bm s_2- \bm y |}   
		- \frac{(\bm s_1-\bm y)\cdot \bm v_1}{ |\bm s_1- \bm y |} }_{\rm FDOA}    + \cdots \right) 
\end{align}

 Here the higher-order terms are smaller in $\bm v_j/c$, a quantity which is typically very small.  


One of the objectives here  is to describe the geometric properties of the set of points $\bm y \in \mathbb{R}^2$ that satisfy the equation

\begin{equation} 	\label{FDOAdef}
d = \frac{(\bm s_2- \bm y)\cdot \bm v_2}{| \bm s_2 - \bm y |} - \frac{(\bm s_1- \bm y)\cdot \bm v_1}{ |\bm s_1- \bm y |}
= \bm e_2 \cdot \bm v_2 - \bm e_1 \cdot \bm v_1,
\end{equation} 
where $d$ is a fixed real number and where $\bm e_j =(\bm s_j - \bm y) /  |\bm s_j- \bm y |$  are unit vectors, depending on $\bm y$.

 Thus we should not be surprised to find that an important role is played by the torus $S^1 \times S^1$, where $S^1$ denotes the unit circle (set of unit vectors) in $\mathbb R^2$. 
 
 There are some constraints on the unit vectors $\bm e_1,\bm e_2$.  If we choose our initial coordinate frame $(y_1,y_2)$ so that the sensor positions are 
 $$\bm s_2 = (-a,0), \qquad \bm s_1 = (a,0),$$ 
 where $a$ is a fixed strictly positive real number, then the unit vectors are 
 \begin{equation}	\label{unitvecs}
 \bm e_1 = \frac{ (a-y_1, -y_2)}{| (a-y_1,-y_2)| }, \qquad  \bm e_2 = \frac{(-a-y_1,-y_2)}{|(-a-y_1,-y_2)|}.
 \end{equation}  
 Note that the second coordinates of $\bm e_1, \bm e_2$ are either both nonnegative or are both nonpositive; in other words, these second coordinates cannot be opposite in sign if nonzero.  In other words, if we describe the unit vectors  as 
 $$\bm e_1 = ( \cos \theta, \sin \theta),  \qquad \bm e_2 = (\cos \tau, \sin \tau),$$ then  
 $$0 \leq \theta \leq \pi \; \mbox{and}\; 0 \leq \tau \leq \pi $$ or
 $$\pi \leq \theta \leq 2 \pi \; \mbox{and}\; \pi \leq \tau \leq 2 \pi.$$
  
As noted in \cite{KC}, there are also some constraints on $\bm v, d$:  using the Cauchy-Schwarz inequality for equation \ref{FDOAdef}, we see that if all parameters are real numbers, then
\begin{equation}  \label{CSbound} 
\mid d \mid = \mid \bm e_2 \cdot \bm v_2 - \bm e_1 \cdot \bm v_1 \mid \leq \mid \bm e_2 \cdot \bm v_2\mid + \mid \bm e_1 \cdot \bm v_1\mid \leq \|\bm v_2\| + \| \bm v_1\|.
\end{equation}  
In other words, for there to be real solutions $\bm y$ to equation \ref{FDOAdef}, the parameters $\bm v,d$ must satsify this ``Cauchy-Schwarz" bound.


\subsection{Strategy and formulation of the modified FDOA problem}	\label{strategy}
One of the main points of this paper is to describe   the   FDOA curves,  related to the solution set for $(y_1,y_2)$ of the equation \eqref{FDOAdef},  using tools from algebraic geometry.  What sort of curves are these, from the geometric point of view?  Since most of the tools used from algebraic geometry require working over the complex numbers, once we get a description of the curves over the complex numbers, how does this help us figure out how the curves behave when working over the real numbers?

The strategy is to  describe a specific process that starts with  the set of points $(y_1,y_2) \in \mathbb{R}^2$ with defining equation \eqref{FDOAdef} and arrives at an  algebraic variety (a geometric object) which is the solution set, in some projective space over the complex numbers, to a system of degree two polynomial equations; describing this new solution set in projective space over the complex numbers will be called ``the modified problem".  This is the primary concern of this paper.    Some geometric properties of the corresponding solution sets for the modified problem over the real numbers are discussed, but more detailed analyses over the real numbers are left for  future work.   These   detailed analyses over the real numbers 
 would use, in crucial ways, the geometric view of the solution sets over the complex numbers which we present in this paper.

\subsubsection{Use of projective space} 

The description of various sorts of ``FDOA" curves is first given in projective spaces:  the projective  line, the projective plane, projective 3-space and projective 4-space.  We usually start by working over the complex numbers, and pass from the description of a curve over the complex numbers to a description over the real numbers.  Allowing all quantities to be complex is useful in dealing with polynomial systems for many reasons; for example, polynomial systems  have solutions over the complex numbers:  a polynomial in one variable of degree $n$ has $n$ complex roots, counted with multiplicity, whereas there are many polynomials in one variable which have real coefficients, but no real solutions.

We summarize here basic definitions and notation used to define projective space.  For example, two-dimensional real projective space $\mathbb R P^2$ is the set of one-dimensional subspaces (lines through the origin) in $\mathbb R^3$.  This means that a point in $\mathbb{R}P^2$ is given by a nonzero vector $(x_0,x_1,x_2) \in \mathbb{R}^3$, up to nonzero scalar multiples, since a line through the origin in $\mathbb{R}^3$ is uniquely determined by a nonzero direction vector  for that line, and any two direction vectors for the same line are nonzero scalar multiples of each other.  This gives us the definition of a set of projective coordinates on $\mathbb{R}P^2$:  we write a point in $\mathbb{R}P^2$ as $[x_0,x_1,x_2]$ (the ``square bracket notation")  where $(x_0,x_1,x_2)$ is a nonzero vector in $\mathbb{R}^3$ and  $[x_0,x_1,x_2] = [\tilde{x}_0,\tilde{x}_1, \tilde{x}_2]$ in $\mathbb{R}P^2$ if and only if there is a nonzero real number $r$ such that $\tilde{x}_j = rx_j$ for each $j$.  
The function
$$\mathbb{R}^3 - \{0\} \rightarrow \mathbb{R}P^2$$ given by
$$(x_0,x_1,x_2) \mapsto [x_0,x_1,x_2],$$  gives $\mathbb{R}P^2$ its topology;  $\mathbb{R}P^2$ is a compact space.  Note that points in the projective plane correspond to lines in  Euclidean 3-space and lines in the projective plane correspond to planes in Euclidean 3-space.

In general, if $K$ is either the set of complex numbers $\mathbb{C}$, or the set of real numbers $\mathbb{R}$, and $1 \leq n \leq 4$,
$$KP^n = \{ [x_0,x_1, \ldots, x_n] \mid (x_0, \ldots, x_n)  \neq \bm 0 \in K^{n+1} \},$$ and 
$$[x_0,x_1, \ldots, x_n] = [y_0,y_1, \ldots, y_n]$$
$$ \mbox{if and only if there is a}\; \lambda \neq 0 \in K \; 
\mbox{such that} \; x_j= \lambda y_j, \; \mbox{ for every}\; j.$$ While we won't give details about the topology of $KP^n$ here, we'll  refer to ``affine open subsets" of $KP^n$, and we give a brief description of these.

We'll just focus on the example of $KP^2$.  There are three special open subsets of $KP^2$,  $U_0,U_1,U_2$, called the ``standard affine open subsets".  These sets are described by
\begin{equation}	\label{R2inP2}
U_j \doteq \{[x_0,x_1,x_2] \in KP^3 \mid x_j \neq 0\}.
\end{equation}  
Each is homeomorphic with $K^2$ via a specific homeomorphism $h_j:U_j\rightarrow K^2$.

For example, in later sections we will use the specific homeomorphism $\mathbb{R}^2  \cong  U_0$ established by the case $j=0$, namely with $h_0 : U_0 \rightarrow \mathbb{R}^2$  defined by
$$[x_0,x_1,x_2] \mapsto (x_1/x_0, x_1/x_0) $$ 
and the inverse  $h_0^{-1}:  \mathbb{R}^2 \rightarrow \mathbb{R} P^2$ given by
$$(y_1,y_2) \mapsto [1,y_1,y_2].$$      

The open sets $U_j$ and homeomorphisms $h_j$ may be used to give $KP^2$ the structure of a two-dimensional (over $K$) compact manifold.  Similarly, $KP^1$ is covered by two standard affine open subsets, $KP^3$ is covered by four open affine subsets, etc.

Why do we study geometric objects in projective space, rather than in $\mathbb{R}^n$ or $\mathbb{C}^n$?  Projective spaces  are compact,  and can be considered as a sort of compactification of Euclidean spaces.  Geometric analyses in projective space are often simpler:  for example, in the projective plane,  distinct lines always intersect in exactly one point, which isn't true in the Euclidean plane.  Moreover, one can pass easily from geometry in projective space  to geometry in Euclidean space using the affine open subsets.   Finally, theorems used from algebraic geometry often use hypotheses that consider geometric objects in complex projective sapce.
\subsubsection{Projective varieties}\label{varietydef}

  If $F_1, \ldots, F_k$ are homogeneous polynomials in variables $x_0, \ldots, x_n$,  with coefficients in the complex numbers,  a {\bf projective variety}
$V \doteq V(F_1, \ldots, F_k) \subseteq \mathbb{C}P^n$  is defined as  the set of  $[\bxi] \in \mathbb{C}P^n$ such that
$$F_1(\bxi) =  \ldots = F_k(\bxi) = 0.$$ 

Note that the definition of $V$ does not depend on the choice of $\bxi$ representing $[\bxi]$ 
 since the polynomials $F_j$ are homogeneous polynomials.  

 If the coefficients of all the polynomials $F_j$ are real numbers, we say that $V$ is a {\bf real} variety.  In this case,  $V(\mathbb{R})$ is the set of points 
  $[\bxi] \in V$ such that there are real numbers $x_j$ such that
$[\bxi] = [x_0, \ldots, x_n]$. 
 We then say ``{\bf $V(\mathbb{R})$ is the set of real points of the real variety $V$}".  The set $V(\mathbb{R})$ is also called a projective variety, since it is a subset of $\mathbb{R}P^n$ that is a solution set of a system of polynomial equations. For some specific examples of varieties considered here, we may use different letters than ``$V$"; e.g.,  in the next section, we define a variety called $Y(Q_1,Q_2)$.

\subsubsection{Using algebraic geometry to study equations involving Euclidean norms}
Algebraic geometry in its most basic conception is the study of solutions to polynomial systems of equations.  While concerned with algebraically finding individual such solutions, algebraic geometers'  greater interest  is  to geometrically understand the set of solutions as a whole.  Now, in order to study, with the tools of algebraic geometry,  solutions of equations like \eqref{FDOAdef} involving norms of vectors, since norms are nonnegative square roots of polynomial expressions of degree two, the square roots must be eliminated, by ``squaring", or another method.  In this paper, as in \cite{HoCh}, we will also eliminate  square roots by adding new variables and new polynomial equations in a systematic way.  We do want to know how to go back and forth between the resulting polynomial system and the original equations which involve norms, and hence have square roots sprinkled throughout.

Let us proceed now to describe a process, used in this  paper, of going from the study of solutions to the FDOA equation (2.2) to a geometric object in projective space, called an ``FDOA variety" or ``FDOA curve",  which can be  directly studied with tools from algebraic geometry.

We begin by moving into projective space, adding the variable $t$ to accomplish this (this is called ``homogenization"), and two additional variables, $r_1$ and $r_2$,   in order to obtain polynomial equations for a solution set which can be studied using algebro-geometric tools.  

So,  we add three new variables, $t, r_1$ and $r_2$, and form the two new second-degree homogeneous polynomials
\begin{equation}	\label{Qjdef}
Q_j(t,  y_1, y_2, r_1,  r_2)  \doteq (s_{j1}t-y_1)^2 +  (s_{j2}t-y_2)^2 - r_j^2  \qquad j = 1, 2.
\end{equation}   

Before looking at how equation (2.2) is incorporated, we first study the solution set of these two homogeneous polynomials in $\mathbb{C}P^4$; this solution set is the {\bf  subvariety} $Y(Q_1, Q_2)$ of $\mathbb{C}P^{4} $ defined by
$$Y(Q_1,Q_2) = \{[t,\bm y, \bm r] \in \mathbb{C}P^{4} \mid Q_j(t,\bm y,\bm r) = 0, \; \mbox{for}\; j = 1,2 \}.$$ Here we have used the  abbreviated notation: $[t,\bm y,\bm r] = [t,y_1,y_2, r_1, r_2].$   

We will
call $Y(Q_1,Q_2)$ the ``ambient" variety, since it holds  FDOA varieties, for every choice of velocity vector pairs; in other words, the ambient variety $Y(Q_1, Q_2)$ is fundamental to the use of algebraic geometry in addressing sensing problems for pairs of ranging sensors, because it holds information about ranges from each of two sensors, for any pair of velocity vectors.  
The particular sensing problem for a pair of sensors, involving for example one or more TDOA equations
 ({\it i.e.,} equation of the form TDOA = measured value, where TDOA is given by \eqref{TDOAdef})
or  FDOA equations  ({\it i.e.,}  equations of the form FDOA = measured value, where FDOA is given by \eqref{FDOAdef}), 
involves adding an extra polynomial condition.  In fact, either a TDOA or FDOA problem can be studied using an appropriate subvariety of $Y(Q_1, Q_2)$.

\begin{remark}{ \bf TDOA Example.} 
The (modified) TDOA subvariety of $Y(Q_1, Q_2)$ is defined by supplementing the conditions $Q_1 = Q_2 = 0$ with the TDOA condition $L=0$, where $L$ is the polynomial  
$L = r_2-r_1-b t.$ 
Here $b$  is a fixed nonzero real number representing the TDOA value computed from the two sensors.  
 \end{remark}

The standard open affine subsets, and the homeomorphisms between these open subsets of projective space and Euclidean space,  give a way of passing from information obtained in the ambient variety, which lies in projective space, to information in Euclidean space.  For example, if $[t,y_1,y_2,r_1,r_2] \in Y(Q_1,Q_2)$ with $t \neq 0$ and all variables real numbers,  and we label   coordinates in $\mathbb{R}^4$  as $(z_1,z_2,R_1,R_2)$, when we pass  to $\mathbb{R}^4$ using the correspondence $z_1 = y_1/t, z_2 = y_2/t, R_1 = r_1/t, R_2 = r_2/t$, then, after doing a little algebra,  the part of $Y(Q_1, Q_2)$ where $t \neq 0$ and all coordinates are real, maps homeomorphically to the solution set $(z_1,z_2,R_1,R_2)$ in $\mathbb{R}^4$ of the  equations
\begin{equation}	\label{Sdef}
R_j^2 = |\bm s_j - (z_1, z_2)|^2 \qquad j = 1,2.
\end{equation}

\subsubsection{FDOA conditions in complex projective space} \label{HCFdef}
The sensors have fixed, known instantaneous velocity vectors $\bm v_1, \bm v_2$ in $\mathbb{R}^2$, where $\bm v_j= (v_{j1}, v_{j2}) $.   If $\bm v_j \neq \bm 0$, for $j = 1,2 $, define a linear polynomial $L_j$ in the variables $t, \bm y$   by
\begin{equation}	\label{vdotprod}
L_j(\bm v) = L_j(\bm v_1,\bm v_2) \doteq \bm v_j \cdot (\bm s_j t - \bm y) = \sum_{l = 1}^2 v_{jl}(s_{jl}t-y_l), \qquad j = 1,2 .
\end{equation}
Next, given the FDOA value $d$  for the sensor pair, transform the FDOA equation \eqref{FDOAdef} by clearing denominators, moving all terms to one side of the equation, and using our new variables.  Thus we arrive at the  polynomial

 \begin{equation}	\label{FDOAcond}
 \tilde{Q}(\bm v,d) \doteq L_2(\bm v)r_1-L_1(\bm v)r_2 - r_2r_1 d .
 \end{equation}
and  the ``FDOA" subvariety of $\mathbb{C}P^{4}$ defined by 
 \begin{equation}	\label{FDOAvariety}
 HC_F(\bm v,d) = \{[t,\bm y,\bm r] \in Y(Q_1, Q_2) \mid \tilde{Q}(\bm v,d)(t, \bm y, \bm r) = 0 \}.
 \end{equation}
 
 The ``$HC$" in the name of the FDOA variety comes from Ho-Chen, since their paper \cite{HoCh} 
 uses the method of finding solutions of \eqref{FDOAdef} by adding the extra variables $r_1,r_2.$  The ``$F$" in the subscript comes from ``FDOA".  Note that the varieties $HC_F$ no longer lie in a real two dimensional Euclidean plane, but rather on a complex surface $Y(Q_1,Q_2)$ in complex  projective four-space $\mathbb{C}P^4$.

\subsection{The FDOA problem: Summary}\label{sumFDOA}
  We are interested in three ``FDOA" problems for a single pair of sensors. 
\begin{enumerate}
\item {\bf The starting problem.}  Given a sensor geometry, velocities $\bm v_1, \bm v_2$, and the FDOA measurement $d$, describe geometrically the set  of points $\bm y$ satisfying the FDOA equation \eqref{FDOAdef}.  This corresponds to describing the set of possible source locations from a single FDOA measurement.

\item {\bf The modified problem.}  In this  paper, much of the focus is on the study of the Ho-Chen varieties $HC_F(\bm v, d).$ Our aims here are as follows.

\begin{enumerate}
	\item  Describe geometrically the ambient variety $Y(Q_1,Q_2)$ in $\mathbb{C}P^{4}$.  (Section \ref{ambientY})
	\item Describe $HC_F(\bm v,d)$ geometrically.  (Sections \ref{Xvariety}, \ref{secIrred}, \ref{specialXvariety})
	\item  If $\bm v_1,\bm v_2,d$ are all real, describe geometrically the real points of the varieties $Y(Q_1,Q_2)$ and $HC_F(\bm v,d)$; namely, describe $Y(Q_1,Q_2)(\mathbb{R}), HC_F(\bm v, d)(\mathbb{R})$.  
	 (Sections \ref{realY} and \ref{realX})  
	 \item Understand  how points on $HC_F(\bm v,d)$ and points $\bm y$ in $\mathbb{R}^2$ satisfying the FDOA equation \eqref{FDOAdef} are related, and begin to analyze how the geometry of $HC_F(\bm v,d)$ helps us in the study of the starting problem.	(Section \ref{realX})
\end{enumerate}

\end{enumerate}

\section{Different  coordinate systems}\label{coords}
 In this section we write down  new coordinate systems we'll use on $\mathbb{C}P^4$.  These changes of coordinates are real changes of coordinates (the matrix of the coordinate change has real entries), and each have their computational uses.
  
\subsection{The ``original" coordinates}

First, we choose our initial coordinate frame  $[t,\bm y, \bm r]$ so that the sensor positions are 
$$\bm y = \bm s_2 = (-a,0), \qquad \bm y = \bm s_1 = (a,0),$$ 
where $a$ is a fixed strictly positive real number. 
The polynomials $Q_1,Q_2$  are then
\begin{equation}	\label{Qjadef}
Q_1   \doteq (at-y_1)^2 + y_2^2  -r_1^2, \qquad 
	Q_2  \doteq (at+y_1)^2 + y_2^2 -r_2^2.
\end{equation}

Next, we  set $u = at$ and leave the other coordinates the same, to obtain the ``original" coordinates $[u, \bm y, \bm r]$.
We now have
\begin{equation}	\label{Qudef}
Q_1   \doteq (u-y_1)^2 + y_2^2  -r_1^2,  \qquad 
	Q_2  \doteq (u+y_1)^2 + y_2^2 -r_2^2.
\end{equation}

 Let's see which points on $HC_F$ ``correspond to" the sensor positions $(\pm a,0)$ using our piece $U_0 \cap HC_F$ of $HC_F$.  We compute that the equation $\tilde{Q}=0$ is $(v_{21}(-u-y_1) -v_{22}y_2) r_1 - (v_{11}(u-y_1)-v_{12}y_2)r_2 -dr_1r_2 = 0$, in the $[u,\bm y, \bm r]$ coordinates.   Since we are in $U_0$, $(a,0)$ in the $\bf y$ coordinates corresponds to two points $[1,1,0,0, \pm 2]$,  while $(-a,0)$ corresponds to $[1,-1,0,\pm 2,0]$ in the $[u,\bm y, \bm r]$ coordinates on $HC_F$ (in doing this computation, remember that we are in projective space).  Thus, in this instance,   by ``corresponds to"  we mean, first, that we are looking in the open subset $U_0 \cap HC_F$ of $HC_F$, and second, we are looking at which points on $HC_F$ map to $(\pm a,0) \leftrightarrow [1, \pm a, 0]$ via the projection $[u,\bm y, \bm r] \mapsto [u,  \bm y]$ of $\mathbb{C}P^4 $ to $\mathbb{C}P^2$ .  The projection maps two points on $HC_F$ to each sensor position.    

\subsection{The $[\bm w,x_1]$-coordinates}
Here, 
$$w_0 = u-y_1-r_1, \quad  w_1 = -u-y_1-r_2,  \quad w_2 = -u-y_1+r_2,  \quad w_3 = u-y_1 + r_1,  \quad x_1 = -y_2.$$  
In these coordinates,  points in $\mathbb{C}P^{4}$ are denoted as
$$[w_0,w_1,w_2,w_3,x_1]$$
and the polynomials $Q_1,Q_2$ of \eqref{Qudef} become
$$Q_1 = w_0w_3 + x_1^2,$$
$$Q_2 = w_1w_2 +x_1^2.$$

In these coordinates,  we see that $Y(Q_1,Q_2) = Y(Q,Q_1)$, 
 where $Q = w_0w_3-w_1w_2.$
We'll use this coordinate system to talk about the geometry of $Y(Q,Q_1) = Y(Q_1,Q_2)$, for example. 

The inverse of the transformation  $(u, y_1, y_2, r_1, r_2) \mapsto (w_0, w_1, w_2, w_3, x_1)$ is
\begin{align}	\label{w2y}
u &= \frac{1}{4} (w_0 - w_1 - w_2 + w_3), \quad y_1 = -  \frac{1}{4} ( w_0 + w_1 + w_2 + w_3) , \quad  y_2 = -x_1 \cr
 r_1 &= \frac{1}{2} (-w_0 + w_3), \quad r_2 = \frac{1}{2} (-w_1 + w_2) . 
\end{align}
Having the inverse will enable us to interpret, in terms of the original coordinates, conclusions obtained from analysis in the $[\bm w, x_1]$ coordinates.

\subsection{The $[\bm z,x]$-coordinates}
This last coordinate system is given by:

$$z_0 = 2(w_3-w_0),  \quad z_1 = 2(w_2-w_1),  \quad z_2 = 2(w_0+w_3),  \quad z_3 = 2(w_2 + w_1),  \quad x = 4x_1.$$ 
In these coordinates, $Q = w_0w_3-w_1w_2=0$ if and only if
$$z_2^2-z_0^2 -(z_3^2-z_1^2) = 0,$$ while $Q_1 = x_1^2 + w_0w_3=0$ is equivalent to
$$x^2 + (z_2^2-z_0^2) = 0.$$

We'll  usually use this coordinate system on $Y(Q,Q_1)$ in discussing the Ho-Chen FDOA subvarieties of $Y(Q,Q_1)$.  The reason for using this coordinate system is that it  simplifies certain computations  for the FDOA subvarieties.

The inverse of the  composite transformation $(u, y_1, y_2, r_1, r_2) \mapsto  (z_0, z_1,z_2, z_3, x)$ is
\begin{align}	\label{z2y}
u = \frac{z_2 - z_3}{8} , \quad y_1 = \frac{-z_2 - z_3}{8} , \quad 
	y_2 = - \frac{x}{4} , \quad 
r_1  = \frac{z_0}{4} , \quad r_2 = \frac{z_1}{4} .
\end{align} 
This mapping enables us to interpret results obtained from analysis in the $[\bm z, x]$ coordinates.

If we are specifically  using  either the $[\bm w, x_1]$-coordinates, or the $[\bm z, x]$-coordinates, we'll refer to the ambient variety as $Y(Q,Q_1)$; otherwise, we'll refer to $Y(Q_1,Q_2)$.

\begin{remark}  We note that the coordinate changes described here are {\bf real, linear} coordinate changes.  Here, a {\bf real, linear} change of coordinates is a change of coordinates defined by a set of homogeneous linear equations with real coefficients.  \end{remark}

\section{The geometry of the ambient variety $Y(Q_1,Q_2)$} \label{ambientY}
 
 Using the $[\bm w, x_1]$-coordinates on $\mathbb{C}P^4$,
$$Y(Q_1,Q_2) = Y(Q,Q_1)$$ is the set of points $[\bm w, x_1] \in \mathbb{C}P^4$ such that 
$$w_0w_3-w_1w_2=0, \qquad  x_1^2 + w_0w_3 = 0.$$

Now, $w_0w_3-w_1w_2, x_1^2 + w_0w_3$ are homogeneous, irreducible polynomials of degree two  that are not constant multiples of each other; this enables one to show that $w_0w_3-w_1w_2,  \ x_1^2 + w_0w_3$ is a {\bf regular sequence} of polynomials.  Commutative algebra (dimension theory) then tells us that  $Y(Q,Q_1)$ is a surface in $\mathbb{C}P^4$ (a del Pezzo surface, as we shall see); in other words, $Y(Q,Q_1)$ has complex dimension equal to two.  

\subsection{Singularities of $Y(Q, Q_1)$} 
 For the purposes of this paper, the singularities of an algebraic variety are those points on the variety at which the matrix of partial derivatives of an appropriate set of  defining equations for the variety drops in rank.  These correspond to points at which the surface does not have a well-defined tangent plane. 
 It's straightforward to see that $Y(Q,Q_1)$ has  exactly four singular points; 
 in the $[\bm w, x_1]$-coordinates these points are:
\begin{align}	\label{Ysings}
 \bm p_1= [ \tilde{ \bm p}_1, 0] = [1,0,0,0,0], \quad \bm p_2 = [ \tilde{\bm p}_2,0] = [0,0,0,1,0], \cr
 \bm p_3 = [\tilde{\bm p}_3,0] = [0,1,0,0,0], \quad \bm p_4 = [\tilde{\bm p}_4,0] = [0,0,1,0,0].
 \end{align}

 From \eqref{w2y},  these points translate back into the $[t, \bm y, \bm r]$ coordinates as 
 \begin{align}	\label{sensorSings}
 \bm p_1 \leftrightarrow  \left[\frac{1}{4a} , -\frac{1}{4}, 0, -\frac{1}{2}, 0 \right] = [1, -a, 0, -2a, 0] \qquad 
 	&\bm p_2 \leftrightarrow \left[\frac{1}{4a} , -\frac{1}{4}, 0, \frac{1}{2}, 0 \right] =  [1, -a, 0, -2a, 0]\cr
 \bm p_3 \leftrightarrow  \left[-\frac{1}{4a} , -\frac{1}{4}, 0, 0, -\frac{1}{2} \right] = [1, a, 0, 0, 2a]   \qquad 
 	&\bm p_4  \leftrightarrow  \left[-\frac{1}{4a} , -\frac{1}{4}, 0, 0, \frac{1}{2}\right] = [1, a, 0, 0, -2a]
 \end{align}
We see that these singularities correspond precisely to the sensor locations, as noted at the end of Section 3.1; allowing both positive and negative value for $r_j$ means that two points in $Y(Q, Q_1)$ correspond to each sensor location.

{\bf From now on, whatever the coordinates we choose, we will call the four singular points of $Y(Q,Q_1)$ by these names:  $\bm p_1,\bm p_2,\bm p_3,\bm p_4$.}

\subsection{The classification of $Y(Q,Q_1)$}\label{classY}

Here, we classify the surface $Y(Q,Q_1)$ in $\mathbb{C}P^4$.  This classification is not used in the rest of the paper, though; it is included here for context.  We will use some vocabulary and theorems from the classification of algebraic surfaces here without much explanation.

The smooth quadric surface $Y(Q)$ in $\mathbb{C}P^3$ is discussed in Appendix \ref{quadric}, and we use the notation given there. 

First, we note that if a smooth surface $S$ in $\mathbb{C}P^4$ (a surface $S$  is ``smooth"  if it has no singularities) is the complete intersection of two quadrics, then surface theory (see, e.g. \cite{AB}) tells us that $S$ is a ``del Pezzo" surface.   

Our surface $Y(Q,Q_1)$ is a complete intersection of two quadrics in $\mathbb{C}P^4$, but it has four singularities. Let's see briefly how we can still classify $Y(Q,Q_1)$ as a del Pezzo surface.

\begin{lemma}
The projection of  real  varieties $\Pi:Y(Q,Q_1) \rightarrow Y(Q)$ defined (everywhere) by
\begin{equation}	\label{PiDef}
\Pi: [\bm w,x_1] \mapsto [\bm w]
\end{equation}
 gives $Y(Q,Q_1)$ as a double cover of  the smooth variety $Y(Q)$ branched over the reduced and reducible curve $L$ 
 defined by $w_0w_3=0$.
 \end{lemma}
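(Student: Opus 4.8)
The plan is to verify three things in turn: that $\Pi$ is a genuine morphism from all of $Y(Q,Q_1)$ onto $Y(Q)$; that its fibers consist of two distinct points generically but collapse to a single point exactly over the locus $w_0w_3=0$; and that this locus, the curve $L$, is reduced and reducible. First I would check that $\Pi$ is defined everywhere and lands in $Y(Q)$. The projection $[\bm w,x_1]\mapsto[\bm w]$ from $\mathbb{C}P^4$ to $\mathbb{C}P^3$ is a morphism away from its center $[0,0,0,0,1]$; since $Q_1([0,0,0,0,1])=x_1^2+w_0w_3=1\neq 0$, this center does not lie on $Y(Q,Q_1)$, so $\Pi$ is defined on all of $Y(Q,Q_1)$. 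Moreover every point of $Y(Q,Q_1)$ satisfies $Q(\bm w)=w_0w_3-w_1w_2=0$, so its image satisfies the defining equation of $Y(Q)$; hence $\Pi$ indeed maps into $Y(Q)$.

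Next I would compute the fibers, which is the heart of the argument. Fix $[\bm a]=[a_0,a_1,a_2,a_3]\in Y(Q)$ and choose the representative $\bm a$. Because $Q_1$ is homogeneous of degree two, a point $[\bm w,x_1]\in Y(Q,Q_1)$ projects to $[\bm a]$ precisely when, after rescaling its $\bm w$-part to $\bm a$, one has $x_1^2=-a_0a_3$. When $a_0a_3\neq 0$ this yields the two values $x_1=\pm\sqrt{-a_0a_3}$, and I would confirm these give \emph{distinct} points of $\mathbb{C}P^4$ by the scaling argument: any scalar $\lambda$ identifying them must fix the nonzero vector $\bm a$, forcing $\lambda=1$ and hence $\sqrt{-a_0a_3}=-\sqrt{-a_0a_3}$, a contradiction. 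When $a_0a_3=0$ the only solution is $x_1=0$, a single ramified point. This exhibits $Y(Q,Q_1)$ as the standard double cover $x_1^2=-w_0w_3$ of $Y(Q)$, which is two-to-one away from, and one-to-one over, the curve $L=Y(Q)\cap\{w_0w_3=0\}$; so $L$ is exactly the branch locus.

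Finally I would identify $L$ and verify it is reduced and reducible. Splitting $w_0w_3=0$ into $w_0=0$ and $w_3=0$ and intersecting each with $Q=0$, while using the Segre parametrization of the smooth quadric $Y(Q)\cong\mathbb{P}^1\times\mathbb{P}^1$ from Appendix \ref{quadric} (so $w_0=s_0t_0$ and $w_3=s_1t_1$), I would show that $L$ is the union of the four lines $\{s_0=0\},\{t_0=0\},\{s_1=0\},\{t_1=0\}$, two from each ruling, each occurring with multiplicity one. Thus $L$ is reduced (every component has multiplicity one) and reducible (there are several components), as claimed.

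The step I expect to be the main obstacle is the fiber computation: the bookkeeping in projective coordinates must be handled carefully, both to show the generic fiber is genuinely two distinct points of $\mathbb{C}P^4$ (via the $\lambda$-scaling argument) and to confirm that the branch locus is precisely $L$ rather than some thickening of it. Once the double-cover description $x_1^2=-w_0w_3$ is pinned down, the reduced-and-reducible claim for $L$ follows routinely from the Segre picture.
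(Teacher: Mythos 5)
Your proposal is correct and takes essentially the same approach as the paper: a direct fiber computation showing $\Pi$ is two-to-one (via $x_1=\pm\sqrt{-w_0w_3}$) away from $L$ and one-to-one over $L$, together with identifying $L$ as the union of four distinct lines, hence reduced and reducible. Your extra checks --- that the projection center $[0,0,0,0,1]$ does not lie on $Y(Q,Q_1)$, and the $\lambda$-scaling argument that the two fiber points are genuinely distinct in $\mathbb{C}P^4$ --- are details the paper leaves implicit (and the paper lists the four lines directly in the $w$-coordinates rather than via the Segre parametrization), but the substance of the argument is the same.
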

 \begin{proof}
The reduced curve $L$  can be written as the union of the four lines 
$L = \ell_1 \cup \ell_2 \cup \ell_3 \cup \ell_4,$
where
$\ell_1 = \{ [w_0,w_1,0,0] \}, \ \ell_2 = \{ [w_0,0,w_2,0] \} , \ \ell_3 = \{[0,0,w_2,w_3] \} , \ \ell_4 =  \{ [0,w_1,0,w_3] \}. $ 
The singularities $\tilde{\bm p}_j$ of $L$, $1 \leq j \leq 4$, are ordinary double points:
 in a neighborhood of  each point $\tilde{\bm p}_j$, $L$ looks like exactly two distinct lines intersecting transversely in the single point $\tilde{\bm p}_j$.

If $[\bm w] \in Y(Q)-L$, then there are exactly two points in the 
 set $\Pi^{-1}([\bm w])$, which is called the  {\em fibre} of $\Pi$ over $[\bm w]$, 
namely  $[\bm w,\pm R]$, where $R$ is any square root of $-w_0w_3$.  

If $[\bm w] \in L$ then there is exactly one point in the fibre over $[\bm w]$, namely $[\bm w,0].$
\end{proof}

The theory of double covers  \cite{BHPV} now tells us a couple of facts:
\begin{itemize}
\item For each $j$, $\bm p_j$ must also be an ordinary double point of  the surface $Y(Q,Q_1)$ (see,e.g., \cite{BHPV}, for the definition of an ordinary double point of a surface).
\item Since $Y(Q)$ is irreducible and the curve $L$ is reduced, the variety $Y(Q,Q_1)$ must also be irreducible.
\end{itemize}

The surface $Y(Q,Q_1)$ has at least four lines lying on it, the lines lying over $L$ via $\Pi$:

$$ \hat{\ell}_1 \doteq \{[w_0,w_1,0,0,0] \mid [w_0,w_1] \in \mathbb{C}P^1 \}, $$
$$\hat{\ell}_2  \doteq \{[w_0,0,w_2,0,0] \mid [w_0,w_2] \in \mathbb{C}P^1\}, $$
$$\hat{\ell}_3 \doteq \{[0,0,w_2,w_3,0] \mid [w_2,w_3] \in \mathbb{C}P^1 \}, $$
\begin{equation} \label{lines}  \hat{\ell}_4 \doteq \{[0,w_1,0,w_3,0] \mid [w_1,w_3] \in \mathbb{C}P^1\}. 
\end{equation}

Note that 
$$\hat{L}  \doteq \cup_{i=1}^4 \hat{\ell}_i = \{[w_0,w_1,w_2,w_3,x_1] \in Y(Q,Q_1) \mid w_0w_3=0\}.$$ 

Using the projection $\Pi$, we can classify the surface $Y(Q,Q_1)$  geometrically.
\begin{theorem}	\label{delPezzo}
$Y(Q,Q_1)$ is a  del Pezzo surface; in other words, $Y(Q,Q_1)$  is a normal surface and  the desingularization of $Y(Q,Q_1)$ is a smooth del Pezzo surface.  
\end{theorem}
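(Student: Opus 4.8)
The plan is to prove the two assertions separately: first that $Y(Q,Q_1)$ is normal, and then that its minimal resolution carries a nef and big (in fact pulled-back ample) anticanonical class, which is the hallmark of a del Pezzo surface. Throughout I would use the facts already established above: that $Y(Q,Q_1)$ is a complete intersection of two quadrics in $\mathbb{C}P^4$ and that its only singularities are the four ordinary double points $\bm p_1,\ldots,\bm p_4$ of the preceding lemma. For normality I would invoke Serre's criterion $(R_1)+(S_2)$. Being a complete intersection, $Y(Q,Q_1)$ is Cohen--Macaulay, so Serre's condition $(S_2)$ (indeed $(S_k)$ for all $k$) holds automatically; and since the singular locus is the finite set $\{\bm p_1,\bm p_2,\bm p_3,\bm p_4\}$, which has codimension two in the surface, $Y(Q,Q_1)$ is regular in codimension one, giving $(R_1)$. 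Hence $Y(Q,Q_1)$ is normal.

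Next I would compute the anticanonical class. A complete intersection is Gorenstein, and adjunction for a complete intersection of hypersurfaces of degrees $d_1,d_2$ in $\mathbb{C}P^n$ gives dualizing sheaf $\omega_Y \cong \mathcal{O}_Y(d_1+d_2-n-1)$. With $n=4$ and $d_1=d_2=2$ this yields $\omega_Y \cong \mathcal{O}_Y(-1)$, so that $-K_{Y} = H|_{Y}$ is the restriction of the hyperplane class and is therefore ample, with self-intersection $(-K_Y)^2 = \deg Y(Q,Q_1) = 2\cdot 2 = 4$. Thus $Y(Q,Q_1)$ is already an anticanonically polarized Gorenstein del Pezzo surface of degree four in the singular sense.

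It remains to pass to the desingularization. By the lemma and the theory of double covers cited above, each $\bm p_j$ is an ordinary double point, that is, a rational double point of type $A_1$; each is resolved by a single blow-up producing one $(-2)$-curve, and the minimal resolution $\pi:\tilde Y \to Y(Q,Q_1)$ introduces exactly these four exceptional curves. Because rational double points are canonical (equivalently, Gorenstein with discrepancy zero), the resolution is crepant, $K_{\tilde Y} = \pi^* K_{Y(Q,Q_1)}$, so $-K_{\tilde Y} = \pi^*(H|_{Y})$ is nef and big with $(-K_{\tilde Y})^2 = 4$. This exhibits $\tilde Y$ as a (weak) del Pezzo surface of degree four, which is the assertion of the theorem.

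I expect the main obstacle to be the crepancy step: transferring ampleness of $-K_Y$ to the resolution requires knowing that resolving an ordinary double point introduces no canonical discrepancy, so that $-K_{\tilde Y}$ is genuinely the pullback of an ample class (hence nef and big) rather than merely its strict transform. This is precisely where the $A_1$/rational-double-point classification of the $\bm p_j$ supplied by the double-cover lemma is indispensable. I would also flag a point of terminology: on $\tilde Y$ the anticanonical class is nef and big but not ample, since it is trivial on the four $(-2)$-curves, so ``smooth del Pezzo surface'' in the statement should be understood in the weak sense.
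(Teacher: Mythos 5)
Your proof is correct, but it follows a genuinely different route from the paper's. The paper deduces the theorem from the double-cover structure established in the preceding lemma: it views $Y(Q,Q_1)$ as the double cover of the smooth quadric $Y(Q)\cong \mathbb{C}P^1\times\mathbb{C}P^1$ branched over the $(2,2)$-curve $L$, with defining line bundle $\mathcal{L}=\mathcal{O}_{Y(Q)}(1,1)$, and then invokes the canonical-resolution formula for double covers (BHPV, pp.~236--237) to get that $K_{\bar Y}$ is the pullback of $K_{Y(Q)}+\mathcal{L}=\mathcal{O}_{Y(Q)}(-1,-1)$; the anticanonical of the resolution is thus the pullback of an ample class, and the conclusion follows. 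You instead work intrinsically with the complete-intersection structure: Serre's criterion $(R_1)+(S_2)$ for normality, Gorenstein adjunction $\omega_Y\cong\mathcal{O}_Y(2+2-5)=\mathcal{O}_Y(-1)$, and crepancy of the minimal resolution of the $A_1$ points to transport $-K_Y=H|_Y$ to a nef and big class on $\tilde Y$ with $(-K_{\tilde Y})^2=4$. Your route buys two things the paper's proof leaves implicit: an actual argument for the normality assertion in the statement (the paper never addresses it, though it follows from either approach), and the explicit degree $4$ of the anticanonical polarization; it also needs the double-cover lemma only for the identification of the singularities as ordinary double points. The paper's route, by contrast, stays inside the $\mathbb{P}^1\times\mathbb{P}^1$ geometry that is reused throughout the article and avoids any appeal to discrepancies or canonical singularities, at the cost of citing the heavier double-cover machinery. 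Note that both arguments land in the same place regarding terminology: the anticanonical class on the resolution is the pullback of an ample class, hence nef and big but trivial on the four $(-2)$-curves, so ``del Pezzo'' must be read in the weak sense in either proof --- a caveat you state explicitly and the paper glosses over.
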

\begin{proof} See \cite{BHPV} for elaboration of the undefined terms here.

Now, the line bundle defining the (abstract) double cover of $Y(Q)$, branched over $L$,  is $\mathcal{L} = (1,1)$.  
If $\bar{Y}$ is the surface obtained by canonical resolution of singularities $\sigma:\bar{Y} \rightarrow Y(Q,Q_1)$ (\cite{BHPV}) of $Y(Q,Q_1)$, since $L$ is a $(2,2)$ curve in $Y(Q)$ (see  Appendix \ref{quadric}),  then  surface theory (e.g., see \cite{BHPV}, pp. 236-237)  says that  the canonical divisor $K_{\bar{Y}}$ is the pullback via $\bar{Y} \stackrel{\sigma}{\rightarrow }Y(Q,Q_1) \rightarrow Y(Q)$ of $K_{Y(Q)} + \mathcal{L} = \mathcal{O}_{Y(Q)}(-1,-1)$.  

The computation of $K_{\bar{Y}}$ tells us that, by definition, $S = \bar{Y}$ is a {\em del Pezzo} surface \cite{AB}. \end{proof}

 Surface theory (see, e.g., \cite{AB}) tells us that a smooth del Pezzo surface is an intersection of two quadrics in $\mathbb{C}P^4$ if and only if the surface may be obtained by ``blowing up" five points in $\mathbb{C}P^2$.  Therefore,  every smooth del Pezzo surface  is a {\bf rational } surface.  Since our surface $Y(Q,Q_1)$ is ``birational"  to a smooth del Pezzo surface $\bar{Y}$, it is  also a rational surface.  In the next section, we write down an explicit rational map to $\mathbb{C}P^2;$ this map and its rational inverse will be used later in this paper, for other calculations.

\subsection{Rational maps $\alpha:  Y(Q,Q_1) \dashrightarrow \mathbb{C}P^2$
 and $\beta:   \mathbb{C}P^2 \dashrightarrow Y(Q,Q_1)$ }  \label{alphabeta}

A {\bf rational} map $\gamma:V \dashrightarrow W$ between subvarieties $V,W$ of projective spaces is a function given by homogeneous polynomials of the same degree in the domain variables, which is well-defined (or just ``defined") on an open subset of the form $V-F$ of $V$, where $F$ is a proper subvariety of $V$; the subvariety $F$ could be the empty set.  The use of the dashed arrow is to warn the reader that the map $\gamma$ may not be defined everywhere on $V$.

Surface theory \cite{AB} tells us that $Y(Q,Q_1)$ can be ``described by" cubics in $[u_0,u_1,u_2]$.   This motivates the definitions of the following maps $\alpha$ and $\beta$.

{\bf Define  rational maps $\alpha:Y(Q,Q_1) \dashrightarrow \mathbb{C}P^2$ and $ \beta:\mathbb{C}P^2 \dashrightarrow Y(Q,Q_1)$ by}
\begin{align}	\label{alphadef}
\alpha([w_0,w_1,w_2,w_3,x_1]) &=  [w_3x_1, w_2x_1, w_2w_3], \\ 
\beta([u_0,u_1,u_2]) & = [-u_0u_1^2, -u_0^2u_1,  u_1u_2^2,u_0u_2^2,u_0u_1u_2].	\label{betadef}
\end{align}
Here and throughout this section we use  coordinates $[u_0,u_1,u_2]$ on $\mathbb{C}P^2$, and the coordinates $[\bm w, x_1]$ on $Y(Q,Q_1)$.  

  Note that $\alpha$ is not defined on $\hat{\ell}_1 \cup \hat{\ell}_2 \cup \hat{\ell}_4$ and $\beta$ is not defined at the points
$$[1,0,0],\quad [0,1,0], \quad [0,0,1]  \ \in  \ \mathbb{C}P^2.$$
 In particular, the map $\alpha$ is not defined at points corresponding to the sensor positions. 
Note that the maps $\alpha, \beta$ are {\bf real}, nonlinear maps of real projective varieties:  the polynomials defining each coordinate of the map are homogenous polynomials with real coefficients. 

Now, if $u_0u_1u_2 \neq 0$ (i.e. $[u_0,u_1,u_2]$ does not lie on one of the coordinate lines on $\mathbb{C}P^2$), then
$$\alpha(\beta([u_0,u_1,u_2]) = [u_0^2u_1u_2^3, u_0u_1^2 u_2^3, u_0u_1u_2^4] = [(u_0u_1u_2^3)u_0, (u_0u_1u_2^3)u_1, (u_0u_1u_2^3)u_2] = [u_0,u_1,u_2].$$

Also, suppose that $[w_0,w_1,w_2,w_3,x_1] \in Y(Q,Q_1) - (\cup_{i=1}^4 \hat{\ell}_i)$.  Then, as we've noted, $x_1 \neq 0$.  Since $x_1^2 = w_0w_3$, neither $w_0$ nor $w_3$ can be zero, and since $w_1w_2 = w_0w_3,$ neither $w_2$ nor $w_1$ can be zero.

Therefore, 
$$\beta(\alpha([w_0,w_1,w_2,w_3,x_1])) = [-w_2^2w_3x_1^3,-w_2w_3^2 x_1^3, w_2^3w_3^2x_1, w_2^3w_3^3x_1, w_2^2w_3^2x_1^2];$$ since $x_1^2 = -w_0w_3$ this last point is equal to
\begin{align*}
[w_0w_2^2w_3^2x_1, &w_0w_2w_3^3x_1, w_2^3w_3^2x_1, w_2^2w_3^3x_1, w_2^2w_3^2x_1^2]  \cr
&= [(x_1w_2w_3^2)w_0w_2, (x_1w_2w_3^2)w_0w_3, (x_1w_2w_3^2)w_2^2, (x_1w_2w_3^2)w_2w_3, (x_2w_2w_3^2)w_2x_1]  \cr
& = [w_0w_2, w_1w_2, w_2^2, w_2w_3, x_1w_2] = [w_0,w_1,w_2,w_3,x_1],
\end{align*} where the next-to-last equality holds because $w_1w_2=w_0w_3$.

Let $H_0,H_1, H_2$ be the three coordinate lines in $\mathbb{C}P^2$:  e.g., 
\begin{equation}	\label{coordLines}
H_j  = \{[u_0,u_1,u_2] \in \mathbb{C}P^2 \mid u_j=0\}.
\end{equation}

The above calculations show that
$$\alpha:Y(Q,Q_1) - (\cup_{i=1}^4 \hat{\ell}_i) \rightarrow \mathbb{C}P^2-(H_0 \cup H_1 \cup H_2)$$ is a biholomorphism,  with inverse
$$\beta:\mathbb{C}P^2 - (H_0 \cup H_1 \cup H_2) \rightarrow Y(Q,Q_1) - (\cup_{i=1}^4 \hat{\ell}_i).$$


\subsubsection{Connection between the original coordinates and coordinates $\bm u$ on $\mathbb{C}P^2$}
From the definition of $\beta$ we have, for $[u_0, u_1, u_2] \in \mathbb{C}P^2$ such that $\beta([u_0,u_1,u_2])$ is defined, then
\begin{align}
w_0 = -u_0u_1^2, \qquad w_1 = -u_0^2u_1, \qquad w_2 = u_1u_2^2, \qquad w_3 = u_0u_2^2, \qquad x_1 = u_0u_1u_2
\end{align}
and hence from \eqref{w2y}, we have 
\begin{align}	\label{u2y}
u &= \frac{1}{4} (-u_0u_1^2 +  u_0^2u_1 - u_1u_2^2 + u_0u_2^2), 
\quad y_1 = -  \frac{1}{4} (-u_0u_1^2 - u_0^2u_1 + u_1u_2^2 + u_0u_2^2) , \cr
 \quad  y_2 &= -u_0u_1u_2, \qquad  
 r_1 = \frac{1}{2} (-u_0u_1^2+ u_0u_2^2), \quad r_2 = \frac{1}{2} (u_0^2u_1 + u_1u_2^2) . 
\end{align}
Note that while $\beta$ is not defined at the points $[1,0,0],[0,1,0], [0,0,1]$, then using the coordinates $[\bm w, x_1]$ on $\mathbb{C}P^4$, if
$$[0,u_1,u_2] \in H_0,\; \mbox{forcing}\;u_1u_2 \neq 0, \beta([0,u_1,u_2]) = [0,0,1,0,0] = \bm p_4,$$
$$[u_0, 0,u_2] \in H_1, \;\mbox{forcing}\; u_0u_2 \neq 0, \beta([u_0,0,u_2]) = [0,0,0,1,0] = \bm p_2$$ and if
$$[u_0,u_1,0] \in H_2, \; \mbox{forcing}\; u_0u_1 \neq 0, \beta([u_0,u_1,0]) = [u_1,u_0,0,0,0].$$ In other words,
\begin{itemize}
\item for every point $\bm q$ on $H_0$ where $\beta$ is defined, $\beta(\bm q) = \bm p_4,$
\item for every point $\bm q $ on $H_1$ where $\beta$ is defined, $\beta( \bm q ) = \bm p_2$ and
\item $\beta$ maps $H_2-\{[1,0,0],[0,1,0]\}$ linearly and isomorphically onto the line $\hat{\ell}_1$, minus the two points $\bm p_1, \bm p_3$.
\end{itemize}

If we use the coordinate system $[t,y_1,y_2,r_1,r_2]$ on $\mathbb{C}P^4$,  and consider the projection to $\mathbb{C}P^2$ given by $[t,y_1,y_2,r_1,r_2] \mapsto [t, y_1,y_2]$ (which is also not defined everywhere), then the composite with $\beta$, where defined,  maps $H_0-\{ [0,1,0], [0, 0,1]\}$ to the single point $[1,a, 0]$, while mapping $H_1 - \{ [1,0,0], [0,0,1]\}$ to the single point $[1,-a,0]$. The points on  $H_2-\{[1,0,0], [0,1,0]\}$  map isomorphically to the line in $\mathbb{C}P^2$ defined by $y_2 = 0$, minus the two points $[1,a,0] , [1,-a,0]$.

The maps $\alpha$ and $ \beta$ will be used in Sections 6 and 7.

\subsection{The real points of $Y(Q,Q_1)$.} \label{realY}
In the last Section, we saw that $Y = Y(Q,Q_1)$ is an irreducible rational surface in $\mathbb{C}P^4$.  
This tells us that 
 the real part $Y(\mathbb{R})$ of $Y$  
 is topologically connected \cite{K}.  We can see why this is true explicitly, though. First,  the map $\beta$   restricts to a map
$$ \beta(\mathbb{R}): \mathbb{R}P^2 - \{[1,0,0],[0,1,0],[0,0,1]\} \rightarrow Y(Q,Q_1)(\mathbb{R}).$$   Now, $O \doteq \mathbb{R}P^2-\{[1,0,0],[0,1,0],[0,0,1]\}$ is topologically connected.  Therefore, $\beta(\mathbb{R})(O)$ is a connected subset of $Y(Q,Q_1)(\mathbb{R})$.  However, using our computations in Section \ref{alphabeta}, we see that $\beta(\mathbb{R})(O)$ is a dense subset of $Y(Q,Q_1)(\mathbb{R})$, so that $Y(Q,Q_1)(\mathbb{R})$ is topologically connected.

\section{The Ho-Chen FDOA varieties $HC_F(\bm v,d)$} \label{Xvariety}

 In Section \ref{HCFdef},  given the vectors $\bm v_1, \bm v_2 \in \mathbb{R}^2$ and a real number $d $, setting  $\bm v = (\bm v_1, \bm v_2) = ((v_{11}, v_{12}), (v_{21}, v_{22}))$,  we defined a homogeneous polynomial $\tilde{Q}(\bm v,d)$ of degree two 
 and, in \eqref{FDOAvariety}, the FDOA 
 subvariety $HC_F(\bm v,d)$ of $\mathbb{C}P^4$.  Using the ``original"  coordinates (Section \ref{coords}) on $\mathbb{C}P^4$, the polynomials $L_1,L_2$ become
$$L_1 =  v_{11}(at-y_1)- v_{12}y_2 , \qquad L_2 = v_{21}(-at-y_1) - v_{22} y_2;$$ while $\tilde{Q}$ is now
$$\tilde{Q}  = L_2r_1-L_1r_2-d r_1r_2.$$  

In other words,  the subvariety $HC_F(\bm v,d)$ of $Y(Q_1,Q_2)$ is defined as
\begin{equation}
HC_F(\bm v,d) \doteq \{ [\bxi ]= [t,y_1,y_2,r_1,r_2] \in Y(Q_1,Q_2) \mid \tilde{Q}(\bxi) = 0 \}.
\end{equation}

\begin{remark}  In order to define $HC_F(\bm v,d)$ as a subvariety of $\mathbb{C}P^4$, it is not necessary to assume that 
$$v_{11},v_{12},v_{21},v_{22},d$$ are real numbers.  For applications  such as the source localization problem, we do require these parameters to be real. In the following, however,  we don't assume these parameters are real numbers, unless we are specifically interested in this case.  If the parameters $\bm v,d$ are real, then the variety $HC_F(\bm v,d)$ is a real variety.

Furthermore, if $\bm v_1 = \bm v_2 = \bm 0, d=0$, then $HC_F(\bm 0, \bm 0, 0) = Y(Q,Q_1)$, and this is clearly not a curve.  Thus, from now on, we do not consider this case.  Another way of saying this is:  we consider only parameter choices $[\bm v_1, \bm v_2, d]$ in $\mathbb{C}P^4$, or, if we want all parameters to be real, $[\bm v_1, \bm v_2, d] \in\mathbb{R}P^4$.  In other words, the parameters themselves come from a  projective parameter space.
\end{remark}

Our variety $HC_F(\bm v,d)$ is an intersection of three quadric hypersurfaces in $\mathbb{C}P^4$.   So, we ``expect" the Ho-Chen FDOA variety to be a curve of degree 8 in $\mathbb{C}P^4$: this means that, if we intersect this curve with a hyperplane in $\mathbb{C}P^4$, we expect to get 8 points.  When we ``count" these intersection points, we must acknowledge:  1) if the hyperplane contains the entire curve, or even a component of the  curve with an infinite number of points, we don't  get a finite number of points and 2) if we limit ourselves to hyperplanes containing no component of the curve, we must ``count" with ``multiplicity".

For some choices of parameters $\bm v, d$, $HC_F(\bm v,d)$ will be a {\bf reducible} variety; in other words, the union of two or more proper subvarieties.  One of the principle aims of Section \ref{secIrred}  is to prove that, for most choices of $\bm v, d$, $HC_F(\bm v,d)$ is an irreducible variety.  What do we mean, here, by ``most'' choices of $\bm v,d$? To be precise, this means that there is a finite set $\mathcal{F}$ of homogeneous polynomials in the parameters $v_{11}, v_{12}, v_{21},v_{22},d$, such that for every selection  $\bm v,d$ NOT in the solution set of this finite set of polynomials,  $HC_F(\bm v,d)$ is irreducible.  Another way of saying this is:  for {\bf generic} choices of $\bm v,d$, or {\bf generically}, $HC_F(\bm v,d)$ is irreducible.  We will not usually specify exactly what this finite set of polynomials $\mathcal{F}$ is; an exception to this is the extended example discussed in Section \ref{specialXvariety}.

 In this section we collect various properties of $HC_F(\bm v,d)$.

\subsection{Changes of coordinates}
Here, we write down the equations defining $HC_F(\bm v,d)$ in our other coordinate systems on $\mathbb{C}P^4$.

In the $[\bm w, x_1]$-coordinates,  $\tilde{Q}$ is
 $$\tilde{Q} = A(\bm w)x_1  + C( \bm w),$$ where
 \begin{align*}
 A(\bm w) &= 2(v_{22}(w_3-w_0)-v_{12}(w_2-w_1)), \cr
 C(\bm w) & = v_{21}(w_1+w_2)(w_3-w_0) - v_{11}(w_0+w_3)(w_2-w_1)-d(w_3-w_0)(w_2-w_1) \cr
  &= (v_{11} -v_{21}-d)w_0w_1 + (-v_{11}-v_{21} +d)w_0w_2 + (v_{11}+ v_{21} +d)w_1w_3 \cr
  	& \qquad + (-v_{11} + v_{21} -d)w_2w_3\cr
 & = a_1 w_0w_1 + a_2 w_0w_2 + a_3 w_1 w_3 + a_4 w_2 w_3,
  \end{align*}
  where
\begin{equation} \label{adef} 
a_1 = v_{11}-v_{21}-d,  \quad a_2 = -v_{11}-v_{21}+d, \quad a_3 = v_{11}+v_{21} + d, \quad a_4 = -v_{11}+v_{21}-d. \end{equation}
   In these coordinates,
 $$HC_F(\bm v,d) = \{ [\bxi ]= [w_0,w_1,w_2,w_3,x_1] \in Y(Q,Q_1) \mid \tilde{Q}(\bxi) = 0 \};$$ in other words,
 $HC_F(\bm v,d)$ is the set of points  in $\mathbb{C}P^4$  such that

$$w_0w_3=w_1w_2,  \qquad x_1^2 + w_0w_3 = 0$$ and
$$A(\bm w)x_1 + C(\bm w)= 0.$$

We will  usually use the $[\bm z,x]$-coordinates on $HC_F$ in order to discuss singular points, since the matrix of partials is easier to manage with these coordinates.  In these coordinates,    $HC_F(\bm v,d)$ is the subvariety of $\mathbb{C}P^4$ defined by (with a small abuse of notation in the use of  the symbols $Q,Q_1$):
 $$Q = z_2^2-z_0^2 - (z_3^2-z_1^2) = 0,$$
 $$Q_1 = x^2 +(z_2^2-z_0^2)=0,$$
 $$\tilde{Q}_1 = (v_{22}z_0-v_{12}z_1)x + (v_{21}z_0z_3-v_{11}z_1z_2 - dz_0z_1)=0;$$
 setting $$C_1 = v_{21}z_0z_3-v_{11}z_1z_2-dz_0z_1, \qquad A_1 = v_{22}z_0-v_{12}z_1,$$ this last equation is
 $$\tilde{Q}_1 = A_1x + C_1=0.$$

 \subsection{General remarks on $HC_F(\bm v,d)$}

We will consider all varieties $HC_F(\bm v,d)$, for varying choices of $[\bm v,d] \in \mathbb{C}P^4$, and refer to this collection of varieties as  a {\bf family} $\mathcal{H}$ of subvarieties of $Y(Q_1,Q_2)$.
    
 Note that if $[\bm v_1,\bm v_2,d] \in \mathbb{C}P^4$, then the polynomial $\tilde{Q} _1= A_1(z_0,z_1) x + C_1(z_0,z_1,z_2,z_3)$ (using the $[\bm z,x]$-coordinates) is not the zero polynomial:  the distinct monomials of the form $xz_j$ or $z_kz_l$ which can possibly occur in the nonzero summands of of the degree-two polynomial $A_1x + C_1$, when written out, form a linearly independent set of monomials.
  
  \begin{lemma} 	\label{irredFDOAs}
  For every $[\bm v_1,\bm v_2,d] \in \mathbb{C}P^4$,  every irreducible component of $HC_F(\bm v_1,\bm v_2,d)$ is an irreducible curve in $Y(Q_1,Q_2)$.
\end{lemma}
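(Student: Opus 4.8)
The plan is to realize $HC_F(\bm v,d)$ as the hypersurface section $Y\cap V(\tilde Q)$ of the irreducible surface $Y=Y(Q_1,Q_2)$ and to invoke the principal ideal theorem. Recall from Section~\ref{classY} and Theorem~\ref{delPezzo} that $Y$ is irreducible of (complex) dimension two. I would argue in two steps. First, show that $\tilde Q$ does not vanish identically on $Y$, so that $HC_F(\bm v,d)$ is a \emph{proper} closed subvariety of $Y$; this alone rules out a two-dimensional component, since the only two-dimensional closed subvariety of the irreducible surface $Y$ is $Y$ itself. Second, since $Y$ is irreducible of dimension two and $\tilde Q$ restricts to a nonzero section of $\mathcal{O}_Y(2)$, the geometric form of Krull's principal ideal theorem (Hauptidealsatz) gives that every irreducible component of $Y\cap V(\tilde Q)$ has codimension exactly one in $Y$, hence dimension $2-1=1$; thus each component is an irreducible curve lying in $Y(Q_1,Q_2)$. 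Nonemptiness is automatic, since in $\mathbb{C}P^4$ a hypersurface meets every subvariety of positive dimension; in any case the conclusion is vacuously true if $HC_F(\bm v,d)$ were empty.

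The only real content, and the main obstacle, is the first step: verifying $\tilde Q\not\equiv 0$ on $Y$ for \emph{every} nonzero parameter point $[\bm v_1,\bm v_2,d]\in\mathbb{C}P^4$, not merely for the excluded case $\bm v_1=\bm v_2=\bm 0,\,d=0$. I see two clean routes. The first uses the birational parametrization $\beta:\mathbb{C}P^2\dashrightarrow Y$ of Section~\ref{alphabeta}, whose image is dense in $Y$: then $\tilde Q$ vanishes on $Y$ if and only if the rational function $\tilde Q\circ\beta$ vanishes identically on $\mathbb{C}P^2$, so it suffices to substitute \eqref{betadef} into $\tilde Q=A(\bm w)x_1+C(\bm w)$ and check that the resulting homogeneous polynomial in $u_0,u_1,u_2$ is not the zero polynomial. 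A short computation shows this polynomial factors as $u_0u_1$ times a form whose monomial coefficients are (up to signs and a factor of two) precisely $a_1,a_2,a_3,a_4,v_{12},v_{22}$ in the notation of \eqref{adef}; these all vanish only when $v_{11}=v_{12}=v_{21}=v_{22}=d=0$. Hence $\tilde Q\circ\beta\not\equiv 0$ exactly when $[\bm v,d]\neq\bm 0$, which is the standing hypothesis in projective parameter space.

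The second route avoids $\beta$ and works entirely in the $[\bm z,x]$-coordinates, exploiting the observation recorded just before the Lemma that the monomials $z_0x,\,z_1x,\,z_0z_3,\,z_1z_2,\,z_0z_1$ occurring in $\tilde Q_1=A_1x+C_1$ are linearly independent. Here one uses that the degree-two part of the homogeneous ideal $I(Y)$ is spanned by $Q$ and $Q_1$ (which holds because $Y$ is the complete intersection $V(Q,Q_1)$, whose ideal is saturated and, $Y$ being reduced and irreducible, radical hence prime). Since the monomial supports of $Q$ and $Q_1$, namely $z_0^2,z_1^2,z_2^2,z_3^2,x^2$, are disjoint from those of $\tilde Q_1$, the membership $\tilde Q_1\in\mathrm{span}(Q,Q_1)$ would force $\tilde Q_1=0$ and hence all parameters to vanish. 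Either route completes the first step; I would present the $\beta$-computation in detail, since it is self-contained and reuses machinery already established in Section~\ref{alphabeta}, whereas the ideal-theoretic route requires separately justifying that $I(Y)_2=\mathrm{span}(Q,Q_1)$.
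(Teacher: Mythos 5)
Your proposal is correct, and its dimension-theoretic skeleton is exactly the paper's: $Y$ is an irreducible surface, $HC_F(\bm v,d)=Y\cap V(\tilde Q)$, Krull's principal ideal theorem forces every irreducible component to have dimension at least one, and irreducibility of $Y$ means a two-dimensional component could occur only if $\tilde Q$ vanished identically on $Y$. The genuine difference is how the two arguments rule out that last possibility. The paper does it by exhibiting explicit points of $Y$ off $HC_F$, with a case split on the parameters: if $C_1\neq 0$ as a polynomial, one of the points $[1,1,1,1,0]$, $[1,-1,1,1,0]$, $[-1,-1,1,1,0]$ works, and in the degenerate case $v_{11}=v_{21}=d=0$ a point of the form $[e,1,\sqrt{e^2-1},0,1]$ works; a side benefit the paper emphasizes is that these witnesses are real, so even $Y(\mathbb{R})\not\subseteq HC_F(\bm v,d)$. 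Your Route A instead pulls $\tilde Q$ back along the parametrization $\beta$ of Section \ref{alphabeta} and uses density of its image. This is legitimate (both $\beta$ and the irreducibility of $Y$ are established before the lemma, so there is no circularity), it is uniform in the parameters with no case analysis, and the identity $\tilde Q\circ\beta=u_0u_1\,P(\bm v,d)$ is precisely the computation the paper itself performs later in Lemma \ref{betatoHC}; your argument in effect front-loads that computation and reads off that $P\equiv 0$ forces $a_1=a_2=a_3=a_4=v_{12}=v_{22}=0$, i.e.\ $[\bm v,d]=\bm 0$, which is excluded. What the paper's route buys is elementarity and explicit real witnesses; what yours buys is uniformity and a tighter link to the linear system $\mathcal{V}$ exploited in Section \ref{secIrred}. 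Your Route B is also sound but, as you yourself note, carries the extra burden of proving $I(Y)_2=\mathrm{span}(Q,Q_1)$, which requires the complete-intersection ideal $(Q,Q_1)$ to be radical (Cohen--Macaulay plus generic reducedness); the paper needs no such ideal-theoretic input, so your decision to present Route A in detail is the right one.
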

\begin{proof}  We have seen that the projective variety $Y =Y(Q_1,Q_2)$ is an irreducible variety of dimension 2. Since $HC_F = HC_F(\bm v_1,\bm v_2,d)$ is defined by one nonzero homogeneous polynomial equation $\tilde{Q}$ on $Y$, every irreducible component of $HC_F$ has dimension at least one, using Krull's Principal Ideal Theorem and Hilbert's Nullstellensatz.  However, again,  since $Y$ is irreducible, the only way any irreducible component of $HC_F$ could have dimension 2 would be if that component were equal to $Y$, in which case we would have $HC_F = Y$ as well.  Thus, we need only produce points on $Y$ that do not lie on $HC_F$.  Use the $[\bm z,x]$-coordinates.

If the polynomial $C_1 = v_{21}z_0z_3-v_{11}z_1z_2-dz_0z_1$ is not the zero polynomial, then at least one of $v_{21},v_{11}, d$ is nonzero, so at least one of $v_{21}-v_{11}-d, \ v_{21} + v_{11} +d, \ -v_{21} + v_{11}-d$ is nonzero.   If $v_{21}-v_{11} -d \neq 0$, then $\bxi = [1,1,1,1,0] \in Y$, $C_1(\bxi) \neq 0$ and $\tilde{Q}_1(\bxi) = C_1( \bxi) \neq 0$; similarly,  if $v_{21} + v_{11} + d, -v_{21} + v_{11} -d$, respectively,  is nonzero, then $[1,-1,1,1,0] , [-1,-1,1,1,0] $, respectively,  are points in $Y$ for which $\tilde{Q}(\bxi) \neq 0$.  All of these points have real coordinates.

If $v_{11} = v_{21} = d = 0$, then  $\tilde{Q}_1(z,x) = A_1(z_0,z_1)x$ (as a polynomial).  In this case, we must have $v_{12} \neq 0$ or $v_{22} \neq 0$.   In either case, we may choose a real number $e >1$ such that $v_{22}e + v_{12} \neq 0$. Then, for example, one can check that 
$$[e,1,\sqrt{e^2-1}, 0 , 1]\  \in  \ Y(Q,Q_1) - HC_F((0,v_{12}),(0,v_{22},0).$$   Note that this point has real coordinates.
 \end{proof}

The argument of Lemma \ref{irredFDOAs}  
fails for real varieties; for example, $\mathbb{R}P^2$, with projective coordinates $[u,v,w]$,  is an irreducible variety of dimension 2 over the real numbers, but the subvariety defined by the nonzero homogeneous polynomial $u^2+(v+w)^2$ has only one point $[0,1,-1]$ in it.

Therefore,   
since the polynomials $\tilde{Q}$ have coefficients varying linearly with respect to the parameters $[\bm v,d]$:  

\begin{corollary} The family of varieties $HC_F(\bm v_1,\bm v_2,d)$, as $[\bm v_1,\bm v_2,d]$ varies over $\mathbb{C}P^4$, defines a linear system $\mathcal{H}$ whose members are (possibly reducible) curves on $Y(Q_1,Q_2)$.  \end{corollary}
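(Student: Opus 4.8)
\section*{Proof proposal}

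The plan is to read off the two assertions of the corollary separately, since essentially all of the content is already contained in Lemma~\ref{irredFDOAs} together with the linearity of the construction. First, that each member $HC_F(\bm v, d)$ is a (possibly reducible) curve on $Y(Q_1,Q_2)$ is exactly the conclusion of Lemma~\ref{irredFDOAs}: for every $[\bm v_1,\bm v_2,d]\in\mathbb{C}P^4$ every irreducible component of $HC_F(\bm v,d)$ has dimension one, so the whole intersection of $Y$ with the extra quadric $\tilde{Q}=0$ is a curve. Thus the only remaining point is that the assignment $[\bm v, d] \mapsto HC_F(\bm v, d)$ organizes these curves into a \emph{linear} system.

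For the linear-system structure I would work in the $[\bm z, x]$-coordinates, where
$$\tilde{Q}_1 = v_{22}(z_0 x) - v_{12}(z_1 x) + v_{21}(z_0 z_3) - v_{11}(z_1 z_2) - d(z_0 z_1).$$
Read as a function of the parameters, the map sending $[\bm v_1, \bm v_2, d] \in \mathbb{C}P^4$ to the quadratic form $\tilde{Q}_1$ is manifestly $\mathbb{C}$-linear: it carries the parameter vector to a fixed linear combination of the five quadratic monomials $z_0 x,\ z_1 x,\ z_0 z_3,\ z_1 z_2,\ z_0 z_1$. Let $V$ denote their span inside the space of homogeneous degree-two polynomials on $\mathbb{C}P^4$. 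As already noted in the General Remarks preceding Lemma~\ref{irredFDOAs}, these five monomials are linearly independent, so $\dim_{\mathbb{C}} V = 5$ and the linear map $\mathbb{C}^5 \to V$ is an isomorphism, inducing an isomorphism $\mathbb{C}P^4 \xrightarrow{\sim} \mathbb{P}(V)$.

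Next I would pass to $Y = Y(Q,Q_1)$. Each quadric in $V$ restricts to a section of $\mathcal{O}_{Y}(2)$, and its zero locus on $Y$ is by definition precisely the curve $HC_F(\bm v, d)$; hence $\mathcal{H}$ is the image in $|\mathcal{O}_{Y}(2)|$ of the linear subspace $V$, which is exactly what it means to be a linear system of divisors on $Y$. To be sure the parametrizing $\mathbb{C}P^4$ does not collapse, I would check that restriction to $Y$ is injective on $V$, i.e.\ that $V$ meets the degree-two part $\langle Q, Q_1\rangle_2 = \mathbb{C}\,Q + \mathbb{C}\,Q_1$ of the ideal of $Y$ only in $0$. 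This is immediate from the coordinate description: in the $[\bm z,x]$-coordinates $Q = z_2^2 - z_0^2 - (z_3^2 - z_1^2)$ and $Q_1 = x^2 + (z_2^2 - z_0^2)$ involve only the squared monomials $z_0^2, z_1^2, z_2^2, z_3^2, x^2$, whereas every element of $V$ is a combination of off-diagonal monomials; therefore $V \cap \langle Q, Q_1\rangle_2 = 0$, and distinct parameters yield distinct divisors on $Y$.

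The upshot is that there is no serious obstacle: once Lemma~\ref{irredFDOAs} guarantees the members are curves, the corollary reduces to the bookkeeping observation that $\tilde{Q}$ depends linearly on $\bm v, d$ and that this linear dependence survives restriction to $Y$. The only point that requires a moment's care is the last one --- confirming that the five coefficient monomials remain linearly independent modulo the ideal of $Y$ --- so that $\mathcal{H}$ is a genuine four-dimensional linear system faithfully parametrized by $\mathbb{C}P^4$ rather than a lower-dimensional image.
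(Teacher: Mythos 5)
Your proposal is correct and follows essentially the same route as the paper, which derives the corollary from Lemma~\ref{irredFDOAs} together with the observation that the coefficients of $\tilde{Q}$ vary linearly in the parameters $[\bm v, d]$. Your additional verification that the five coefficient monomials stay linearly independent modulo $\langle Q, Q_1\rangle_2$ (so that $\mathbb{C}P^4$ faithfully parametrizes the system) is a sound refinement that the paper leaves implicit, but it does not change the substance of the argument.
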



\subsubsection{Base points of the family $HC_F(\bm v_1,\bm v_2,d)$ and Bertini's Theorems} 
The {\bf fixed locus} of the family $\mathcal{H}$ (or, more generally, of any family of subvarieties of a variety $Y$) is the set of points in the ambient variety $Y$ which lie on every variety in the family. 
 Sometimes the fixed locus of a collection of curves is called the {\bf base locus} of the system; if the fixed locus is a finite set of points, the points in the fixed locus are called the {\bf base points} of the collection of curves.
  
   For the source localization problem, it is important to know the real base points, because they always appear in the set of possible source locations, and thus can generally be eliminated from  consideration.

 It's not hard to see  that the fixed locus of the family of curves $HC_F(\bm v,d)$ on $Y$ consists of eight points.  In the $[\bm z, x]$-coordinates on $Y$, these eight points are:
  $$[-1,0,1,0,0],\qquad [0,-1,0,1,0], \qquad [0,1,0,1,0],  \qquad [1,0,1,0,0]$$ 
  (which are the points $\bm p_1,\bm p_2,\bm p_3,\bm p_4$  and, as before, correspond to the sensor positions) and 
    $$ [0,0,1,1,\sqrt{-1}],\quad [0,0,1,1,-\sqrt{-1}], \quad [0,0,1,-1, \sqrt{-1}],  \quad  [0,0,1,-1, -\sqrt{-1}].$$   
  The first four points are always singular points of $HC_F$  (since they are singular points on the ambient variety $Y(Q_1,Q_2)$)  and the last four points, which are not real points,  are usually not singular points of $HC_F$. 
  
   Another reason to determine the base points is in order to make use of the {\bf Bertini Theorems}; see \cite{EOM}, for example, for the statements of these theorems, as well as references for further discussions of the theorems and their proofs.   In  reference \cite{EOM}, statement 1) will be called {\bf Bertini's Theorem-Generic Irreducibility or BTGI}, and statement 2) will be called {\bf Bertini's Theorem--Generic  Smoothness or BTGS } in this paper.  Note that we may attempt to apply these theorems to the family of curves $\mathcal{H}$, since, as we have seen, this is a linear system of curves on the variety $Y(Q_1,Q_2)$ and, as such, constitutes a linear system of ``divisors" on $Y(Q_1,Q_2)$.  
     
\begin{remark}	\label{singsRbase} We've determined the base points of the linear system of curves $\mathcal{H}$ on $Y(Q_1,Q_2)$; since these basepoints include the singular points of the variety $Y(Q_1,Q_2)$, we may use BTGS to conclude:  for generic choices of $[\bm v_1,\bm v_2,d]$, the singularities of  the curve $HC_F(\bm v_1,\bm v_2,d)$ lie amongst the base points of the system, namely, amongst the eight points listed above. 
\end{remark}

Consequently, to determine the singularities of a generic curve $HC_F(\bm v_1,\bm v_2,d)$, we need only check the base points.

 In fact, we have
\begin{lemma}  \label{foursingsH} In the $[\bm z, x]$-coordinates, the points $\bm p_1,\bm p_2,\bm p_3,\bm p_4$ are the points 
$$[-1,0,1,0,0],\qquad [0,-1,0,1,0], \qquad [0,1,0,1,0], \qquad [1,0,1,0,0]$$ 
and are singular points of every curve $HC_F(\bm v,d)$.  The points 
$$ [0,0,1,1,\sqrt{-1}], \quad [0,0,1,1, \quad -\sqrt{-1}], \quad [0,0,1,-1, \sqrt{-1}], \quad  [0,0,1,-1, -\sqrt{-1}]$$ 
are not singular points of $HC_F(\bm v,d)$, if
$$v_{j1}^2 + v_{j2}^2 \neq 0$$ for $j = 1$ or for $j =2$.  Therefore, using BTGS,  for generic choices of $[\bm v,d]$, the only singular points of $HC_F(\bm v, d)$ are the first four points: $\bm p_1,\bm p_2,\bm p_3,\bm p_4$.  Moreover, the conditions determining ``generic" must include $v_{j1}^2 + v_{j2}^2 \neq 0$ for $j = 1$ or for $j =2$. 
\end{lemma}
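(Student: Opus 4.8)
The plan is to prove the three assertions in turn: that $\bm p_1,\dots,\bm p_4$ are always singular, that the remaining four base points are smooth under the stated hypothesis, and finally the genericity conclusion via BTGS together with a converse showing the isotropy condition is genuinely required. Throughout I work in the $[\bm z,x]$-coordinates, where the defining equations are $Q$, $Q_1$, and $\tilde Q_1 = A_1 x + C_1$, and I use that $HC_F(\bm v,d)$ is a curve (Lemma \ref{irredFDOAs}), so its codimension in $\mathbb{C}P^4$ is $3$ and a point is smooth exactly when the $3\times 5$ Jacobian of $(Q,Q_1,\tilde Q_1)$ has rank $3$ there.

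For the first four points I would not recompute from scratch but invoke Section 4.1: $\bm p_1,\dots,\bm p_4$ are precisely the four singular points of the ambient surface $Y(Q,Q_1)$. Since $Y(Q,Q_1)$ has codimension $2$, singularity of $Y$ at $\bm p_i$ means the $2\times 5$ Jacobian of $(Q,Q_1)$ has rank $\le 1$ there; indeed one checks that at each $\bm p_i$ the gradients $\nabla Q$ and $\nabla Q_1$ either coincide or one of them vanishes. Because $HC_F(\bm v,d)$ is obtained by adjoining the single row $\nabla\tilde Q_1$, the resulting $3\times 5$ Jacobian has rank at most $2 < 3$, forcing $\bm p_i$ to be singular on $HC_F(\bm v,d)$ for every choice of $\bm v,d$.

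For the remaining four base points, write them uniformly as $[0,0,1,\epsilon,\eta]$ with $\epsilon^2=1$ and $\eta^2=-1$. The key computation is the $3\times 5$ Jacobian at this point. The rows from $Q$ and $Q_1$ are $(0,0,2,-2\epsilon,0)$ and $(0,0,2,0,2\eta)$, which are supported in the $z_2,z_3,x$ columns and are visibly independent there, while the row from $\tilde Q_1$ is supported only in the $z_0,z_1$ columns, with entries $v_{22}\eta+v_{21}\epsilon$ and $-(v_{11}+v_{12}\eta)$. This block separation is the crux: the three rows are linearly independent precisely when the $\tilde Q_1$-row is nonzero, so the point is smooth if and only if $(v_{22}\eta+v_{21}\epsilon,\,v_{11}+v_{12}\eta)\neq(0,0)$. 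Translating the vanishing of this row into conditions on $\bm v$ is then a matter of squaring: $v_{11}+v_{12}\eta=0$ gives $v_{11}^2+v_{12}^2=0$ (using $\eta^2=-1$), and $v_{22}\eta+v_{21}\epsilon=0$ gives $v_{21}^2+v_{22}^2=0$ (using $\epsilon^2=1$, $\eta^2=-1$). Hence singularity at any of these four points forces both $v_{j1}^2+v_{j2}^2=0$; the contrapositive is exactly the asserted smoothness criterion.

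The genericity statement follows by combining this with Remark \ref{singsRbase}: BTGS already confines the singular locus of a generic $HC_F(\bm v,d)$ to the eight base points, and we have just shown that, once $v_{j1}^2+v_{j2}^2\neq 0$ for some $j$, none of the last four can be singular; intersecting the BTGS-generic locus with this open condition leaves $\bm p_1,\dots,\bm p_4$ as the only singularities. To see the open condition is genuinely part of ``generic,'' I would argue the converse: if both $v_{j1}^2+v_{j2}^2=0$ then each velocity vector is isotropic, so $v_{11}=\pm\sqrt{-1}\,v_{12}$ and $v_{21}=\pm\sqrt{-1}\,v_{22}$, and one can select $\eta\in\{\pm\sqrt{-1}\}$ and $\epsilon\in\{\pm1\}$ making both entries of the $\tilde Q_1$-row vanish at $[0,0,1,\epsilon,\eta]$, producing a fifth singular point. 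I expect this sign/branch bookkeeping to be the only delicate part: the forward direction is clean because squaring erases the branch ambiguity, whereas the converse requires matching the chosen branch of each isotropic vector to a specific $\epsilon,\eta$, with extra care in the degenerate subcases where some $v_{jl}=0$ (including $\bm v=\bm 0$, where the whole $\tilde Q_1$-row vanishes and all four points become singular at once).
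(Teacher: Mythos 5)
Your proposal is correct and follows essentially the same route as the paper: both verify singularity of $\bm p_1,\dots,\bm p_4$ by inheriting it from the singularities of the ambient surface $Y(Q,Q_1)$ (your rank-$\le 2$ argument is just the explicit form of this), and both evaluate the same $3\times 5$ Jacobian of $(Q,Q_1,\tilde Q_1)$ at the four non-real base points — your block-separation observation (rows of $Q,Q_1$ supported in the $z_2,z_3,x$ columns, row of $\tilde Q_1$ supported in the $z_0,z_1$ columns) is equivalent to the paper's exhibiting invertible $3\times 3$ minors in columns $4,5,2$ or $4,5,1$. Your converse bookkeeping for the ``moreover'' clause — producing a fifth singular point when both velocity vectors are isotropic, with care about matching $\epsilon,\eta$ to the branches — is a small but genuine addition that the paper's proof leaves implicit.
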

\begin{proof}
The first four points are singular points of $HC_F(\bm v,d)$ since they are singular points on the ambient variety $Y(Q,Q_1)$.

As for the last four points, we need to compute the Jacobian of the set of three equations defining $HC_F(v_1,v_2,d)$; after eliminating unnecessary multipliers, this is the matrix
$$\left[ \begin{array}{ccccc}
-z_0 & z_1 & z_2 & -z_3 & 0 \\
-z_0 & 0 & z_2 & 0 & x \\
v_{22}x + v_{21}z_3-dz_1 & -v_{12}x-v_{11}z_2-dz_0 & -v_{11}z_1 & v_{21}z_0 & v_{22}z_0-v_{12}z_1 \end{array} \right].$$

Evaluating this matrix at the points $[0,0,1,\pm 1, \pm  \sqrt{-1}]$, we have
$$\left[ \begin{array}{ccccc}
0 & 0 & 1 & - (\pm 1) & 0 \\
0 & 0 & 1 & 0 & \pm \sqrt{-1} \\
\pm v_{22}\sqrt{-1} \pm  v_{21} & -(\pm v_{12})\sqrt{-1}-v_{11} & 0 & 0 & 0 \end{array} \right].$$

Looking at columns 4,5,2 or columns 4,5,1, according as to whether $v_{j1}^2 + v_{j2}^2 \neq 0$ for $j = 1$ or for $j =2$, we get submatrices that are invertible.  

\end{proof}

\begin{remark} For those who are interested, we may invoke some theorems from algebraic geometry now.  See, for example, \cite{AB}, \cite{BHPV}, \cite{GH} for statements and discussion of these theorems; it is not within the scope of this paper to make these statements and discussions here.  As we've seen, $HC_F(\bm v,d)$ is an intersection of three quadrics in $\mathbb{C}P^4$.  The {\bf adjunction formula} and the {\bf Riemann-Roch Theorem} (see cited references) then tell us, that if $HC_F$ were smooth, its genus would be
$$1 + 1/2(2 + 2+ 2-5)(2\cdot 2 \cdot 2)  = 5.$$ However, generically, $HC_F$ has exactly four singularities, which we suspect are all {\bf nodes}.  In this case, using the genus formula for a curve with such singularities, the genus of this generic curve should be
$$5-(1 + 1 + 1 + 1) = 1.$$

We will compute this genus in other ways in this paper.
\end{remark}

\subsection{Points on $HC_F(\bm v,d)$ where $z_0z_1=0$}

In this section, we  consider the points $[\bm z, x] \in HC_F(\bm v, d)$ where $z_0z_1 = 0$, if $$v_{j1}^2 + v_{j2}^2 \neq 0$$ for $j = 1$ and $j=2$.   Note that, in the original coordinates
$[u, \bm y, r_1,r_2]$, this is the set of points in $HC_F$ where $r_1r_2 = 0$.

\begin{lemma} \label{r1r2zeroH} Suppose that $v_{j1}^2 + v_{j2}^2 \neq 0$ for $j = 1$ and $j=2$.  Using the $[\bm z, x]$-coordinates,    the only points on $HC_F( \bm v, d)$ with $z_0z_1=0$ are the eight basepoints
$$[-1,0,1,0,0],\qquad [0,-1,0,1,0], \qquad [0,1,0,1,0], \qquad [1,0,1,0,0]$$ 
(the points $\bm p_1,\bm p_2,\bm p_3, \bm p_4$) and
  $$ [0,0,1,1,\sqrt{-1}], \quad [0,0,1,1,-\sqrt{-1}], \quad [0,0,1,-1, \sqrt{-1}],  \quad [0,0,1,-1, -\sqrt{-1}].$$  
  
 Restating this using the original coordinates, the set of points $[u, \bm y, r_1,r_2] \in HC_F(\bm v,d)$ where  $r_1r_2 =0$ is equal to the set of eight basepoints.  
\end{lemma}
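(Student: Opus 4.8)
The plan is to work throughout in the $[\bm z,x]$-coordinates, where $HC_F(\bm v,d)$ is cut out by $Q=z_2^2-z_0^2-(z_3^2-z_1^2)$, $Q_1=x^2+(z_2^2-z_0^2)$, and $\tilde Q_1=A_1x+C_1$ with $A_1=v_{22}z_0-v_{12}z_1$ and $C_1=v_{21}z_0z_3-v_{11}z_1z_2-dz_0z_1$. I would split the locus $z_0z_1=0$ into the two overlapping cases $z_0=0$ and $z_1=0$ and solve the three equations directly in each. The guiding observation is that over $\bbC$ the non-isotropy hypothesis $v_{j1}^2+v_{j2}^2\neq0$ factors as $(v_{j1}+\sqrt{-1}\,v_{j2})(v_{j1}-\sqrt{-1}\,v_{j2})\neq0$, so each linear factor $v_{j1}\pm\sqrt{-1}\,v_{j2}$ is nonzero; this is exactly what will be needed to discard the spurious solutions.

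First I would treat $z_0=0$. Then $\tilde Q_1$ collapses to $-z_1(v_{12}x+v_{11}z_2)$, while $Q_1$ forces $x^2=-z_2^2$, i.e.\ $x=\varepsilon\sqrt{-1}\,z_2$ for some $\varepsilon\in\{\pm1\}$, and $Q$ reads $z_2^2+z_1^2=z_3^2$. If $z_1=0$ as well, then $z_2^2=z_3^2$ and (after normalizing $z_2=1$, which is forced since $z_2=0$ would collapse the point) one recovers exactly the four complex points $[0,0,1,\pm1,\pm\sqrt{-1}]$. If instead $z_1\neq0$, then $v_{12}x+v_{11}z_2=0$; substituting $x=\varepsilon\sqrt{-1}\,z_2$ yields $z_2(v_{11}+\varepsilon\sqrt{-1}\,v_{12})=0$, and since the second factor is nonzero by non-isotropy for $j=1$, we must have $z_2=0$, hence $x=0$ and, via $Q$, $z_3=\pm z_1$. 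These give the points $[0,1,0,\pm1,0]$, which in projective coordinates are precisely $\bm p_3$ and $\bm p_2$.

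The case $z_1=0$ is handled symmetrically: $\tilde Q_1$ becomes $z_0(v_{22}x+v_{21}z_3)$, and $Q,Q_1$ combine to give $x^2=-z_3^2$. The subcase $z_0=0$ again returns the four complex points, while for $z_0\neq0$ the relation $v_{22}x+v_{21}z_3=0$ together with $x=\varepsilon\sqrt{-1}\,z_3$ forces $z_3(v_{21}+\varepsilon\sqrt{-1}\,v_{22})=0$; non-isotropy for $j=2$ then forces $z_3=0$, yielding $[1,0,\pm1,0,0]$, i.e.\ $\bm p_4$ and $\bm p_1$. Assembling the two cases produces exactly the eight listed basepoints. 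Finally, the restatement in the original coordinates is immediate from \eqref{z2y}, which gives $r_1=z_0/4$ and $r_2=z_1/4$, so that $r_1r_2=0$ if and only if $z_0z_1=0$.

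The only delicate point, and the step I expect to require the most care, is the repeated use of the non-isotropy hypothesis in exactly the right form. Over $\bbR$ the condition $v_{j1}^2+v_{j2}^2\neq0$ merely says $\bm v_j\neq\bm0$, but over $\bbC$ one must recognize it as the non-vanishing of both conjugate-type linear factors $v_{j1}\pm\sqrt{-1}\,v_{j2}$; it is precisely these factors, rather than $\bm v_j\neq\bm0$ itself, that appear once $Q_1$ has been used to eliminate $x$. Keeping the two sign choices $\varepsilon=\pm1$ straight, and checking that the apparently new solutions $[0,1,0,-1,0]$ and $[1,0,-1,0,0]$ coincide projectively with $\bm p_2$ and $\bm p_1$, is the remaining routine bookkeeping.
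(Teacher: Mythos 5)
Your proof is correct and follows essentially the same route as the paper's: split on $z_0=0$, $z_1=0$, and both zero; use $Q_1$ (together with $Q$) to write $x=\pm\sqrt{-1}\,z_2$ (resp.\ $\pm\sqrt{-1}\,z_3$); then use $\tilde Q_1$ and the factorization $v_{j1}^2+v_{j2}^2=(v_{j1}+\sqrt{-1}\,v_{j2})(v_{j1}-\sqrt{-1}\,v_{j2})\neq 0$ to force $z_2=0$ (resp.\ $z_3=0$), recovering exactly the eight basepoints. Your explicit check that $[0,1,0,-1,0]$ and $[1,0,-1,0,0]$ are projectively $\bm p_2$ and $\bm p_1$, and the reduction of the original-coordinates restatement to $r_1=z_0/4$, $r_2=z_1/4$, match the paper's argument in substance.
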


\begin{proof}  Suppose that $[\bm z,x] \in HC_F(\bm v,d)$ and $z_0z_1=0$.

If $z_0=z_1=0$, then $x^2 = \pm \sqrt{-1}z_2, z_2^2 = z_3^2, A_1 = 0, C_1 = 0.$    Therefore $[\bm z,x] $ is one of the four points $ [0,0,1,1,\sqrt{-1}],[0,0,1,1,-\sqrt{-1}], [0,0,1,-1, \sqrt{-1}],  [0,0,1,-1, -\sqrt{-1}].$

If $z_0 = 0, z_1 \neq 0$, then $x = \pm \sqrt{-1}z_2$, $A = -v_{12} z_1, C = -v_{11}z_1z_2$, so, using $\tilde{Q}_1 = 0$,  
$$0 = (-v_{12}z_1)(\pm \sqrt{-1}z_2) - v_{11}z_1z_2 =  -(v_{11} \pm \sqrt{-1}v_{12})z_1z_2.$$  Since $v_{11}^2 + v_{12}^2 \neq 0, z_1 \neq 0$, we must have $z_2 = 0$.  This means that $x = 0$ and, since $z_3^2-z_1^2 = 0$, 
$[\bm z,x]$ is one of the two points $\bm p_2,\bm p_3$. Similarly, if $z_0 \neq 0, z_1 = 0$, then, since $v_{12}^2 + v_{22}^2 \neq 0$, we get one of the two points
$\bm p_1, \bm p_4$.
\end{proof}

This Lemma is used in the proof of Lemma \ref{HintersectsZ} in the next section.

\subsection{Intersections of $HC_F(\bm v,d)$ with the lines $\hat{\ell_j}$}

One can compute that the lines $\hat{\ell}_j$, defined in Section \ref{classY}, on $Y(Q,Q_1)$, in the $[\bm z, x]$-coordinates,  are 
\begin{align*}
\hat{\ell}_1 &\doteq \{[z_0,z_1,-z_0, -z_1,0] \mid [z_0,z_1] \in \mathbb{C}P^1 \}, \cr
\hat{\ell}_2  &\doteq \{[z_0,z_1, -z_0, z_1,0] \mid [z_0,z_1] \in \mathbb{C}P^1\},\cr
\hat{\ell}_3 &\doteq \{[z_0,z_1,z_0,z_1,0] \mid [z_0,z_1] \in \mathbb{C}P^1 \}, \cr 
\hat{\ell}_4 & \doteq \{[z_0,z_1,z_0,-z_1,0] \mid [z_0,z_1] \in \mathbb{C}P^1\}. 
\end{align*}

Now, the only remaining equation that a point on one of these lines must satisfy is the equation
$$A_1 x + C_1 = 0,$$ where
$$C_1 = v_{21}z_0z_3-v_{11}z_1z_2-dz_0z_1, \qquad A_1 = v_{22}z_0-v_{12}z_1.$$   
However, since $x=0$ for any point on any $\hat{\ell}_j$, we need only decide when, for a point $[\bm z,0]$ on some $\hat{\ell}_j$,
$$C_1(\bm z) = 0.$$

Evaluating $C_1$ at $[\bm z] \in \hat{\ell}_j$, we  must have, respectively,
$$a_1z_0z_1 = (-v_{21}+v_{11} -d)z_0z_1=0,$$
$$a_2 z_0z_1 = -(v_{21} + v_{11} -d)z_0z_1 = 0,$$
$$a_4 z_0z_1 =(v_{21} - v_{11} -d)z_0z_1 = 0,$$
$$a_3 z_0z_1 = (v_{21} + v_{11} + d)z_0z_1=0.$$  (See \eqref{adef} for the definition of the $a_j$s.)

Therefore,  using Lemma \ref{r1r2zeroH}, 
\begin{itemize}
\item If $a_1 \neq 0$, $HC_F \cap \hat{\ell}_1 = \{[0,1,0,-1], [1,0,-1,0]\}$; if $a_1 = 0$,  $\hat{\ell}_1$ is a component of $HC_F$.
\item If $a_2 \neq 0$, $HC_F \cap \hat{\ell}_2 = \{[0,1,0,1],[1,0,-1,0]\};$ if $a_2 = 0$, $\hat{\ell}_2$ is a component of $HC_F$.
\item If $a_4 \neq 0$, $HC_F \cap \hat{\ell}_3 = \{[0,1,0,1],[1,0,1,0]\};$ if $a_4=0$, $\hat{\ell}_3$ is a component of $HC_F$.
\item If $a_3 \neq 0$, $HC_F \cap \hat{\ell}_4 = \{0,1,0,-1],[1,0,1,0]\};$ if $a_3 = 0$, $\hat{\ell}_4$ is a component of $HC_F$.
\end{itemize}
We summarize this analysis as follows:
\begin{lemma} \label{HintersectsZ} If $a_j \neq 0, 1 \leq j \leq 4$,  then 
  $HC_F(\bm v,d) \cap (\cup_{j=1}^4 \hat{\ell}_j) = \{ \bm p_1, \bm p_2, \bm p_3, \bm p_4\},$ these four points are also singular points of $HC_F$.
\end{lemma}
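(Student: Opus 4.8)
My plan is to assemble the per-line computations already recorded in the bullet list preceding the statement and then identify the resulting intersection points with the named singular points. The key observation to start from is that every point of every $\hat{\ell}_j$ has $x = 0$; hence on $\hat{\ell}_j$ the defining equation $\tilde{Q}_1 = A_1 x + C_1 = 0$ collapses to the single condition $C_1 = 0$. Evaluating $C_1$ along the four parametrizations of the $\hat{\ell}_j$ in the $[\bm z, x]$-coordinates produces exactly the scalar equations $a_1 z_0 z_1 = 0$, $a_2 z_0 z_1 = 0$, $a_4 z_0 z_1 = 0$, $a_3 z_0 z_1 = 0$ on $\hat{\ell}_1, \hat{\ell}_2, \hat{\ell}_3, \hat{\ell}_4$ respectively. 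Under the hypothesis that all $a_j \neq 0$, each of these forces $z_0 z_1 = 0$ on the corresponding line.

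Next I would invoke Lemma \ref{r1r2zeroH}: any point of $HC_F(\bm v, d)$ satisfying $z_0 z_1 = 0$ must be one of the eight base points. Since the four non-real base points $[0,0,1,\pm 1, \pm\sqrt{-1}]$ have $x = \pm\sqrt{-1} \neq 0$, they cannot lie on any $\hat{\ell}_j$, so the only candidates remaining are $\bm p_1, \bm p_2, \bm p_3, \bm p_4$. Then I would verify, line by line and remembering that coordinates are taken up to a nonzero scalar (and that the suppressed fifth coordinate $x = 0$ is restored), which of these four actually lie on each line: this gives $HC_F \cap \hat{\ell}_1 = \{\bm p_1, \bm p_2\}$, $HC_F \cap \hat{\ell}_2 = \{\bm p_1, \bm p_3\}$, $HC_F \cap \hat{\ell}_3 = \{\bm p_3, \bm p_4\}$, and $HC_F \cap \hat{\ell}_4 = \{\bm p_2, \bm p_4\}$. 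Taking the union of these four sets recovers precisely $\{\bm p_1, \bm p_2, \bm p_3, \bm p_4\}$, establishing the displayed equality.

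The final assertion, that these four points are singular points of $HC_F(\bm v, d)$, requires no new work: it is exactly the first statement of Lemma \ref{foursingsH}, which records that $\bm p_1, \ldots, \bm p_4$ are singular on every member of the family because they are singular on the ambient variety $Y(Q, Q_1)$. There is no deep obstacle here, since the heavy computation (evaluating $C_1$ on each line) was already carried out above; the only point demanding care is the projective bookkeeping when matching the listed intersection points to the named points $\bm p_j$ up to scalar, together with the bound from Lemma \ref{r1r2zeroH} that rules out any further intersection points off the base locus.
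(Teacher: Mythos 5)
Your proposal is correct and takes essentially the same route as the paper: on each $\hat{\ell}_j$ the condition $x=0$ collapses $\tilde{Q}_1=0$ to $a_k z_0z_1=0$, the hypothesis $a_j \neq 0$ then forces $z_0z_1=0$, Lemma \ref{r1r2zeroH} together with the line parametrizations identifies the intersection points as $\bm p_1,\bm p_2,\bm p_3,\bm p_4$, and the singularity assertion is quoted from Lemma \ref{foursingsH}. This is precisely the computation the paper carries out in the discussion immediately preceding the lemma, including the same appeal to Lemma \ref{r1r2zeroH}.
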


This lemma will be used in the proof of Corollary \ref{Hirred}.

\section{Geometric analysis of the curves $HC_F(\bm v,d)$}	\label{secIrred}

One of the Bertini theorems in reference \cite{EOM}, namely BTGI, tells us that we expect that, generically, the curve $HC_F(\bm v, d)$, is reduced and irreducible.  We could invoke BTGI here, after verifying the hypotheses of that theorem.  However, in this section, we will prove the generic irreducibility of $HC_F(\bm v,d)$ without invoking BTGI.  The methods here have independent interest, and also give some idea about what conditions ``generic" must include.  Note that the discussion preceding Lemma \ref{HintersectsZ} tells us that $HC_F(\bm v,d)$ is not irreducible if one of the quantities $a_j$ is zero.

The strategy for the proof is to study the images of the curves $HC_F$ under the map $\alpha$ of section \ref{alphabeta}.   
The algebraic varieties corresponding to these images comprise a linear system $\mathcal{V}$ of plane quartics, which we show are generically irreducible.   
Because, as shown in section \ref{alphabeta},  the maps $\alpha$ and $\beta$ provide a birational equivalence between curves $HC_F(\bm v, d)$ and the curves $V$ of $\mathcal V$, we can conclude that the curves $HC_F(\bm v, d)$  on $Y(Q,Q_1)$ are also generically irreducible.


\subsection{ Birational equivalence of $HC_F$ with $V$}		\label{biratFDOAwithV}

In this subsection, we examine the images $V(\bm v,d)$ of  the curves $HC_F(\bm v, d)$ under the mapping $\alpha:  \mathbb{C}P^4 \dashrightarrow \mathbb{C}P^2$ of  \eqref{alphadef}.    

We show in Lemmas \ref{alphaonHC}  and \ref{betatoHC} below that $V(\bm v,d)$ is birationally equivalent to $HC_F(\bm v,d)$.   
Consequently, properties of $HC_F$ can be inferred from the properties of $V$.

\subsubsection{The map $\alpha:HC_F(\bm v,d) \dashrightarrow \mathbb{C}P^2$ }


Let's recall the definition of $\alpha:Y(Q,Q_1) \dashrightarrow \mathbb{C}P^2$ from \eqref{alphadef}, using the $[\bm w,x_1]$-coordinates:
$$\alpha([w_0,w_1,w_2,w_3,x_1]) =  [w_3x_1, w_2x_1, w_2w_3];$$  $\alpha$ is not defined on $\hat{\ell}_1 \cup \hat{\ell}_2 \cup \hat{\ell}_4$ and $\alpha$ sends the part of the line $\hat{\ell}_3$ on which it is defined to $[0,0,1]$. 

The rational map $\alpha$ restricts to a rational map, also called $\alpha$,
$$\alpha:HC_F(\bm v,d) \dashrightarrow \mathbb{C}P^2.$$

Define an open subset $O_1(\bm v,d)$ of $HC_F(\bm v,d)$ by
\begin{equation} \label{O1}
O_1(\bm v,d) = HC_F(\bm v,d) - (\hat{\ell}_1 \cup \hat{\ell}_2 \cup \hat{\ell}_4);
\end{equation} 
$O_1$ is the domain of definition of $\alpha$ (when restricted to $HC_F$).   Moreover, we have
\begin{lemma}  \label{genericdomainalpha} For generic choices of $[\bm v, d]$, 
$$O_1(\bm v,d)  = HC_F(\bm v, d) - \{ \bm p_1, \bm p_2, \bm p_3, \bm p_4\},$$  every point of $O_1$ is a smooth point of $HC_F(\bm v,d)$ and $\alpha$ is  injective when restricted to $O_1$.
\end{lemma}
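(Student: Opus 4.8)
The plan is to verify the three assertions in turn, reducing each to results already in hand. Throughout I would work in the $[\bm z,x]$-coordinates and impose the genericity conditions $a_j \neq 0$ for $j=1,2,3,4$ (the $a_j$ of \eqref{adef}) together with the hypotheses of Lemma \ref{foursingsH}; each of these is the non-vanishing of a polynomial in the parameters $[\bm v,d]$, so they hold simultaneously off a proper closed subset of the parameter space $\mathbb{C}P^4$, which is exactly what ``generic'' means here.

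First I would pin down $O_1(\bm v,d)$ precisely. By definition $O_1 = HC_F - (\hat{\ell}_1 \cup \hat{\ell}_2 \cup \hat{\ell}_4)$, so the deleted points are exactly $HC_F \cap (\hat{\ell}_1 \cup \hat{\ell}_2 \cup \hat{\ell}_4)$. The itemized computation preceding Lemma \ref{HintersectsZ} shows, when $a_1,a_2,a_3 \neq 0$, that $HC_F \cap \hat{\ell}_1 = \{\bm p_1,\bm p_2\}$, $HC_F \cap \hat{\ell}_2 = \{\bm p_1,\bm p_3\}$, and $HC_F \cap \hat{\ell}_4 = \{\bm p_2,\bm p_4\}$, after matching the listed four-tuples to the singular points of \eqref{Ysings} in projective coordinates. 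Taking the union yields $HC_F \cap (\hat{\ell}_1 \cup \hat{\ell}_2 \cup \hat{\ell}_4) = \{\bm p_1,\bm p_2,\bm p_3,\bm p_4\}$, giving
\[
O_1(\bm v,d) = HC_F(\bm v,d) - \{\bm p_1,\bm p_2,\bm p_3,\bm p_4\},
\]
which is the first assertion. I would also record, using $a_4 \neq 0$, that $HC_F \cap \hat{\ell}_3 = \{\bm p_3,\bm p_4\}$, and both of these points already lie in $\hat{\ell}_2 \cup \hat{\ell}_4$; hence $O_1$ in fact meets none of the four lines, i.e. $O_1 \subseteq Y(Q,Q_1) - (\cup_{i=1}^4 \hat{\ell}_i)$. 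This last observation is what makes the injectivity step essentially free.

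The smoothness claim is then immediate from Lemma \ref{foursingsH}: under the generic hypotheses the only singular points of $HC_F(\bm v,d)$ are $\bm p_1,\bm p_2,\bm p_3,\bm p_4$, and these are precisely the points removed to form $O_1$, so every point of $O_1$ is smooth. For injectivity I would invoke the biholomorphism of Section \ref{alphabeta}, namely that $\alpha$ restricts to a biholomorphism $Y(Q,Q_1) - (\cup_{i=1}^4 \hat{\ell}_i) \to \mathbb{C}P^2 - (H_0 \cup H_1 \cup H_2)$ with inverse $\beta$. Since the previous step places $O_1$ inside $Y(Q,Q_1) - (\cup_{i=1}^4 \hat{\ell}_i)$, the restriction of an injective map to a subset is injective, so $\alpha|_{O_1}$ is injective.

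The main obstacle is the second step—correctly identifying $O_1$—rather than injectivity, which is nearly automatic once the biholomorphism of Section \ref{alphabeta} is available. The care needed there is bookkeeping: matching the projective four-tuples in the intersection lists to the $\bm p_j$ of \eqref{Ysings}, and noticing that the two points of $HC_F$ lying on $\hat{\ell}_3$ (the one line of the four on which $\alpha$ is actually defined) are already stripped away by deleting $\hat{\ell}_2$ and $\hat{\ell}_4$. It is exactly this fact that confines $O_1$ to the biholomorphism locus and lets the injectivity conclusion follow with no further calculation; the genericity hypotheses enter only to guarantee $a_j \neq 0$ and the singularity statement of Lemma \ref{foursingsH}.
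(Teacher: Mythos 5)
Your proposal is correct and follows essentially the same route as the paper, whose proof simply cites Lemma \ref{HintersectsZ}, Lemma \ref{foursingsH}, and the biholomorphism of Section \ref{alphabeta}; you have merely unpacked those citations, correctly matching the intersection points $HC_F \cap \hat{\ell}_j$ to the $\bm p_i$ and noting that $O_1$ avoids $\hat{\ell}_3$ as well, so that injectivity follows from the biholomorphism on $Y(Q,Q_1) - (\cup_{i=1}^4 \hat{\ell}_i)$. The extra observation about $\hat{\ell}_3$ is a detail the paper leaves implicit, and your write-up handles it correctly.
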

\begin{proof}  This follows directly from Lemma \ref{HintersectsZ}, Lemma \ref{foursingsH} and the discussion in Section 4.3.  We point out that those Lemmas specify exactly what conditions ``generic" must include.
\end{proof} 

We will use the $[\bm w,x_1]$-coordinates on $HC_F(\bm v,d) \subset \mathbb{C}P^4$ and the coordinates $[u_0,u_1,u_2]$ on $\mathbb{C}P^2$, as in Section \ref{alphabeta},  to study $\alpha$ on $HC_F$.

\subsubsection{The linear system $\mathcal V$ of plane curves}
 In this subsection, we define  the family $\mathcal V$ of varieties $V(\bm v,d)$ as the image, under $\alpha$, of the family $\mathcal{H}$.   

We show that the maps  $\alpha$ and $\beta$ of section \ref{alphabeta}, restricted to $HC_F$ and $\mathcal V$, respectively, provide a birational equivalence between curves $V(\bm v, d)$ and the corresponding curves $HC_F(\bm v, d)$.   Consequently, generic irreducibility of $V(\bm v,d)$ implies irreducibility of the corresponding curve $HC_F(\bm v, d)$.  

 Lemma \ref{betatoHC} below and its proof motivate the definition of the following
 homogeneous polynomial $P(\bm v,d)(u_0,u_1,u_2)$ of degree 4:  
\begin{align}
P(\bm v, d)(u_0,u_1,u_2) & = 2v_{22}u_0u_2(u_1^2+u_2^2) - 2v_{12}u_1u_2(u_0^2+u_2^2)   \\
 & + v_{21}(u_2^2-u_0^2)(u_2^2+u_1^2)-v_{11}(u_2^2-u_1^2)(u_0^2+u_2^2)-d(u_1^2+u_2^2)(u_0^2+u_2^2). \nonumber
 \end{align} 
Recalling the definitions of the $a_j$s \eqref{adef}, we may rewrite $P$ as
$$P =  X_1 u_0^2 + 2v_{22}(u_2^3 + u_1^2u_2) u_0 + X_2 u_2^2,$$ where, 
$$X_1 = -a_3u_2^2-2v_{12}u_1u_2 + a_1u_1^2,  \qquad X_2 = a_4u_2^2 -2v_{12}u_1u_2 -a_2u_1^2.$$  
Alternatively,
\begin{equation}	\label{polyP}
P  = a_4u_2^4 + 2(v_{22}u_0-v_{12}u_1)u_2^3 -(a_3u_0^2 +a_2u_1^2)u_2^2 + 2u_0u_1(v_{22}u_1 - v_{12}u_0)u_2 + a_1u_0^2u_1^2.
\end{equation}

The polynomial $P$, in turn, defines the 
 family of varieties $\mathcal{V} = \{ V(\bm v,d) \mid [\bm v,d] \in \mathbb{C}P^4\}$, where
\begin{equation}
V(\bm v,d) = \{ [u_0,u_1,u_2] \in \mathbb{C}P^2 \mid P(\bm v,d)(u_0,u_1,u_2) = 0 \}.
\end{equation} 
The family $\mathcal V$ 
is a linear family  in $\mathbb{C}P^2$; since at least one of $v_{11},v_{12},v_{21},v_{22},d$ is nonzero, this is a linear system of quartic plane curves (not always irreducible curves).  In other words, for every choice of parameters $[\bm v,d] \in \mathbb{C}P^4$, $V(\bm v,d)$ is a subvariety of  the plane whose irreducible components all have (complex) dimension one.

Let $O_1$ be defined as in \eqref{O1}.

\begin{lemma}  \label{alphaonHC}
The closure of $\alpha(O_1(\bm v,d))$ in $\mathbb{C}P^2$ is  a subset of $V(\bm v,d)$ and 
$$\alpha:HC_F(\bm v,d) \dashrightarrow V(\bm v,d)$$ is a rational map.   
\end{lemma}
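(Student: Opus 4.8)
The plan is to reduce both assertions to a single polynomial identity on the ambient surface relating the quartic $P$ of \eqref{polyP} to the FDOA quadric $\tilde Q = A(\bm w)x_1 + C(\bm w)$. Working in the $[\bm w,x_1]$-coordinates, I would substitute the three coordinate functions of $\alpha$ from \eqref{alphadef}, namely $u_0 = w_3 x_1$, $u_1 = w_2 x_1$, $u_2 = w_2 w_3$, directly into the expression \eqref{polyP} for $P$. Each monomial of $P$ then becomes a degree-eight monomial in $(\bm w, x_1)$, every one of which is divisible by $w_2^2 w_3^2$; after extracting this factor the remaining bracket $B$ has degree four, with $x_1$ occurring to powers $0$ through $4$.

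Next I would reduce $B$ modulo the two defining quadrics of $Y(Q,Q_1)$: the relation $Q_1=0$ lets me replace $x_1^2$ by $-w_0 w_3$ (hence $x_1^4$ by $w_0^2 w_3^2$), and $Q=0$ lets me use $w_1 w_2 = w_0 w_3$. Splitting $B$ into its even and odd parts in $x_1$, the even part collapses to $w_2 w_3\,C(\bm w)$ once one checks that $C(\bm w)\,w_2 w_3 = a_1 w_0^2 w_3^2 + a_2 w_0 w_2^2 w_3 + a_3 w_0 w_3^3 + a_4 w_2^2 w_3^2$ after using $w_1 w_2 = w_0 w_3$; the odd part, after the same substitutions and the factorization $w_0 w_3 - w_2^2 = w_2(w_1-w_2)$, reduces to $x_1 w_2 w_3\,A(\bm w)$ upon recalling $A(\bm w) = 2\bigl(v_{22}(w_3-w_0) + v_{12}(w_1-w_2)\bigr)$. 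Adding the two parts gives $B \equiv w_2 w_3\,(A(\bm w)x_1 + C(\bm w)) = w_2 w_3\,\tilde Q \pmod{(Q,Q_1)}$, and therefore the clean identity
\[
P(w_3 x_1,\, w_2 x_1,\, w_2 w_3)\;\equiv\; w_2^3 w_3^3\,\tilde Q(\bm w, x_1)\pmod{(Q,Q_1)}.
\]

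Granting this identity, the first assertion is immediate: for any $p=[\bm w,x_1]\in O_1(\bm v,d)\subseteq HC_F(\bm v,d)$ one has $Q(p)=Q_1(p)=0$ because $p\in Y(Q,Q_1)$ and $\tilde Q(p)=0$ because $p\in HC_F$, so the right-hand side vanishes and $P(\alpha(p))=0$, i.e.\ $\alpha(p)\in V(\bm v,d)$. Thus $\alpha(O_1)\subseteq V(\bm v,d)$, and since $V(\bm v,d)$ is a projective variety (hence closed) it also contains the closure of $\alpha(O_1)$. For the second assertion, $\alpha$ restricted to $HC_F$ is given by the three homogeneous polynomials $w_3 x_1, w_2 x_1, w_2 w_3$ of common degree two, it is defined on the open subset $O_1 = HC_F-(\hat\ell_1\cup\hat\ell_2\cup\hat\ell_4)$ of \eqref{O1}, and by the previous sentence its values lie in $V(\bm v,d)$; this is exactly the definition of a rational map $HC_F(\bm v,d)\dashrightarrow V(\bm v,d)$ recalled in Section \ref{alphabeta}, provided $O_1$ is a nonempty dense open subset, which holds whenever $HC_F$ is not entirely contained in $\hat\ell_1\cup\hat\ell_2\cup\hat\ell_4$ (in particular for the generic irreducible curves of Lemma \ref{genericdomainalpha}).

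The one genuinely laborious step is verifying the displayed polynomial identity: although its final form is tidy, reaching it requires careful bookkeeping of the eight degree-eight monomials and disciplined use of the two reductions $x_1^2=-w_0 w_3$ and $w_1 w_2 = w_0 w_3$. By contrast, the remaining points are routine once the identity is in hand — closedness of $V(\bm v,d)$ makes the closure statement automatic, and $O_1$ is visibly the correct indeterminacy complement making $\alpha$ a genuine rational map. The only matter requiring care is excluding the degenerate parameter choices (some $a_j=0$) for which a line $\hat\ell_j$ becomes a component of $HC_F$ and $O_1$ could fail to be dense.
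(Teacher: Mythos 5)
Your proposal is correct and follows essentially the same route as the paper's proof: substitute $u_0 = w_3x_1$, $u_1 = w_2x_1$, $u_2 = w_2w_3$ into $P$, factor out $w_2^2w_3^2$, and reduce via $w_1w_2 = w_0w_3$ and $x_1^2 = -w_0w_3$ to obtain $w_2^3w_3^3\,\tilde{Q}$, which vanishes on $HC_F$, with closedness of $V(\bm v,d)$ giving the closure statement. Phrasing the computation as a congruence modulo the ideal $(Q,Q_1)$ rather than evaluating at points, and explicitly flagging that $O_1$ must have proper-subvariety complement for the rational-map claim, are only cosmetic refinements of the paper's argument.
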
	
\begin{proof}  Use the $[\bm w, x_1]$-coordinates on $\mathbb{C}P^4$;  if $[\bm w,x_1] \in O_1(\bm v,d)$, then $\alpha([\bm w,x_1]) = [w_3x_1,w_2x_1, w_2w_3]$, and we must verify that 
$P(\bm v,d)(w_3x_1,w_2x_1,w_2 w_3) = 0$, which is a straightforward calculation:  
\begin{align*}	
P(w_3x_1, &w_2x_1,w_2w_3) =
 w_2^2w_3^2(2v_{22}w_2x_1(x_1^2+w_3^2) - 2v_{21}w_3x_1(x_1^2 + w_2^2)  \cr
&  + v_{21}(w_2^2-x_1^2)(w_3^2+x_1^2)-v_{11}(w_3^2-x_1^2)(x_1^2+w_2^2) -d(x_1^2+w_3^2)(x_1^2+w_2^2)).
\end{align*}
  Since $w_0w_3=w_1w_2, x_1^2 = -w_0w_3=-w_1w_2$, making substitutions appropriately,  the expression above is equal to
$$w_2^3w_3^3(2v_{22}x_1(w_3-w_0)-2v_{12}x_1(w_2-w_1) - v_{11}(w_0+w_3)(w_2-w_1) - d(w_3-w_0)(w_2-w_1)) $$
$$= w_2^3w_3^3(A(\bm w)x_1 + C(\bm w)) = 0.$$ Thus, $\alpha(O_1) \subseteq V$; since $V$ is closed, the closure of $\alpha(O_1)$ in the plane must be contained in $V$. 

\end{proof}

Similarly, the map $\beta: \mathbb{C}P^2 \dashrightarrow Y(Q,Q_1)$ of Section \ref{alphabeta} restricts to 
$$\beta:V(\bm v,d) \dashrightarrow  Y(Q,Q_1);$$
$$\beta([u_0,u_1,u_2]) =  [-u_0u_1^2, -u_0^2u_1,  u_1u_2^2,u_0u_2^2,u_0u_1u_2].$$ Now, $\beta$ is defined everywhere on the plane except for the points $[1,0,0],[0,1,0],[0,0,1]$; if 
$$\tilde{U}(\bm v,d) = V(\bm v,d) -\{[1,0,0],[0,1,0],[0,0,1] \},$$ then the open subset $\tilde{U}$ of $V(\bm v,d)$ must be dense in $V(\bm v,d)$. 

\begin{lemma} 	\label{betatoHC}
The closure of $\beta(\tilde{U}(\bm v,d))$ is a subset of $HC_F(\bm v,d)$ and 
$$\beta:V(\bm v,d) \dashrightarrow HC_F(\bm v,d)$$ is a rational map.
\end{lemma}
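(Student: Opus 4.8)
The plan is to reduce the membership claim to a single polynomial identity. Since $HC_F(\bm v,d)$ is carved out of the ambient surface $Y(Q,Q_1)$ by the one extra equation $\tilde{Q} = A(\bm w)x_1 + C(\bm w) = 0$, and since Section \ref{alphabeta} already establishes that $\beta$ maps $\mathbb{C}P^2$ into $Y(Q,Q_1)$ (that is, the image coordinates automatically satisfy $w_0w_3 - w_1w_2 = 0$ and $x_1^2 + w_0w_3 = 0$), the only thing left to verify is that points of $\beta(\tilde{U})$ also satisfy $\tilde{Q} = 0$. So first I would substitute the $\beta$-formulas $w_0 = -u_0u_1^2$, $w_1 = -u_0^2u_1$, $w_2 = u_1u_2^2$, $w_3 = u_0u_2^2$, $x_1 = u_0u_1u_2$ of \eqref{betadef} into $\tilde{Q}$ and compute $\tilde{Q}\bigl(\beta([u_0,u_1,u_2])\bigr)$ as a homogeneous polynomial of degree $6$ in $u_0,u_1,u_2$.

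The heart of the argument is that this substitution yields exactly $u_0 u_1\, P(\bm v,d)(u_0,u_1,u_2)$, where $P$ is the quartic of \eqref{polyP}; this is the $\beta$-side mirror of the computation in the proof of Lemma \ref{alphaonHC}, and indeed it is the identity that the definition of $P$ was engineered to produce. Concretely, I would use $w_3 - w_0 = u_0(u_1^2+u_2^2)$ and $w_2 - w_1 = u_1(u_0^2+u_2^2)$ together with $x_1 = u_0u_1u_2$ to see that $A(\bm w)x_1$ equals $u_0u_1$ times the first two terms of $P$; and I would expand $w_0w_1 = u_0^3u_1^3$, $w_0w_2 = -u_0u_1^3u_2^2$, $w_1w_3 = -u_0^3u_1u_2^2$, $w_2w_3 = u_0u_1u_2^4$ to get $C(\bm w) = u_0u_1\bigl(a_1u_0^2u_1^2 - a_2u_1^2u_2^2 - a_3u_0^2u_2^2 + a_4u_2^4\bigr)$. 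Matching these four coefficients against the expansion of the remaining terms of $P$, using the definitions \eqref{adef} of the $a_j$, completes the identity $\tilde{Q}\circ\beta = u_0u_1\,P$.

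With that identity in hand the conclusion is formal. For $[u_0,u_1,u_2] \in \tilde{U}(\bm v,d) = V(\bm v,d) - \{[1,0,0],[0,1,0],[0,0,1]\}$ the map $\beta$ is defined and $P = 0$, so $\tilde{Q}\bigl(\beta([u_0,u_1,u_2])\bigr) = u_0u_1 \cdot 0 = 0$; together with the fact that $\beta$ lands in $Y(Q,Q_1)$, this places $\beta([u_0,u_1,u_2])$ in $HC_F(\bm v,d)$. Hence $\beta(\tilde{U}) \subseteq HC_F(\bm v,d)$, and since $HC_F(\bm v,d)$ is a projective (Zariski-closed) variety, the closure of $\beta(\tilde{U})$ is contained in it as well. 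Finally, $\beta$ is given by homogeneous polynomials of the common degree $3$ and is defined on the open dense subset $\tilde{U}$ of the curve $V(\bm v,d)$ with image in $HC_F(\bm v,d)$, which is exactly what it means for $\beta\colon V(\bm v,d) \dashrightarrow HC_F(\bm v,d)$ to be a rational map. The only real obstacle is the sign-bookkeeping in the degree-six identity $\tilde{Q}\circ\beta = u_0u_1\,P$; everything downstream is immediate from closedness of $HC_F$ and the definition of a rational map.
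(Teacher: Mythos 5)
Your proposal is correct and follows essentially the same route as the paper's proof: since $\beta$ already lands in $Y(Q,Q_1)$, one only checks $\tilde{Q}\circ\beta = 0$, and the key identity $\tilde{Q}\bigl(\beta([u_0,u_1,u_2])\bigr) = u_0u_1\,P(\bm v,d)(u_0,u_1,u_2)$ you verify (splitting into the $A(\bm w)x_1$ and $C(\bm w)$ pieces) is exactly the computation in the paper, after which closedness of $HC_F$ finishes the argument. Your write-up is in fact slightly more explicit than the paper's about how the identity decomposes term by term.
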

\begin{proof}  If $[u_0,u_1,u_2] \in \tilde{U}$, then we know that 
$$\beta([u_0,u_1,u_2]) = [-u_0u_1^2, - u_0^2u_1, u_1u_2^2, u_0u_2^2, u_0u_1u_2]  = [w_0,w_1,w_2,w_3,x_1] \in Y(Q,Q_1).$$  So, we only need to verify that $A(\bm w)x_1 + C(\bm w) = 0$.  Computing,
$$A(\bm w)x_1 + C(\bm w) = $$
$$2(v_{22}(u_0u_2^2 + u_0u_1^2) -v_{12}(u_1u_2^2 + u_0u_1^2))u_0u_1u_2 + a_1u_0^3u_1^3 - a_2u_0u_1^3u_2^2 -a_3u_0^3u_1u_2^2 + a_4a_0u_1u_2^4$$
$$= u_0u_1P(u_0,u_1,u_2) = 0.$$

Therefore, $\beta(\tilde{U}) \subseteq HC_F$; since $HC_F$ is closed, we are done.
\end{proof}

\subsection{Properties of the linear system $\mathcal V$ of plane curves}	\label{systemV}

In this subsection we study the linear system $\mathcal V$ and determine (in subsection \ref{summaryV}) that generically, the varieties $V \in \mathcal V$ are irreducible.  

Up to this point, we've been presenting some details of calculations as well as how the calculations lead to conclusions about geometry.  We'll omit details at times in subsequent sections; often, the computational details and their geometric conclusions  can be made using a software package that can do symbolic computation.  We take the point of view that it's still useful to see some details, even so.
 
\subsubsection{Intersections of $V$ with the coordinate lines  in $\mathbb{C}P^2$ } \label{Vintersectslines}

Understanding the intersections of $V$ with coordinate lines gives information about the varieties $V$, contributes to the proof of Corollary \ref{Hirred}, and is needed in section \ref{desing}.

We recall that the coordinate lines  $H_j$ are defined in \eqref{coordLines}.  
If $[0,u_1,u_2] \in V \cap H_0$, evaluating $P$ at the points $[0,u_1,u_2]$, we get
$$0 =  u_2^2 X_2   = u_2^2(a_4u_2^2 -2v_{12}u_1u_2 -a_2u_1^2).$$  If $u_2 = 0$, we get the point $[0,1,0]$ on  $V(\bm v,d)$.  Otherwise, we must have 
$$a_4u_2^2 -2v_{12} u_1u_2 -a_2u_1^2 = 0.$$  

Similarly, if $[u_0,0, u_2] \in V \cap H_1$, evaluating $P$ at the points $[u_0,0,u_2]$, we get
$$0= a_4u_2^4 + 2v_{22}u_0u_2^3  -a_3u_0^2u_2^2 = u_2^2(a_4u_2^2 + 2v_{22}u_0u_2 - a_3u_0^2).$$ If $u_2 = 0$, we get the point $[1,0,0]$; otherwise,
$$a_4u_2^2 +2v_{22}u_0u_2 -a_3u_0^2 = 0.$$ 

Finally, if $[u_0,u_1,0] \in V \cap H_2$, we must have
$$0 = a_1u_0^2u_1^2.$$

Particular cases of interest  are given by: 

\begin{lemma}  \label{lemmaVintersectslines} If  $a_j \neq 0$ for $1 \leq j \leq 4,  v_{12}^2 + a_2a_4 \neq 0, v_{22}^2 + a_3a_4 \neq 0$,  where the $a_j$s are defined in \eqref{adef}, then 
\begin{itemize}
\item $V(\bm v,d) \cap H_0$ consists of three distinct points
$$ [0,1,0], \  [0,r_1,R_1], \  [0,r_2,R_2],$$ where $r_j \neq 0, R_j \neq 0$ and
$$a_4 R_j^2 -2v_{12} r_jR_j - a_2 r_j^2 =0,$$ for $j=1,2.$
\item $V \cap H_1$ consists of three distinct points
$$[1,0,0], \  [s_1,0,S_1], \  [s_2,0,S_2],$$ where $s_j \neq 0, S_j \neq 0$ and 
$$a_4S_j^2 +2v_{22}s_jS_j - a_3s_j^2 = 0,$$ for $j=1,2$.
\item Since $a_1 \neq 0$,   $V \cap H_2$ contains only the two points $[1,0,0],[0,1,0]$.
\item The four points $[0,r_j,R_j], [s_j,0,S_j]$ are smooth points of $V$.  

\end{itemize}
\end{lemma}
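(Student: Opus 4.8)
The plan is to treat the four assertions separately, building on the restrictions of $P$ to the coordinate lines already computed in the discussion preceding the statement. For the first three bullets I would argue on each line $H_j$, regarded as a copy of $\mathbb{C}P^1$, and read off the roots of a binary quadratic form; for the last bullet I would compute the gradient of $P$ and use the two defining relations to test smoothness.

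First I would dispose of $H_2$: since the restriction of $P$ to $u_2=0$ is $a_1u_0^2u_1^2$ and $a_1\neq 0$, the only zeros are $u_0=0$ or $u_1=0$, i.e. exactly the two points $[0,1,0]$ and $[1,0,0]$. On $H_0$ the restriction factors as $u_2^2(a_4u_2^2-2v_{12}u_1u_2-a_2u_1^2)$; the factor $u_2^2$ accounts for $[0,1,0]$, and the remaining points are the zeros in $\mathbb{C}P^1$ of the binary quadratic form $Q_0(u_1,u_2)=a_4u_2^2-2v_{12}u_1u_2-a_2u_1^2$. Its discriminant is $4(v_{12}^2+a_2a_4)$, nonzero by hypothesis, so $Q_0$ has two distinct roots. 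To see that these are new points with $r_j\neq 0$ and $R_j\neq 0$, I would evaluate $Q_0$ at the coordinate points of the line: $Q_0(1,0)=-a_2\neq 0$ shows no root has $R_j=0$ (hence the roots differ from $[0,1,0]$), while $Q_0(0,1)=a_4\neq 0$ shows no root has $r_j=0$. This yields the three distinct points of the first bullet, and the second bullet follows identically after interchanging the roles of the variables, the nontrivial factor on $H_1$ being $a_4u_2^2+2v_{22}u_0u_2-a_3u_0^2$ with discriminant $4(v_{22}^2+a_3a_4)\neq 0$, and $a_3\neq 0$, $a_4\neq 0$ ruling out $S_j=0$ and $s_j=0$.

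For the smoothness bullet I would compute the partials $\partial P/\partial u_i$ directly from the expanded form \eqref{polyP}. At a point $[0,r_j,R_j]$, using the relation $a_2r_j^2=a_4R_j^2-2v_{12}r_jR_j$ to simplify $\partial P/\partial u_2$, I expect
$$\frac{\partial P}{\partial u_1}=-2R_j^2(v_{12}R_j+a_2r_j),\qquad \frac{\partial P}{\partial u_2}=2R_j^2(a_4R_j-v_{12}r_j).$$
Were the point singular, these two partials would vanish, giving a homogeneous linear system in $(R_j,r_j)$ whose coefficient determinant is $-(v_{12}^2+a_2a_4)$; since this is nonzero the only solution is $R_j=r_j=0$, contradicting $R_j\neq 0$. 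The same computation at $[s_j,0,S_j]$, using $a_3s_j^2=a_4S_j^2+2v_{22}s_jS_j$ to simplify $\partial P/\partial u_2$, makes $\partial P/\partial u_0$ and $\partial P/\partial u_2$ into a linear system in $(S_j,s_j)$ with determinant $v_{22}^2+a_3a_4\neq 0$, again forcing smoothness.

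The main obstacle is this smoothness step, and specifically the simplification of $\partial P/\partial u_2$: one must substitute the quadratic relation satisfied at the point to collapse the cubic expression into the clean linear factor, and then recognize that the two surviving partials assemble into a $2\times 2$ system whose determinant is exactly the hypothesis quantity $v_{12}^2+a_2a_4$ (respectively $v_{22}^2+a_3a_4$). This is precisely why those non-degeneracy conditions appear in the hypotheses, and checking that the determinant matches them is the only part of the argument that is not purely mechanical.
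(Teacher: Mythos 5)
Your proof is correct and follows essentially the same route as the paper: the first three bullets are read off from the restrictions of $P$ to the coordinate lines, with the hypothesis quantities $v_{12}^2+a_2a_4$ and $v_{22}^2+a_3a_4$ entering as the discriminants that guarantee distinct roots, and smoothness of the four extra points is checked by differentiating $P$ and invoking the defining quadratic relations. The only difference is cosmetic: the paper passes to the affine chart $u_2\neq 0$ and shows by contradiction that a single gradient component of the dehomogenized polynomial is nonzero, whereas you stay in homogeneous coordinates and package the same computation as a $2\times 2$ linear system whose determinant is exactly $-(v_{12}^2+a_2a_4)$ (resp. $v_{22}^2+a_3a_4$), which is an equally valid (and arguably more transparent) way to see why those hypotheses are needed.
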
 

\begin{proof}  The only part of this yet to prove is the last.

The four points $[0,r_j,R_j],[s_j,0,S_j]$ lie in the affine part of $V$ corresponding to $u_2 \neq 0$.  Choosing affine coordinates $(X,Y)$ in this affine piece, $V$ is defined by
$$f(X,Y) = (a_1 Y^2 - 2v_{12} Y - a_3)X^2 + 2v_{22}X(1 + Y^2) + (-a_2 Y^2 -2v_{12} Y + a_4)=  0.$$  The points $[0,r_j,R_j]$ correspond to $(0,r_j/R_j) \doteq (0,\alpha_j)$ and the points $[s_j,0,S_j]$ correspond to $(s_j/S_j, 0) = (\beta_j, 0)$; we know that, given the hypotheses,  $\alpha_1 \neq \alpha_2$, $\beta_1 \neq \beta_2$.

Let's compute the normal vector to the curve at the points $(0,\alpha_j), (\beta_j,0)$; for this we need to compute the gradient vector:
$$\nabla f = \langle 2X(a_1Y^2 -2v_{12}Y-a_3)+ 2v_{22}(1+Y^2), 2X^2(a_1Y-v_{12}) + 4v_{22}XY -2(a_2Y+v_{12}) \rangle.$$  So,
\begin{itemize}
\item $\nabla f \mid_{(0, \alpha_j)} = \langle 2v_{22}(1 + \alpha_j^2), -2(a_2 \alpha_j + v_{12})\rangle$,
\item $\nabla f \mid_{(\beta_j,0)} = \langle -2 (a_3 \beta_j -  v_{22}) , -2v_{12}(1 + \beta_j^2)  \rangle.$
\end{itemize}

Now, if $a_2 \alpha_j + v_{12} = 0, $ then since $0 = a_4-2v_{12}\alpha_j -a_2 \alpha_j^2,$ we must have $0 = a_4-2v_{12} \alpha_j + v_{12} \alpha_j$, or
$a_4 -v_{12} \alpha_j = 0$.  In this case, $a_2a_4 + v_{12}^2 = a_2(v_{12} \alpha_j) + v_{12}(-a_2 \alpha_j) = 0,$ which we are assuming is not true.  Similarly, if $a_3 \beta_j - v_{22} = 0$, we also contradict a hypothesis.    Therefore, the four points $(0, \alpha_j),(\beta_j,0)$ are smooth points of $V$.

\end{proof}

\begin{corollary} \label{genericdomainbeta}  For generic choices of $[\bm v,d]$, the open subset
 $$O_2(\bm v,d) \doteq V(\bm v,d) - (H_0 \cup H_1 \cup H_2)$$  of $V(\bm v,d)$ is obtained by removing a finite number of points in $V(\bm v,d)$, $[0,0,1]$ does not lie on $V(\bm v,d)$, every point of $O_2(\bm v,d)$ is a smooth point of $V(\bm v,d)$ and $\beta$, when restricted to $O_2$, is injective.
\end{corollary}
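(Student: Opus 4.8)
The plan is to treat the four assertions separately, using the explicit intersection computations of Lemma \ref{lemmaVintersectslines} for the statements about the coordinate lines, and using the isomorphism $\beta$ of Section \ref{alphabeta} for the statements about $O_2$ itself. The generic conditions invoked will be exactly those appearing in Lemmas \ref{lemmaVintersectslines} and \ref{genericdomainalpha}.

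First the two direct computations. Evaluating the polynomial $P$ of \eqref{polyP} at $[0,0,1]$ gives $P(\bm v,d)(0,0,1) = a_4$, so as soon as the generic condition $a_4 \neq 0$ holds the point $[0,0,1]$ fails to lie on $V(\bm v,d)$; this is the second assertion. For the first, I would invoke Lemma \ref{lemmaVintersectslines}: under its hypotheses---all of which are among the conditions defining ``generic''---the sets $V \cap H_0$ and $V \cap H_1$ each consist of exactly three points, and $a_1 \neq 0$ forces $V \cap H_2 = \{[1,0,0],[0,1,0]\}$. Hence $V \cap (H_0 \cup H_1 \cup H_2)$ is finite, and $O_2 = V - (H_0 \cup H_1 \cup H_2)$ is $V$ with only finitely many points deleted.

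The last two assertions I would obtain by transporting the picture across $\beta$. Write $W = \mathbb{C}P^2 - (H_0 \cup H_1 \cup H_2)$ and $Z = Y(Q,Q_1) - (\cup_{i=1}^4 \hat{\ell}_i)$; by Section \ref{alphabeta}, $\beta$ restricts to an isomorphism $W \to Z$ with inverse $\alpha$. Since the four singular points $\bm p_1,\dots,\bm p_4$ of $Y(Q,Q_1)$ all lie on $\hat{L} = \cup_{i=1}^4 \hat{\ell}_i$, they are deleted in $Z$, so $Z$ is a smooth open subset of $Y(Q,Q_1)$. Injectivity of $\beta$ on $O_2$ (the fourth assertion) is then immediate, because $O_2 \subseteq W$ and $\alpha \circ \beta$ is the identity on $W$. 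For smoothness, note that $O_2 = V \cap W$ is contained in the set $\tilde{U}$ of Lemma \ref{betatoHC}, so $\beta(O_2) \subseteq HC_F(\bm v,d)$; since $\beta(O_2) \subseteq Z$ automatically, we get $\beta(O_2) \subseteq HC_F \cap Z = HC_F - (\cup_{i=1}^4 \hat{\ell}_i) \subseteq O_1(\bm v,d)$. By Lemma \ref{genericdomainalpha}, for generic $[\bm v,d]$ every point of $O_1$ is a smooth point of $HC_F$, so every point of $\beta(O_2)$ is smooth on $HC_F$.

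The only genuine obstacle is the final transfer of smoothness: I must argue that, because $\beta$ is an isomorphism of the smooth ambient surfaces $W$ and $Z$ carrying the subvariety $O_2$ isomorphically onto $\beta(O_2)$, and because smoothness of a subvariety at a point is a local property preserved under such an isomorphism, the smoothness of $HC_F$ at the points of $\beta(O_2)$ forces the smoothness of $V$ at the points of $O_2$ (here one also uses that $V \cap W$ is open in $V$, so smoothness computed in $O_2$ and in $V$ agree). Everything else reduces to a one-line evaluation of $P$ or to a direct citation of Lemmas \ref{lemmaVintersectslines}, \ref{betatoHC}, and \ref{genericdomainalpha}.
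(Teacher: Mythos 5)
Your proposal is correct and takes essentially the same route as the paper, whose entire proof consists of citing Lemma \ref{lemmaVintersectslines} (which gives the finiteness of $V \cap (H_0 \cup H_1 \cup H_2)$ and excludes $[0,0,1]$, exactly your first two steps) together with the $\alpha$, $\beta$ discussion of Section \ref{alphabeta} (which gives injectivity of $\beta$ on $O_2$ and, implicitly, the transfer of smoothness from $HC_F$). Your explicit appeal to Lemma \ref{genericdomainalpha} simply fills in that unstated transfer; the one point you gloss --- that $\beta(O_2)$ locally coincides with, rather than merely sits inside, the curve $HC_F$ near each of its points, so that smoothness of $HC_F$ there really is smoothness of the image of $O_2$ --- is supplied either by Lemma \ref{alphaonHC} together with $\beta \circ \alpha = \mathrm{id}$ (giving $\alpha\left(HC_F \cap \left(Y(Q,Q_1)-\hat{L}\right)\right) \subseteq O_2$ and hence $HC_F \cap \left(Y(Q,Q_1)-\hat{L}\right) \subseteq \beta(O_2)$), or by noting that the plane curve $V$ has pure dimension one, so its biholomorphic image cannot be a proper analytic subset of the one-dimensional manifold germ of $HC_F$.
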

\begin{proof}  This follows directly from Lemma \ref{lemmaVintersectslines} and the discussion in Section 4.3.  Moreover,  conditions under which ``genericity" are achieved have been specified in the hypotheses of this Lemma.
\end{proof}
\subsubsection{The basepoints of the linear system $\mathcal{V}$}

Determining the base points of the linear system $ \mathcal{V} = \{V(\bm v,d) \mid [\bm v,d] \in \mathbb{C}P^4\}$ is not too difficult; for $[u_0,u_1,u_2]$ to be a basepoint, it must lie on $V(\bm 0,\bm 0,d), d \neq 0$.
Therefore, $-u_1^2 = u_2^2 $ or $-u_0^2 = u_2^2$.

If $-u_1^2= u_2^2$, setting $d=0, v_{12} = 0, v_{11} \neq 0$, we must have $2v_{11}u_1^2(u_0^2+u_2^2) = 0$, so $u_2 = u_1 =0$ or $u_2^2=-u_0^2$.   So, our basepoint is either $[1,0,0]$ or of the form
$$[1, \pm 1, \pm \sqrt{-1}].$$

If $-u_0^2=u_2^2$, setting $d=0, v_{21} \neq 0, v_{22} = 0$, we must have the point $[0,1,0]$ or the four points already listed.  Therefore, we have 6 basepoints for the linear system:
$$[1,0,0],\quad [0,1,0], \quad [1,1, \sqrt{-1}], \quad [1,1,-\sqrt{-1}], \quad [1,-1,\sqrt{-1}],  \quad [1,-1,-\sqrt{-1}].$$

\subsubsection{The singularities of $V(\bm v,d)$}

Recalling that the defining polynomial $P$ for $V(\bm v, d)$, written as \eqref{polyP}, is
$$ a_4u_2^4 + 2(v_{22}u_0-v_{12}u_1)u_2^3 -(a_3u_0^2 +a_2u_1^2)u_2^2 + 2u_0u_1(v_{22}u_1 - v_{12}u_0)u_2 + a_1u_0^2u_1^2,$$ we can compute that
\begin{align*}
\frac{\partial P}{\partial u_0} &= 2v_{22}u_2^3-2a_3u_0u_2^2 - 2v_{12}u_0u_1u_2 +  2u_1u_2(v_{22}u_1 - v_{12}u_0) + 2a_1u_0u_1^2, \cr
\frac{\partial P}{\partial u_1} &= -2v_{12}u_2^3 - 2a_2u_1u_2^2 + 2v_{22}u_0u_1u_2 + 2u_0u_2(v_{22}u_1-v_{12}u_0) + 2a_1u_0^2u_1. \cr
\frac{\partial P}{\partial u_2} &= 4a_4u_2^3 + 6(v_{22}u_0-v_{12}u_1)u_2^2 -2(a_3u_0^2+a_2u_1^2)u_2 +2u_0u_1(v_{22}u_1-v_{12}u_0).
\end{align*}

One can check that the basepoints $[1,0,0],[0,1,0]$ are always singular points of the variety $V(\bm v,d)$.
For the points $[1, \pm 1, \pm \sqrt{-1}]$, on the other hand, we have  
\begin{lemma} \label{bpspV} If $v_{j1}^2 + v_{j2}^2 \neq 0$, for $j = 1$ or $j=2$, then none of the basepoints 
$$[1,1, \sqrt{-1}], \quad [1,1,-\sqrt{-1}], \quad [1,-1,\sqrt{-1}], \quad [1,-1,-\sqrt{-1}]$$ are singular points for $V(\bm v,d)$.  
\end{lemma}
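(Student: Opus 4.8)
The plan is to exploit the elementary fact that, for a plane curve, a point already known to lie on the curve is singular precisely when all three partials $\partial P/\partial u_0,\partial P/\partial u_1,\partial P/\partial u_2$ vanish there. Since the four points $[1,\pm1,\pm\sqrt{-1}]$ are base points of the linear system $\mathcal{V}$, they lie on every $V(\bm v,d)$ automatically; hence it suffices to show that the gradient of $P$ does not vanish at each of them under the stated hypothesis.

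First I would substitute $u_0=1$, $u_1=\varepsilon_1$, $u_2=\varepsilon_2$ into the three partial derivatives displayed just above the lemma, where $\varepsilon_1\in\{1,-1\}$ and $\varepsilon_2\in\{\sqrt{-1},-\sqrt{-1}\}$, so that $\varepsilon_1^2=1$, $\varepsilon_2^2=-1$, and $\varepsilon_2^3=-\varepsilon_2$. Carrying out the substitution and collecting terms, I expect the first two partials to collapse dramatically: several terms cancel in pairs, and the surviving coefficients combine through the identities $a_1+a_3=2v_{11}$ and $a_1+a_2=-2v_{21}$, both immediate from \eqref{adef}. The upshot should be
\[
\frac{\partial P}{\partial u_0}\Big|_{(1,\varepsilon_1,\varepsilon_2)} = 4\,(v_{11}-v_{12}\varepsilon_1\varepsilon_2),\qquad
\frac{\partial P}{\partial u_1}\Big|_{(1,\varepsilon_1,\varepsilon_2)} = 4\,\varepsilon_1\,(v_{22}\varepsilon_2-v_{21}).
\]

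The decisive observation is then purely algebraic. If $\partial P/\partial u_0$ vanishes at the point, then $v_{11}=v_{12}\varepsilon_1\varepsilon_2$, and squaring (using $\varepsilon_1^2\varepsilon_2^2=-1$) gives $v_{11}^2=-v_{12}^2$, i.e. $v_{11}^2+v_{12}^2=0$. Likewise, since $\varepsilon_1\neq0$, the vanishing of $\partial P/\partial u_1$ forces $v_{21}=v_{22}\varepsilon_2$, whence $v_{21}^2+v_{22}^2=0$. Consequently, if the entire gradient were to vanish at one of these four base points, then \emph{both} $v_{11}^2+v_{12}^2=0$ and $v_{21}^2+v_{22}^2=0$ would hold. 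Taking the contrapositive, the hypothesis that $v_{j1}^2+v_{j2}^2\neq0$ for $j=1$ or $j=2$ guarantees that at least one of $\partial P/\partial u_0,\partial P/\partial u_1$ is nonzero, so the point is a smooth point of $V(\bm v,d)$. Notice that $\partial P/\partial u_2$ is never even needed.

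I do not anticipate a genuine obstacle here; the only real work is the careful bookkeeping in the substitution, since the raw expressions for the partials have five terms each and one must track the signs produced by $\varepsilon_2^2=-1$ and $\varepsilon_2^3=-\varepsilon_2$. The one point worth double-checking is that the pairwise cancellations occur for all four sign choices \emph{simultaneously} — that is, that the two simplified formulas above hold uniformly in $\varepsilon_1,\varepsilon_2$ rather than merely for a single representative — since it is precisely this uniformity that lets one computation dispose of all four base points at once.
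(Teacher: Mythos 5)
Your proposal is correct and takes essentially the same route as the paper's own proof: evaluate $\partial P/\partial u_0$ and $\partial P/\partial u_1$ at the four base points (where $u_0^2+u_2^2=u_1^2+u_2^2=0$), obtaining exactly the expressions $4(v_{11}-v_{12}\varepsilon_1\varepsilon_2)$ and $4\varepsilon_1(v_{22}\varepsilon_2-v_{21})$, and conclude nonvanishing of the gradient from the hypothesis. If anything, your contrapositive step (vanishing of each partial forces $v_{11}^2+v_{12}^2=0$, respectively $v_{21}^2+v_{22}^2=0$) is spelled out more carefully than the paper's terse ``which are nonzero.''
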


\begin{proof}
For these points $u_0^2+u_2^2 = 0 = u_1^2 + u_2^2$.  Therefore, evaluating the partial derivatives, we get
$$\frac{\partial P}{\partial u_0} =  -4v_{12}(\pm \sqrt{-1})  +4 v_{11}, \qquad 
	\frac{\partial P}{\partial u_1} = 4v_{22}(\pm \sqrt{-1})  - 4v_{21}(\pm 1)$$
which are nonzero.   
\end{proof}

\subsubsection{What type of singularities are the points $[1,0,0],[0,1,0]$?} \label{singtypeV}

The analysis in this subsection will be used in Corollary \ref{Hirred}.  

Let's consider  first the singular point $[1,0,0]$.  In the affine open set defined by $u_0 \neq 0$, we must consider the affine plane curve, with variables $Y,Z$, defined by setting the quartic
$$a_4Z^4 + 2(v_{22}-v_{12}Y)Z^3 - (a_3 + a_2Y^2)Z^2 + 2Y(v_{22}Y-v_{12})Z + a_1 Y^2 $$ equal to zero.  Of course, this polynomial is obtained from $P$ by dividing by $u_0^4$ and setting $Y = u_1/u_0, Z = u_2/u_0.$

We immediately assume that $a_1 \neq 0$, since we do not wish to consider a case where we are sure that $V$ is reducible; then, 
$$P = (-(a_3 + a_2Y^2) Z^2 + a_1 Y^2 - 2v_{12}YZ) + Z(Z^2(2v_{22} - 2v_{12} Y + a_4Z) + 2 v_{22}Y^2).$$  Define 
$$U_1 \doteq  -a_3 + a_2Y^2 + (2v_{22} - 2 v_{12} Y + a_4Z)Z,  $$
$$U_2 \doteq  a_1 + 2v_{22} Z. $$  

In order to continue the analysis, we need to consider these polynomials in the formal power series ring $\mathbb{C}[[Y,Z]]$. Assume that $a_3 \neq 0$.   Recall that a formal power series is invertible in the ring $\mathbb{C}[[Y,Z]]$  if and only if its constant term is nonzero. Therefore, $U_1, U_2$ are both units in $\mathbb{C}[[Y,Z]]$ and
 $$P = U_1Z^2 + U_2 Y^2 - 2 v_{12}YZ$$ in $\mathbb{C}[[Y,Z]]$.  The discriminant of this quadratic is $4(v_{12}^2 - U_1 U_2)$ and, given the definitions of $U_1,U_2$, if $v_{12}^2 + a_1 a_3 \neq 0$, the discriminant is a unit in $\mathbb{C}[[Y,Z]]$.  Therefore, $P$ factors into two distinct factors, or is of the form $L^2$ ($L \neq 0$)   if $v_{12}^2 + a_1a_3 = 0$. We have proven

\begin{lemma}  \label{lemmasingtypeV1}  If $a_1 \neq 0, a_3 \neq 0, v_{12}^2 +a_1a_3 \neq 0$, then $[1,0,0]$ is an ordinary double point (i.e., a nodal singularity) of $V(\bm v,d)$.  If $v_{12}^2 + a_1a_3 = 0$, then $[1,0,0]$ is an ordinary cusp.
\end{lemma}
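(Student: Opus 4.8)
The plan is to determine the analytic type of the singularity of $V(\bm v,d)$ at $[1,0,0]$ by passing to the completed local ring $\mathbb{C}[[Y,Z]]$ and identifying the defining power series up to formal isomorphism. I work in the affine chart $u_0\neq0$ with $Y=u_1/u_0$, $Z=u_2/u_0$, so that $[1,0,0]$ is the origin, and I take as given the decomposition established just above the statement,
$$P=U_1Z^2-2v_{12}YZ+U_2Y^2,\qquad U_1=-a_3+\cdots,\quad U_2=a_1+2v_{22}Z,$$
together with the fact that, because $a_1\neq0$ and $a_3\neq0$, both $U_1$ and $U_2$ are units in $\mathbb{C}[[Y,Z]]$ (a power series is a unit iff its constant term is nonzero).

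First I would complete the square, using that $U_1$ is a unit. Setting $W=Z-(v_{12}/U_1)Y$ yields
$$P=U_1W^2+c\,Y^2,\qquad c=\frac{U_1U_2-v_{12}^2}{U_1^{\,2}}.$$
The one nonroutine point here is that $(Y,W)$ is a regular system of parameters: the linear part of $W$ is $Z+(v_{12}/a_3)Y$, which is independent of $Y$, so the assignment extends to a formal automorphism of $\mathbb{C}[[Y,Z]]$ and $W$ is a legitimate new coordinate. Since the constant term of the numerator $U_1U_2-v_{12}^2$ is $-a_1a_3-v_{12}^2=-(v_{12}^2+a_1a_3)$, the element $c$ is a unit precisely when $v_{12}^2+a_1a_3\neq0$.

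In the node case $v_{12}^2+a_1a_3\neq0$, the element $c$ is a unit, and since $\mathbb{C}$ is algebraically closed the unit $-c$ has a square root in the complete ring $\mathbb{C}[[Y,W]]$. Hence $W^2+cY^2=(W-\sqrt{-c}\,Y)(W+\sqrt{-c}\,Y)$ factors as a product of two power series whose linear parts $W\mp\sqrt{-c(0)}\,Y$ are distinct; these cut out two smooth branches meeting transversally, which is exactly an ordinary double point. Multiplying back by the unit $U_1$ does not change the zero set, so $[1,0,0]$ is a node.

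The cusp case $v_{12}^2+a_1a_3=0$ is where I expect the real work to lie. Here $c\in\mathfrak m$, so the tangent cone degenerates to the double line $W^2=0$; this is the precise meaning of the statement that the quadratic part of $P$ is of the form $L^2$. To upgrade a double-line tangent cone to a genuine ordinary cusp (type $A_2$, i.e.\ formally $W'^2+Y^3$), I would expand $c=c_{10}Y+c_{01}W+\cdots$, observe that $cY^2=c_{10}Y^3+c_{01}WY^2+\cdots$, and complete the square once more in $W$ to absorb the $WY^2$ term, reducing $P$ to $(W')^2+c_{10}Y^3(1+\cdots)$. When the cubic coefficient $c_{10}$ is nonzero this is formally isomorphic to $W'^2+Y^3$, an ordinary cusp. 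The main obstacle is therefore the short computation verifying $c_{10}\neq0$: it amounts to reading off the linear term of $c$ from the linear parts of $U_1$ and $U_2$ and using the relation $v_{12}^2=-a_1a_3$ together with the standing nonvanishing hypotheses, and it is exactly the step that distinguishes a true cusp from a higher, tacnode-type singularity sharing the same double-line tangent cone.
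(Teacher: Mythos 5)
Your treatment of the node case is correct and is essentially the paper's own argument: the paper works with the same decomposition $P = U_1Z^2 - 2v_{12}YZ + U_2Y^2$ in $\mathbb{C}[[Y,Z]]$ and concludes from the discriminant $4(v_{12}^2 - U_1U_2)$ being a unit that $P$ splits into two factors with distinct tangent directions; your completion of the square is the same computation in different packaging.

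The cusp case, however, contains a genuine gap, and it sits exactly where you said the ``real work'' lies: you never compute $c_{10}$, and the computation does not go the way you predict. Both $U_1$ and $U_2$ have linear part $2v_{22}Z$, so the linear part of $U_1U_2 - v_{12}^2$ is $2v_{22}(a_1-a_3)Z = -4v_{22}(v_{21}+d)Z$; substituting $Z = W - (v_{12}/a_3)Y + \cdots$ and dividing by the unit $U_1$, one finds that, up to the harmless normalization in your definition of $c$,
$$c_{10} = -\frac{4\,v_{12}\,v_{22}\,(v_{21}+d)}{a_3^{2}}.$$
The hypotheses $a_1 \neq 0$, $a_3 \neq 0$, $v_{12}^2 + a_1a_3 = 0$ do force $v_{12} \neq 0$, but they do \emph{not} force $v_{22}(v_{21}+d) \neq 0$. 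For instance $v_{11} = v_{21} = d = 1$, $v_{22} = 0$, $v_{12} = \sqrt{3}$ gives $a_1 = -1$, $a_3 = 3$, $v_{12}^2 + a_1a_3 = 0$, and there the affine equation becomes $(\sqrt{3}Z+Y)^2 + Z^2\bigl(Z^2 + 2\sqrt{3}YZ - Y^2\bigr)$, which after setting $W = \sqrt{3}Z+Y$ and completing the square reduces to $W'^2 = 8Z^4\cdot(\text{unit})$: two smooth branches meeting tangentially, i.e.\ a tacnode ($A_3$), not a cusp. So your deferred ``short computation'' cannot be closed from the stated hypotheses; an ordinary cusp additionally requires $v_{22}(v_{21}+d) \neq 0$. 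For comparison, the paper's own proof is weaker still: it passes directly from ``the discriminant is a non-unit'' to ``ordinary cusp,'' i.e.\ it makes precisely the inference (double-line tangent cone $\Rightarrow$ cusp) that you correctly flagged as needing justification. Your outline, carried to completion, shows that the cusp half of the lemma needs a supplementary hypothesis; that hypothesis does hold in the equal-velocity pencil of Section 7, where $v_{22}(v_{21}+d) = vd \neq 0$, so the cusp conclusions drawn there are unaffected.
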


Similarly for the singular point $[0,1,0]$ we have
\begin{lemma} \label{lemmasingtypeV2}  If $a_1 \neq 0, a_2 \neq 0, v_{22}^2 + a_1a_2 \neq 0$, then $[0,1,0]$ is an ordinary double point (i.e., a nodal singularity) of $V(\bm v,d)$.  If $v_{22}^2 + a_1a_2 = 0$, then $[0,1,0]$ is an ordinary cusp.
\end{lemma}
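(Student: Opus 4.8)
The plan is to imitate, \emph{mutatis mutandis}, the argument for the point $[1,0,0]$ carried out in Lemma~\ref{lemmasingtypeV1}, exploiting the symmetry between the roles of $u_0$ and $u_1$ in the quartic $P$ of \eqref{polyP}. First I would pass to the affine chart $u_1 \neq 0$, introducing affine coordinates $X = u_0/u_1$ and $Z = u_2/u_1$, so that the base point $[0,1,0]$ becomes the origin $(X,Z) = (0,0)$. Dehomogenizing $P$ by dividing through by $u_1^4$ gives the affine defining equation
\[
f(X,Z) = a_4 Z^4 + 2(v_{22}X - v_{12})Z^3 - (a_3 X^2 + a_2)Z^2 + 2X(v_{22} - v_{12}X)Z + a_1 X^2 .
\]

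Next I would extract the lowest-degree part of $f$ at the origin. Collecting the terms of total degree $\le 2$ in $(X,Z)$ yields the tangent cone
\[
a_1 X^2 + 2 v_{22}\,X Z - a_2 Z^2 ,
\]
which already confirms (independently of the gradient computation) that the origin is a double point of $V(\bm v,d)$, and whose factorization type dictates the singularity type. To handle this rigorously, exactly as in the previous lemma I would regard $f$ as a binary quadratic in $X,Z$ with coefficients absorbed into the formal power series ring $\mathbb{C}[[X,Z]]$, grouping the higher-order monomials into the coefficients so that
\[
f = V_2\, X^2 + 2 v_{22}\, X Z + V_1\, Z^2, \qquad V_2 = a_1 - 2v_{12}Z - a_3 Z^2, \quad V_1 = -a_2 - 2v_{12}Z + 2v_{22}XZ + a_4 Z^2 .
\]
Because $a_1 \neq 0$ and $a_2 \neq 0$, the series $V_2$ and $V_1$ have nonzero constant terms $a_1$ and $-a_2$, hence are units in $\mathbb{C}[[X,Z]]$.

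The decisive quantity is then the discriminant $4(v_{22}^2 - V_1 V_2)$, whose constant term is precisely $v_{22}^2 + a_1 a_2$. If $v_{22}^2 + a_1 a_2 \neq 0$, the discriminant is a unit and therefore admits a square root in $\mathbb{C}[[X,Z]]$; completing the square shows that $f$ factors as a product of two power series with \emph{distinct} linear leading parts, so the origin is an ordinary double point (a node). If instead $v_{22}^2 + a_1 a_2 = 0$, the discriminant is a non-unit, the tangent cone degenerates to a perfect square $L^2$ with $L \neq 0$, and the conclusion is that $[0,1,0]$ is an ordinary cusp.

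The step I expect to demand the most care is the cuspidal case. That a perfect-square tangent cone yields an \emph{ordinary} cusp ($A_2$), rather than a more degenerate singularity such as a tacnode, does not follow from the quadratic part alone: after completing the square one finds $f = V_1\bigl(Z + (v_{22}/V_1)X\bigr)^2 + \bigl((V_1V_2 - v_{22}^2)/V_1\bigr)X^2$, and one must verify that the numerator $V_1V_2 - v_{22}^2$ vanishes to order exactly one along this locus, so that the surviving lowest-order contribution is genuinely cubic. This is the point at which I would lean on the exact parallel with Lemma~\ref{lemmasingtypeV1}, whose identical conclusion rests on the same implicit genericity, together with a brief explicit inspection of the cubic terms of $f$. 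The purely nodal case, by contrast, is immediate once the discriminant is recognized as a unit.
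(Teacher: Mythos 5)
Your proof is correct and is essentially the paper's own argument: the paper handles $[0,1,0]$ with the single word ``Similarly'' following its proof of Lemma \ref{lemmasingtypeV1}, and your chart $u_1 \neq 0$, the grouping $f = V_2 X^2 + 2v_{22}XZ + V_1 Z^2$ with units $V_2, V_1$ in $\mathbb{C}[[X,Z]]$, and the discriminant $4(v_{22}^2 - V_1 V_2)$ with constant term $4(v_{22}^2 + a_1 a_2)$ is exactly that argument transposed to the other coordinate chart. Your closing caveat---that a perfect-square tangent cone by itself distinguishes a cusp from a worse singularity (e.g.\ a tacnode) only after inspecting higher-order terms---identifies a gap that the paper's own proof of Lemma \ref{lemmasingtypeV1} shares, so your treatment is, if anything, slightly more careful than the original.
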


\subsubsection{Summary of properties of $\mathcal V$}	\label{summaryV}

\begin{remark} \label{V2nodes} Using Bertini's theorem-BTGS for the linear system $\mathcal{V}$ of plane quartics, for generic choices of $\bm v_1,\bm v_2,d$,  $V(\bm v,d)$ is a plane quartic with exactly two  singularities $[1,0,0],[0,1,0]$, each of which are nodes (Lemmas \ref{bpspV}, \ref{lemmasingtypeV1}, \ref{lemmasingtypeV2}). 

Moreover, the hypotheses of Lemmas \ref{bpspV}, \ref{lemmasingtypeV1}, \ref{lemmasingtypeV2} give explicit conditions under which we are guaranteed that $[1,0,0], [0,1,0]$ are nodal singularities and the only basepoints of the linear system $\mathcal{V}$ which can be singularities.  \end{remark}

Now, a plane quartic with exactly two nodal singularities, and no other singularities, must be an irreducible plane curve.  
\begin{itemize}
\item[a)] The curve must be reduced, since otherwise the singularity calculation using the Jacobian would yield an infinite set of singularities. 
\item[b)] The curve can't be a union of a line and an irreducible cubic curve, for then it would have either three distinct nodal singularities, or at least one non-nodal singularity. 
\item[c)] The curve can't be a union of two distinct irreducible conics, because in that case, the curve would either have four distinct nodal singularities, or at least one non-nodal singularity.  
 \item[d)] The curve can't be a union of an irreducible conic and two lines, since then it would have five nodes or at least one non-nodal singularity. 
\item[e)]  Finally, the curve can't be a union of four distinct lines, because in this case there would be six distinct nodal singularities, or a triple point.

\end{itemize}

Thus we see
\begin{lemma}\label{Velliptic} 
For generic choices of $\bm v_1,\bm v_2,d$ the plane quartic curve $V(\bm v,d)$ is an irreducible curve with exactly two nodal singularities $[1,0,0],[0,1,0]$ and the point $[0,0,1]$ does not lie on $V(\bm v,d)$.  The genus-degree formula for singular  plane curves \cite{GH} tells us that a  desingularization  of $V(\bm v,d)$ has genus equal to $\frac{(4-1)(4-2)}{2} - 2= 1$; so that  a desingularization of $V(\bm v,d)$ is a smooth irreducible  curve of genus 1.
\end{lemma}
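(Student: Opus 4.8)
The plan is to assemble the facts already established about the linear system $\mathcal{V}$ and then read off the genus from the genus-degree formula for nodal plane curves. Essentially all the analytic work has been done in the preceding lemmas; what remains is to combine the genericity conditions into a single open condition and to apply the classification of quartics sketched in items (a)--(e) above.

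First I would fix the meaning of ``generic'' by intersecting the finitely many open conditions coming from the earlier results: $a_j \neq 0$ for $1 \le j \le 4$, the conditions $v_{12}^2 + a_2a_4 \neq 0$ and $v_{22}^2 + a_3a_4 \neq 0$ of Lemma \ref{lemmaVintersectslines}, the condition $v_{j1}^2 + v_{j2}^2 \neq 0$ (for $j=1$ or $j=2$) of Lemma \ref{bpspV}, and the node-versus-cusp conditions $v_{12}^2 + a_1a_3 \neq 0$ and $v_{22}^2 + a_1a_2 \neq 0$ of Lemmas \ref{lemmasingtypeV1} and \ref{lemmasingtypeV2}. Each of these is the complement in $\mathbb{C}P^4$ of the zero locus of a nonzero homogeneous polynomial in $(\bm v, d)$, so their common intersection is a nonempty Zariski-dense open set; this is precisely the sort of excluded parameter locus $\mathcal{F}$ alluded to in Section \ref{Xvariety}.

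Next I would establish the singularity count. By the base-point analysis combined with BTGS (Remark \ref{V2nodes}), the singularities of a generic $V(\bm v,d)$ lie among the six base points of $\mathcal{V}$. Lemma \ref{bpspV} removes the four non-real base points $[1,\pm 1,\pm \sqrt{-1}]$ from consideration, while Lemmas \ref{lemmasingtypeV1} and \ref{lemmasingtypeV2} identify $[1,0,0]$ and $[0,1,0]$ as ordinary double points under the node conditions; hence for generic parameters these two points are the \emph{only} singularities, and each is a node. The itemized argument (a)--(e) then shows that a reduced quartic whose only singularities are exactly two nodes cannot decompose as a line plus a cubic, two conics, a conic plus two lines, or four lines, since each such splitting forces either too many nodes or a non-nodal singularity; therefore $V(\bm v,d)$ is irreducible. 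The assertion $[0,0,1] \notin V(\bm v,d)$ is immediate, since evaluating \eqref{polyP} gives $P(\bm v,d)(0,0,1) = a_4 \neq 0$, consistent with Corollary \ref{genericdomainbeta}.

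Finally I would compute the genus. A smooth plane quartic has genus $\tfrac{(4-1)(4-2)}{2} = 3$, and each node lowers the geometric genus by one (a node has delta-invariant $1$), so a desingularization of a two-nodal irreducible quartic has genus $3 - 2 = 1$. The only real obstacle is bookkeeping: one must verify that the separate genericity hypotheses of Lemmas \ref{bpspV}, \ref{lemmasingtypeV1}, \ref{lemmasingtypeV2}, and \ref{lemmaVintersectslines} are mutually compatible, i.e. that their common open locus is nonempty. Since each condition excludes only a proper subvariety of the irreducible parameter space $\mathbb{C}P^4$, a finite intersection of such complements is again dense, so no incompatibility can arise and the genericity claim is safe.
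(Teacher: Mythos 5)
Your proof is correct and follows essentially the same route as the paper: invoke BTGS for the linear system $\mathcal{V}$ to confine generic singularities to the base points, use Lemma \ref{bpspV} to rule out the four non-real base points and Lemmas \ref{lemmasingtypeV1}, \ref{lemmasingtypeV2} to identify $[1,0,0]$ and $[0,1,0]$ as nodes, then apply the itemized classification (a)--(e) for irreducibility and the genus-degree formula for the genus. Your only additions—spelling out that the finitely many genericity conditions cut out a nonempty Zariski-dense open set in $\mathbb{C}P^4$, and verifying $[0,0,1]\notin V(\bm v,d)$ directly via $P(\bm v,d)(0,0,1)=a_4\neq 0$—are correct bookkeeping that the paper leaves implicit (in Section 5's definition of generic and in Corollary \ref{genericdomainbeta}, respectively).
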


Smooth irreducible curves of genus one, in some projective space, are also called {\bf elliptic} curves; sometimes the adjective ``elliptic" means that one has also specified a point on the curve, which we have not done here.

\subsection{Conclusion:   generic irreducibility of  $HC_F(\bm v,d)$} 

\begin{corollary}  \label{Hirred} 
For generic choices of $\bm v_1,\bm v_2,d$, 
  $HC_F(\bm v,d)$ is  an irreducible curve, birational to the irreducible curve $V(\bm v,d)$ (which has exactly two nodes), with exactly four singularities $\bm p_1,\bm p_2,\bm p_3,\bm p_4.$ 

\end{corollary}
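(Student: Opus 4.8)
The plan is to deduce the corollary from a single principle, that irreducibility is a birational invariant, together with the singularity counts already in hand. Essentially all the substantive work has been done in the preceding lemmas; the corollary is a matter of assembling them. First I would fix the genericity hypotheses so that the relevant earlier results apply \emph{simultaneously}: the conditions appearing in Lemmas \ref{foursingsH}, \ref{HintersectsZ}, \ref{genericdomainalpha}, Corollary \ref{genericdomainbeta}, and Lemma \ref{Velliptic} (equivalently Remark \ref{V2nodes}). Each of these is the complement of the vanishing locus of finitely many polynomials in the parameters $\bm v, d$, so their common refinement is again a generic condition, and for $[\bm v,d]$ in this common locus I may invoke all the conclusions at once.

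Next I would pin down the birational equivalence precisely. By Lemma \ref{genericdomainalpha}, generically $O_1(\bm v,d) = HC_F(\bm v,d) - \{\bm p_1,\bm p_2,\bm p_3,\bm p_4\}$, every point of $O_1$ is a smooth point of $HC_F$, and $\alpha$ is injective on $O_1$. By Lemma \ref{HintersectsZ} we have $HC_F \cap \hat{\ell}_3 \subseteq \{\bm p_1,\dots,\bm p_4\}$, so $O_1$ in fact lies entirely in the locus $Y(Q,Q_1) - \cup_{i=1}^4 \hat{\ell}_i$ on which (Section \ref{alphabeta}) $\alpha$ is a biholomorphism onto $\mathbb{C}P^2 - (H_0\cup H_1\cup H_2)$. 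Combined with Lemma \ref{alphaonHC} and Corollary \ref{genericdomainbeta}, this gives $\alpha(O_1) \subseteq V(\bm v,d)\cap(\mathbb{C}P^2 - (H_0\cup H_1\cup H_2)) = O_2(\bm v,d)$. Symmetrically, $\beta$ maps $O_2$ into $Y(Q,Q_1) - \cup_{i=1}^4 \hat{\ell}_i$ and, by Lemma \ref{betatoHC}, into $HC_F$, so $\beta(O_2) \subseteq O_1$. Since $\alpha\circ\beta = \mathrm{id}$ and $\beta\circ\alpha = \mathrm{id}$ wherever both sides are defined (the computations of Section \ref{alphabeta}, valid on $u_0u_1u_2\neq 0$ and hence on $O_2$), it follows formally that $\alpha(O_1) = O_2$ and $\beta(O_2) = O_1$, so $\alpha\colon O_1 \to O_2$ is a biholomorphism with inverse $\beta$.

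Finally I would conclude. Because every irreducible component of $HC_F$ is a curve (Lemma \ref{irredFDOAs}), $HC_F$ has no isolated points, so deleting the four points $\bm p_j$ leaves $O_1$ dense in $HC_F$; likewise $O_2$ is $V(\bm v,d)$ minus finitely many points, hence dense in the curve $V$. By Lemma \ref{Velliptic}, generically $V(\bm v,d)$ is irreducible, so its dense open subset $O_2$ is irreducible; transporting along the biholomorphism, $O_1$ is irreducible, and therefore its closure $HC_F = \overline{O_1}$ is an irreducible curve. The remaining claims are then immediate: $HC_F$ is birational to $V$ through $\alpha,\beta$; the curve $V$ has exactly the two nodes $[1,0,0]$ and $[0,1,0]$ (Remark \ref{V2nodes}); and by Lemma \ref{foursingsH} the only singular points of $HC_F$ are $\bm p_1,\bm p_2,\bm p_3,\bm p_4$, which are always singular, so $HC_F$ has exactly four singularities.

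The only real obstacle here is bookkeeping: verifying that the several genericity conditions are mutually compatible, and that the domains and codomains of $\alpha$ and $\beta$ match up so that the two maps genuinely restrict to \emph{mutually inverse} maps between $O_1$ and $O_2$. The delicate point is that $\alpha$ must not carry any point of $O_1$ onto a coordinate line $H_j$ — this is exactly what Lemma \ref{HintersectsZ} secures by confining $HC_F\cap(\cup\hat{\ell}_j)$ to the four base points that were already removed. Once the isomorphism $O_1 \cong O_2$ of dense open subsets is established, the transfer of irreducibility and the singularity count are purely formal, since the genuinely hard analysis — the nodal/cuspidal classification and the combinatorial exclusion of reducible quartics — was already carried out in proving Lemma \ref{Velliptic}.
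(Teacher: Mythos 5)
Your proposal is correct and follows essentially the same route as the paper's own proof: it combines Lemma \ref{Velliptic}, Lemma \ref{foursingsH}, Lemmas \ref{genericdomainalpha} and \ref{HintersectsZ}, and Corollary \ref{genericdomainbeta} to show that $\alpha$ and $\beta$ restrict to mutually inverse maps between $O_1$ and $O_2$, and then transports irreducibility from $V(\bm v,d)$ to $HC_F(\bm v,d)$. Your write-up is in fact somewhat more careful than the paper's (the explicit verification that $\alpha(O_1)=O_2$ and $\beta(O_2)=O_1$, and the density argument via Lemma \ref{irredFDOAs} ensuring $HC_F=\overline{O_1}$), but the underlying argument is the same.
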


\begin{proof} For generic choices of $\bm v_1,\bm v_2,d$,   Lemma \ref{Velliptic} says that $V(\bm v,d)$ is an irreducible curve with exactly two nodal singularities $[1,0,0],[0,1,0]$.  Lemma  \ref{foursingsH} says that  $HC_F(\bm v,d)$ has exactly four singularities $\bm p_1,\bm p_2,\bm p_3,\bm p_4$. 
Referring back to results in Section 6,  using Lemma \ref{genericdomainbeta}, the holomorphic map $\beta:O_2(\bm v,d) \rightarrow HC_F(\bm v,d)$ is injective and smooth, also $O_2(\bm v,d)$ is obtained from $V$ by deleting a finite number of points; using \ref{genericdomainalpha}, the map $\alpha:O_1(\bm v,d) = HC_F(\bm v,d) - \{\bm p_1,\bm p_2,\bm p_3,\bm p_4\} \rightarrow V(\bm v,d)$ is injective and every point of $HC_F(\bm v,d)$, except for the four points $\bm p_j$, is a smooth point of $HC_F$.   Moreover, $\alpha$  is an inverse to $\beta$ on $O_1, O_2$.  Therefore, since $V(\bm v,d)$ is is irreducible, so is $HC_F(\bm v,d)$ and $V(\bm v,d)$ is a curve birational to $HC_F(\bm v,d)$.  
\end{proof}

Generically, as noted previously, $HC_F(\bm v,d)$ should also have only nodal singularities, so that this curve has a desingularization that is a smooth irreducible curve of genus 1, but we do not present the proof of this here.

\section{Example:  An equal-velocity case}  
\label{specialXvariety}

This section is an extended discussion, completely avoiding the use of Bertini's Theorems,  of a special example; but similar analyses could be carried out in many other cases.  
This equal-velocity example is an illustrative example that is of practical importance. 

In particular, the equal-velocity case $\bm v_1 = \bm v_2 = (0,v)$ corresponds to two important physical scenarios:   
1) the sensors are stationary, and the source is moving in the direction perpendicular  to the line joining the sensors; or 
2) the source is stationary and the sensors are moving in tandem, perhaps both sensors located symmetrically on the same vehicle.

 The equal-velocity case is not a generic case in the entire linear system of curves $HC_F$; however, within the sub-linear system we consider here, we may also speak of ``genericity".

In subsections \ref{d=0,vnonzero} and \ref{v=0} we consider the non-generic cases when $d=0$ or $v=0$, respectively.   
In subsection \ref{genericEqualV} we determine the singularities of the ``generic" equal-velocity case.
In subsection \ref{desing}, we construct a desingularization of a generic equal-velocity curve $HC_F(v,d)$.

We consider here  the  linear system of curves on $Y(Q,Q_1)$
$$\mathcal{P}H = \{ H((0,v),(0,v),d) \mid [v,d] \in \mathbb{C}P^1\}$$ and the related linear system of plane quartics 
$$\mathcal{P}V = \{ V((0,v),(0,v),d) \mid [v,d] \in  \mathbb{C}P^1\}.$$ 
These linear subsystems of $\mathcal{H}, \mathcal{V}$ are called ``{\bf pencils}" because  the parameter set $[v,d]$ is the projective line.

One can check that the basepoints of these linear subsystems of $\mathcal{H},\mathcal{V}$, respectively, have exactly the same set of basepoints as $\mathcal{H}, \mathcal{V}$.

Since we will be determining the singular points of $HC_F$ without using Bertini's Theorem, we will initially use the $[\bm z, x]$-coordinates on $Y(Q,Q_1)$.  Recall that the equations defining $HC_F(v,d) \doteq H((0,v),(0,v),d)$, for $[v,d] \in \mathbb{C}P^1$, are, in the $[\bm z, x]$-coordinates 
$$z_2^2 -z_0^2 = z_3^2-z_1^2,  \qquad  x^2 + z_2^2-z_0^2 = 0,  \qquad  v(z_0-z_1)x - dz_0z_1=0$$ 
and the equation defining $V(v,d) \doteq V((0,v),(0,v),d)$ is
$$P =  X u_0^2 -2v(u_2^2+ u_1^2)u_2 u_0 + X u_2^2= 0,$$ where
$$X= du_2^2+2vu_1u_2 + du_1^2.$$ 

In this case, \eqref{adef} says that
$$a_1 = -d, \quad  a_2 = d,  \quad  a_3 = d, \quad  a_4 = -d, \quad  v_{12} = v_{22} = v.$$

We note that, following Remark 5.27, we are not considering the case $d=v=0$, in other words, $[d,v]$ is in a projective parameter space.
\subsection{The case $\bm{d = 0, v \neq 0}$}	\label{d=0,vnonzero}
In the notation of \eqref{FDOAdef}, we see that the equal-velocity case $d=0, v \neq 0$ corresponds to 
$$0 = \bm e_1 \cdot \bm v_1 = \bm e_2 \cdot \bm v_2  = v (e_{1,2} - e_{2,2})$$
where $e_{j,2}$ denotes the second coordinate of the unit vector from the source  to  sensor $j$.   
This occurs for the stationary-sensor case when the source is moving along the perpendicular bisector of the line between the sensors; in the stationary-source case, it occurs when both sensors are flying directly towards the source or directly away from it.   In these cases,  there is extra symmetry in the  physical problem. 

When $d = 0$, 
 $HC_F(v,0)$ 
is defined on $Y(Q,Q_1)$ by
$$(z_0-z_1)x = 0$$ and the plane curve
$V$ is defined by the equation
$$(u_1-u_0)(u_2^2 - u_0u_1)u_2 = 0.$$ 

Both $HC_F(v, 0)$ and $V$ are reducible curves.  

For $HC_F(v,0)$, we have six irreducible curves on $Y(Q,Q_1)$:
$$C_1^+ = \{[z_0,z_0, z_2,z_2,x] \in \mathbb{C}P^4 \mid [z_0,z_2,x] \in \mathbb{C}P^2, x^2 + z_2^2-z_0^2 = 0\},$$
$$C_1^- = \{[z_0,z_0, z_2,-z_2,x] \in \mathbb{C}P^4 \mid [z_0,z_2,x] \in \mathbb{C}P^2, x^2 + z_2^2-z_0^2 = 0\},$$
$$\hat{\ell}_1, \hat{\ell}_2, \hat{\ell}_3, \hat{\ell}_4$$ and
$$HC_F(v,0) = C_1^+ \cup C_2^- \cup (\cup_{j = 1}^4 \hat{\ell}_j).$$ 
 Each of $C_1^+,C_2^+$ is isomorphic to an irreducible plane conic, and of course the curves $\hat{\ell}_j$ are all lines.

The curve $V$ in this case is defined by
$$(u_1-u_0)u_2(u_2^2-u_0u_1)=0,$$  so that  $V$ is reducible and the union of three curves--two (different) lines and an irreducible (smooth) conic.  The lines are 
$$H = \{ [u_0,u_0,u_2] \mid [u_0,u_2] \in \mathbb{C}P^1\}, \qquad 
\tilde{H} = \{ [u_0,u_1,0] \mid [u_0,u_1] \in \mathbb{C}P^2 \} ;$$ the conic is
$$C = \{[u_0,u_1,u_2] \mid u_2^2 -u_0u_1 = 0\}.$$

\subsection{The stationary case $\bm{v =0}$}	\label{v=0}
 Referring to \eqref{FDOAdef}, we see that the case $\bm v_1 = \bm v_2 = \bm 0$ is the case when there is no relative motion between the sensors and source.    In this case the Doppler shifts, which are proportional to $\bm e_j \cdot \bm v $, both vanish, and  consequently their difference  is  $d=0$.    Consequently the case $\bm v = 0, d \neq 0$ does not correspond to a physical problem.   
 
 However, we still get some algebraic curves.  Let's see how this plays out.

If $v=0$, then $HC_F(0,d) $ is defined on $Y(Q,Q_1)$ by $$z_0z_1=0$$ and $HC_F = E_1^+ \cup E_1^- \cup E_2^+ \cup E_2^-$ is the union of four irreducible curves, each isomorphic to a smooth, irreducible plane conic:
$$E_1^{\pm}  = \{[0, z_1,z_2,z_3, \pm \sqrt{-1}z_2]  \in \mathbb{C}P^4 \mid [z_1,z_2,z_3] \in \mathbb{C}P^2, z_1^2 + z_2^2 = z_3^2 \},$$
$$E_2^{\pm} = \{[z_0,0,z_2,z_3, \pm \sqrt{-1}z_3] \in \mathbb{C}P^4 \mid [z_0,z_2,z_3] \in \mathbb{C}P^2, z_0^2 + z_3^3 = z_2^2 \}.$$
We note that the only real points on each of these curves lie amongst the points $\bm p_j$, $1 \leq j \leq 4$; this can be seen as telling us that for this case of $v = 0$, in ``reality" we only have the points corresponding to sensor locations.

On the other hand,  $V$ is defined by the equation
$$d(u_2^4 + (u_0^2 + u_1^2)u_2^2 + u_0^2 u_1^2) = 0$$ or
$$(u_2^2 + u_1^2)(u_2^2 + u_0^2) = 0.$$ Again, $V$ is reducible and is the union of four lines
\begin{align*} 
K_1^+ &= \{ [u_0,u_1,\sqrt{-1}u_1] \mid [u_0,u_1] \in \mathbb{C}P^1\}, \cr
K_1^- &= \{ [u_0,u_1,-\sqrt{-1}u_1] \mid [u_0,u_1] \in \mathbb{C}P^1\}, \cr
K_2^+ &= \{ [u_0,u_1,\sqrt{-1}u_0] \mid [u_0,u_1] \in \mathbb{C}P^1\}, \cr
K_2^- &= \{ [u_0,u_1,-\sqrt{-1}u_0] \mid [u_0,u_1] \in \mathbb{C}P^1\}.
\end{align*}
 Each of these lines contain exactly one real point, which is either $[1,0,0]$ or $[0,1,0]$.  Again, we see that in ``reality" we only have the points corresponding to sensor locations (in some sense).

\subsection{The equal-velocity case $\bm{v \neq 0, d \neq 0}$}	\label{genericEqualV}


\subsubsection{Singularities of $HC_F(v,d)$}

\begin{lemma} \label{Hedsing} Suppose that  $d,v \neq 0$.  Then, 
\begin{itemize}
\item if $d^2 \neq 4v^2$, then the only singular points of $HC_F(v,d)$ are $\bm p_1,\bm p_2, \bm p_3, \bm p_4$.  
\item If $d = 2v$, then $HC_F(v,d) $ has exactly five singular points $\bm p_1,\bm p_2,\bm p_3,\bm p_4, \bm q$.
\item If $d = -2v$, then $HC_F(v,d)$ has exactly five singular points $\bm p_1,\bm p_2,\bm p_3,\bm p_4,\tilde{\bm q}$.
\end{itemize}
Here $\bm q \doteq [1,-1,0,0,1]$  and $\tilde{\bm q} \doteq [1,-1,0,0,-1]$ in the $[\bm z, x]$-coordinates. 
\end{lemma}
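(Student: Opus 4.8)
The plan is to locate every point of $HC_F(v,d)$ at which the $3\times 5$ Jacobian matrix of the three defining equations $Q,\,Q_1,\,\tilde{Q}_1$ drops below its generic rank $3$, working throughout in the $[\bm z,x]$-coordinates, where this matrix (as recorded in the proof of Lemma~\ref{foursingsH}, specialized to $v_{11}=v_{21}=0,\ v_{12}=v_{22}=v$) is most tractable. Since $HC_F(v,d)$ is a curve of codimension $3$ in $\mathbb{C}P^4$, a point is singular exactly when this matrix has rank at most $2$. I would first dispose of $\bm p_1,\bm p_2,\bm p_3,\bm p_4$: these are singular points of the ambient surface $Y(Q,Q_1)$, so the upper $2\times 5$ block already has rank at most $1$ there and the full matrix cannot attain rank $3$; hence by Lemma~\ref{foursingsH} they are singular on every $HC_F(v,d)$. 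The remaining task is to show that no further singular point occurs when $d^2\neq 4v^2$, and that exactly one extra one appears when $d=\pm 2v$.

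Next I would cut down the search using Lemma~\ref{r1r2zeroH}, applicable because $v\neq 0$ forces $v_{j1}^2+v_{j2}^2\neq 0$ for $j=1,2$: the only points of $HC_F(v,d)$ with $z_0z_1=0$ are the eight base points, and the four base points other than the $\bm p_j$ are smooth by Lemma~\ref{foursingsH}. Consequently any additional singular point must satisfy $z_0z_1\neq 0$, and on this locus the FDOA equation $v(z_0-z_1)x=dz_0z_1$ has nonzero right-hand side, so there both $x\neq 0$ and $z_0\neq z_1$.

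The computational core is then to impose vanishing of the $3\times 3$ minors on the locus $z_0z_1\neq 0$. I would begin with the minor on the $z_2,z_3,x$-columns, which equals $v(z_0-z_1)z_2z_3$; since $v(z_0-z_1)\neq 0$ there, this forces $z_2z_3=0$. Treating the cases $z_2=0$ and $z_3=0$ separately, I would feed $Q_1$ and $Q$ back in (so that $x=\pm z_0$ in the first case and $x=\pm z_1$ in the second) and then impose one further suitably chosen minor in each case — for instance the one on the $z_0,z_3,x$-columns when $z_2=0$, and the one on the $z_1,z_2,x$-columns when $z_3=0$. Substituting the FDOA relation simplifies each of these brackets to a nonzero multiple of $z_0$ (resp.\ $z_1$), which forces the \emph{other} of $z_2,z_3$ to vanish as well. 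Thus any extra singular point has $z_2=z_3=0$, whence $Q$ gives $z_1=-z_0$, and the FDOA equation collapses to the single scalar condition $d=\pm 2v$; normalizing $z_0=1$ then pins the point down to $[1,-1,0,0,1]$ or $[1,-1,0,0,-1]$. I would finish by substituting each candidate directly into the three equations and into the specialized Jacobian to confirm it lies on $HC_F(v,d)$ and that the matrix genuinely drops to rank $2$ there, and by observing that when $d^2\neq 4v^2$ the reduced system has no solution with $z_0z_1\neq 0$, so that no fifth singularity exists.

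The main obstacle I anticipate is precisely this last case analysis: one must combine several minor conditions with the two quadratic relations, use the FDOA equation to eliminate, and be careful both to discard extraneous branches and to isolate the exact parameter locus $d^2=4v^2$ rather than a coarser or finer condition. A subsidiary delicate point is the bookkeeping of which sign in $d=\pm 2v$ produces $\bm q=[1,-1,0,0,1]$ versus $\tilde{\bm q}=[1,-1,0,0,-1]$; I would settle this by explicit substitution into the FDOA equation $v(z_0-z_1)x=dz_0z_1$ rather than by any symmetry argument, since $z_2$ and $z_3$ do not enter the equations symmetrically.
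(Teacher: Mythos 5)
Your proposal is correct and follows essentially the same route as the paper: both work in the $[\bm z,x]$-coordinates, declare $\bm p_1,\dots,\bm p_4$ singular because they are singular on the ambient surface, and then hunt for additional singular points by forcing $3\times 3$ minors of the same Jacobian to vanish, with the decisive minor (columns $z_2,z_3,x$) yielding $z_2z_3=0$ and the case analysis collapsing to $z_2=z_3=0$, $z_1=-z_0$, $d=\pm 2v$. Your organization is in fact slightly tighter than the paper's: you eliminate the degenerate locus via Lemma~\ref{r1r2zeroH} (where the paper merely asserts, without details, that $x=0$ yields only the $\bm p_j$), and you treat the case $z_3=0$ explicitly, which the paper omits (it is recovered by the symmetry $(z_0,z_1,z_2,z_3,x)\mapsto(z_1,z_0,z_3,z_2,-x)$, though the paper never says so).

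One substantive remark: carrying out the explicit substitution you propose for the sign bookkeeping will give the pairing \emph{opposite} to the one in the statement. With $z_2=z_3=0$ and $z_1=-z_0$, the FDOA equation $v(z_0-z_1)x-dz_0z_1=0$ becomes $2vz_0x+dz_0^2=0$, i.e.\ $2vx=-dz_0$; the paper's proof writes $2vz_0x-dz_0^2=0$ at this step, a sign slip (since $-dz_0z_1=+dz_0^2$ when $z_1=-z_0$). Consequently $[1,-1,0,0,1]$ lies on $HC_F(v,d)$ only when $d=-2v$, and $[1,-1,0,0,-1]$ only when $d=2v$: the extra singular point for $d=2v$ is $\tilde{\bm q}$ and for $d=-2v$ is $\bm q$, the reverse of what the lemma asserts. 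Your method is sound and would detect and correct this; just be aware that your (correct) conclusion will disagree with the printed statement, which can be repaired by interchanging the definitions of $\bm q$ and $\tilde{\bm q}$.
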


\begin{proof}
We examine the Jacobian for the defining equations in the $[\bm z, x]$-coordinates, which has the same rank as
$$\left[ \begin{array}{ccccc}
-z_0 & z_1 & z_2 & -z_3 & 0 \\
-z_0 & 0 & z_2 & 0 & x \\
vx-dz_1 & -vx-dz_0 & 0 & 0 & e(z_0-z_1)\end{array} \right].$$  Without presenting full details, if $x = 0$, then we obtain only the four already known singular points $p_1,p_2,p_3,p_4$.  So, assume that $x \neq 0$.  Multiplying the last column by $x$ and using the equations for $HC_F$, the matrix becomes
$$\left[ \begin{array}{ccccc}
-z_0 & z_1 & z_2 & -z_3 & 0 \\
-z_0 & 0 & z_2 & 0 & z_0^2-z_2^2 \\
vx-dz_1 & -vx-dz_0 & 0 & 0 & dz_0z_1\end{array} \right].$$

Looking at columns 3,4,5, for a singular point $[\bm z, x]$ on $HC_F$, we must have
$$dz_0z_1z_2z_3 = 0,$$ so that at least one of the $z_j$s must be zero.

If $z_0  = 0$,  the the equation for $HC_F$ on $Y(Q,Q_1)$ says that  $z_1 x=0$ so that $z_1=0$.  Similarly, if $z_1 = 0$, then $z_0 = 0.$   In either case, looking at columns 1,3,4, for a singular point, we must have $z_2z_3=0$.  However, now, $x^2 + z_2^2 = x^2 + z_3^2 = 0$, so that either $z_2=0$ or $z_3 = 0$ forces $x=0$, which we are assuming is not true, and no singular points are found here.

If $z_2 = 0$, then $x^2 = z_0^2, z_1^2-z_0^2 = z_3^2$; since $x \neq 0, z_0 \neq 0$ and the Jacobian has the same rank as
$$\left[ \begin{array}{cccc}
-z_0 & z_1 &  -z_3 & 0 \\
-z_0 & 0 &  0 & z_0 \\
vx-dz_1 & -vx-dz_0 & 0 & dz_1\end{array} \right],$$ which in turn has the same rank as
$$\left[ \begin{array}{cccc}
-z_0 & z_1 &  -z_3 & 0 \\
0 & 0 &  0 & z_0 \\
vx & -vx-dz_0 & 0 & dz_1\end{array} \right].$$  
Looking at columns 1,3,4, for a singular point, we must have
$$z_3z_0 vx = 0,$$ 
which means that $z_3 = 0$.  Looking at columns 1,2, 4, we get
$$z_0(vx + dz_0)-vxz_1=0$$ or
$$vx(z_0-z_1) + dz_0^2 = dz_0z_1 + dz_0^2 = dz_0(z_0+z_1) = 0.$$  This means that $z_1=-z_0$, and our possible singular point looks like
$$[z_0,-z_0, 0,0,x],$$ and $2vz_0x - dz_0^2 = 0$, or $2vx = dz_0$.  Since $x^2 = z_0^2$, this forces $4v^2 = d^2.$  Thus, if $d \neq \pm 2v$, we get no singular points here.  

If $d =2v,$ then $\bm q \doteq [1,-1,0,0,1]$  (in the $[\bm z, x]$-coordinates) is a singular point, and if $d = -2v$, then $\tilde{\bm q} \doteq [1,-1,0,0,-1]$ is a singular point.  

\end{proof}
We have seen that the points $\bm p_1,\bm p_2, \bm p_3, \bm p_4$ correspond to the sensor positions; the points $\bm q$ and $\tilde{\bm q}$ correspond, via \eqref{z2y}, to $[u, \bm y, \bm r] = [0, (0, \mp 1), (1, -1)]$.  
These points are on the line that bisects the sensor axis, but on the ``nonphysical'' part of the curve with $r_1$ or $r_2$ negative.

\subsubsection{Singularities of $V(v,d)$.}

\begin{lemma} Suppose that  $ d \neq 0, v \neq 0.$

If $ d \neq \pm 2v$, the only singular points of $V(v,d)$ are $[1,0,0], [0,1,0]$ and $V(v,d)$ is an irreducible quartic curve.  Moreover, if $d^2 \neq v^2$ then both singularities are nodes, and if $d^2 =v^2$, then both singularities are ordinary cusps.   

If $d=2v$, then $V(v,2v)$ is irreducible,  there is exactly one more singular point $[1,-1,1]$, in addition to the points $[1,0,0],[0,1,0]$, and this singular point is also a node.

If $d = -2v$, then $V(v,-2v)$ is irreducible,  there is exactly one more singular point $[-1,1,1]$, in addition to the points $[1,0,0],[0,1,0]$, and this singular point is also a node.

\end{lemma}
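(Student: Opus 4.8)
The plan is to compute the entire singular locus of $V(v,d)$ by solving $\partial P/\partial u_0=\partial P/\partial u_1=\partial P/\partial u_2=0$ (Euler's identity then forces $P=0$ at any such point), to identify the type of each singularity, and to deduce irreducibility. The computations become tractable once $P$ is put in factored form: substituting $a_1=-d,\,a_2=d,\,a_3=d,\,a_4=-d,\,v_{12}=v_{22}=v$ into \eqref{polyP} and regrouping with the help of the identity $u_1(u_0^2+u_2^2)-u_0(u_1^2+u_2^2)=(u_0-u_1)(u_0u_1-u_2^2)$, one checks that
\[
P = d\,(u_0^2+u_2^2)(u_1^2+u_2^2) + 2v\,u_2\,(u_0-u_1)(u_0u_1-u_2^2).
\]
(Setting $d=0$ or $v=0$ recovers the reducible factorizations of subsections \ref{d=0,vnonzero} and \ref{v=0}, a useful sanity check.)

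First I would dispose of the line $u_2=0$: there the velocity term vanishes and the vanishing of the partials forces $u_0u_1=0$, yielding only $[1,0,0]$ and $[0,1,0]$. On the chart $u_2=1$ with affine coordinates $(u_0,u_1)$ I would form the combinations $\partial_{u_0}P\pm\partial_{u_1}P$; up to a nonzero constant the sum factors as $(u_0+u_1)\bigl(d(u_0u_1+1)+v(u_0-u_1)\bigr)$, which splits the analysis. On the branch $u_0+u_1=0$ (set $u_1=-u_0$), combining the remaining partial with $P=0$ gives $u_0^2=1$ together with $d(u_0^2+1)=4vu_0$, producing $[1,-1,1]$ exactly when $d=2v$ and $[-1,1,1]$ exactly when $d=-2v$, and nothing otherwise.

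The branch $u_0+u_1\neq0$, where $d(u_0u_1+1)+v(u_0-u_1)=0$, is the heart of the argument and the step I expect to be the main obstacle. Writing $p=u_0u_1$, $s^2=(u_0+u_1)^2$ and $q=u_0-u_1$, I would use the branch relation to eliminate $q=-d(p+1)/v$, reduce $P=0$ to $s^2=(p+3)(p-1)$, reduce the second symmetric combination to $v^2s^2=d^2(p^2-1)+v^2(6p-2)$, and impose the consistency relation $q^2=s^2-4p$. Eliminating $s$ and $q$ leaves two equations in $p$ alone, namely $(v^2-d^2)(p^2-1)-4v^2p=0$ and $d^2(p+1)=v^2(p-3)$; solving the second for $p$ and inserting it into the first, and using the identity $(3v^2+d^2)^2-(v^2-d^2)^2=8v^2(v^2+d^2)$, collapses everything to the impossible relation $-4v^2=0$ (when $d^2=v^2$ the second equation already gives $4v^2=0$ directly). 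Hence this branch is empty for all $v,d\neq0$, so the singular locus is precisely as claimed in each of the three cases.

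It remains to determine the singularity types and prove irreducibility. Since the specialized parameters satisfy $a_1=-d\neq0$, $a_2=a_3=d\neq0$ and $v_{12}^2+a_1a_3=v_{22}^2+a_1a_2=v^2-d^2$, Lemmas \ref{lemmasingtypeV1} and \ref{lemmasingtypeV2} give at once that $[1,0,0]$ and $[0,1,0]$ are nodes when $d^2\neq v^2$ and ordinary cusps when $d^2=v^2$. For the extra point I would compute the $2\times2$ Hessian of $P(u_0,u_1,1)$ at $(1,-1)$ (and $(-1,1)$ for $d=-2v$) and check its determinant equals $16v^2\neq0$, so that point is a node. For irreducibility I would first note that the finiteness of the singular locus forces $V(v,d)$ to be reduced, and then rule out reducible configurations by the singularity data, mirroring the enumeration of Section \ref{summaryV}. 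The two-node case is exactly that enumeration; the two-cusp case $d^2=v^2$ follows because a cusp is unibranch while lines and conics are smooth, so a reduced reducible quartic carries at most one cusp (on a cuspidal cubic component), whence two cusps force irreducibility. The one genuinely new point is the three-node case $d=\pm2v$: three nodes alone do not preclude reducibility, since a smooth cubic meeting a line in three points also has three nodes, but those three nodes would be collinear, whereas $[1,0,0],[0,1,0]$ and $[1,-1,1]$ (resp. $[-1,1,1]$) are not collinear; the remaining reducible types force either more than three singular points or a non-nodal singularity, so $V(v,\pm2v)$ is irreducible.
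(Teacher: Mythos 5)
Your proof is correct, and although it shares the paper's overall skeleton --- the types at $[1,0,0]$ and $[0,1,0]$ read off from Lemmas \ref{lemmasingtypeV1} and \ref{lemmasingtypeV2} via $v_{12}^2+a_1a_3=v_{22}^2+a_1a_2=v^2-d^2$, a quadratic-part computation at the extra singular point, and irreducibility by enumerating reducible quartics --- the core singular-locus computation runs along a genuinely different track. The paper manipulates the three projective partials of $P=Xu_0^2-2v(u_2^3+u_1^2u_2)u_0+Xu_2^2$ into the chain \eqref{eqsingV1}--\eqref{eqsingV4}, arrives at a dichotomy of the form $u_2^2=u_0^2$ or $u_2^2=-u_1^2$, and chases each case down to $d^2=4v^2$; you instead start from the factored form $P=d(u_0^2+u_2^2)(u_1^2+u_2^2)+2vu_2(u_0-u_1)(u_0u_1-u_2^2)$ (correct, and nowhere written in the paper), split the affine computation along $\partial_{u_0}P+\partial_{u_1}P=2(u_0+u_1)\bigl(d(u_0u_1+1)+v(u_0-u_1)\bigr)$, and kill the second branch by elimination in the symmetric quantities $p,q,s$. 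I checked the reductions $s^2=(p+3)(p-1)$ and $v^2s^2=d^2(p^2-1)+v^2(6p-2)$, the final collapse (which is $4v^2(d^2-v^2)=0$, i.e.\ your $-4v^2=0$ after dividing by the nonzero $v^2-d^2$), and the Hessian determinant $16v^2$ at $(\pm 1,\mp 1)$: all correct. Two remarks. First, a point to tighten: the consistency relation is really $(p+1)\bigl(d^2(p+1)-v^2(p-3)\bigr)=0$, so the root $p=-1$ must be disposed of separately; it is, since your first equation evaluated at $p=-1$ reads $4v^2=0$. Second, what your route buys beyond transparency (the factored $P$ instantly recovers the degenerate cases of subsections \ref{d=0,vnonzero} and \ref{v=0}): your unibranch argument for the two-cusp case $d^2=v^2$ repairs a genuine, if small, gap in the paper, which at that point simply cites the two-node enumeration following Remark \ref{V2nodes}; that enumeration does not literally apply when the singularities are cusps (a line meeting a cuspidal cubic only at a flex is a reducible quartic with exactly two singular points, so counting singular points alone is not decisive), whereas your observation that intersection points of distinct components are never unibranch settles it cleanly. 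Your three-node collinearity argument for $d=\pm 2v$ is the same as the paper's.
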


\begin{proof}

These results could be checked by a software package.

Recall that $a_1 = -d, a_2 = d, a_3 = d, a_4 = -d, v_{12} = v_{22} = v$, so that 
$$v_{12}^2 + a_1a_3 = v^2-d^2, v_{22}^2 + a_1a_2 = v^2-d^2.$$  Therefore, Lemmas \ref{lemmasingtypeV1}, \ref{lemmasingtypeV2} say that
\begin{itemize}
\item if $v^2 \neq d^2$, then $[1,0,0],[0,1,0]$ are both nodes on $V(e,d)$ and
\item if $v^2 = d^2$, then $[1,0,0],[0,1,0]$ are both ordinary cusps on $V(e,d)$.
\end{itemize}

To look for other singularities, recall that  the equation for $V$ is $$P =  X u_0^2 -2v(u_2^3 + u_1^2u_2) u_0 + X u_2^2= 0,$$ where
$$X= du_2^2+2vu_1u_2 + du_1^2.$$  Now, the first partials of $P$ are
\begin{align*}
P_{u_0} &= 2u_0 X - 2v(u_2^3 + u_1^2u_2),\cr
P_{u_1} &= X_{u_1} u_0^2 - 4vu_0u_1u_2 + X_{u_1}u_2^2, \cr
P_{u_2} &= X_{u_2}u_0^2 - 2v(3u_2^2 + u_1^2) u_0 + u_2^2 X_{u_2} + 2 u_2 X.
\end{align*}

Let's assume that $u_2 \neq 0$, for if $u_2 = 0$ we get only the already known singularities $[1,0,0],[0,1,0]$.

If $P_{u_0} = 0$, then 
\begin{equation} \label{eqsingV1}
u_0X = vu_2(u_1^2 + u_2^2).
\end{equation}

Given that we have a point on $P = 0$, we see that
$$vu_0(u_2^3 + u_1^2 u_2) - 2v(u_2^3 + u_1^2u_2)u_0 + u_2^2X = 0, $$ or
\begin{equation} \label{eqsingV2} u_2X = vu_0(u_1^2 + u_2^2). \end{equation}

Setting $P_{u_2} = 0$ as well,  rearranging, and using \eqref{eqsingV2} in the last term, 
\begin{align*}
0 
&= X_{u_2} ( u_0^2 + u_2^2) -  2v(3u_2^2 + u_1^2) u_0  + 2 u_2 X  \cr
&= X_{u_2}(u_0^2 + u_2^2) - 4vu_0u_2^2.
\end{align*}
Since $X_{u_2} = 2du_2+2vu_1$, we see that
\begin{equation} \label{eqsingV3} (du_2 +v u_1)(u_0^2 + u_2^2) = 2v u_0u_2^2.
\end{equation}

Finally, setting $P_{u_1} = 0$, we have
$$X_{u_1}(u_0^2 + u_2^2) = 4 vu_0u_1u_2.$$  Since $X_{u_1} = 2du_1+2vu_2,$
\begin{equation} \label{eqsingV4} (du_1 +v u_2)(u_0^2 + u_2^2) = 2vu_0u_1u_2. \end{equation}

Taken together, \eqref{eqsingV3} and \eqref{eqsingV4} imply that
$$(u_2^2-u_0^2)(u_1^2 + u_2^2) = 0,$$ so that
$$u_2^2 = u_0^2$$ or
$$u_2^2 = -u_1^2.$$

If $u_2^2 = u_0^2$, then we must have $u_0 \neq 0$.  In this case, \eqref{eqsingV3} implies that
\begin{equation} \label{eqsingV5} du_2 + vu_1 =v u_0, \end{equation} so that, using \eqref{eqsingV4},
$$(du_1 +vu_2)u_2 = (du_2 + vu_1)u_1$$ or
$$u_2^2 = u_1^2.$$  Since $u_0^2 = u_1^2 = u_2^2$, \eqref{eqsingV5} says that
$$(du_2 + vu_1)^2  = v^2u_1^2$$ and
$$d^2 u_2^2 + 2dvu_1u_2 = 0$$ yielding $du_2 +2vu_1=0$, or $u_2 = -(2v/d)u_1$; finally, this means that $d^2 = 4v^2.$  So, if $d^2 \neq 4v^2$, we get no singular points here.

On the other hand, if $d = 2v$, then $u_2 = -u_1$, and, since $u_0^2 = u_1^2 = u_2^2$ and $u_0 = 2u_2+u_1 = -u_1$ and we have only the point $[1,-1,1]$.  If $d = -2v$, then $u_2 = u_1$ and $u_0 = -2u_2 +u_1 = -u_1$ again, so we have only the point $[1,-1,-1] = [-1,1,1]$.

Suppose now that $u_2^2 = -u_1^2$.  Then, \eqref{eqsingV2} says that, since $u_2 \neq 0$, $X = 0$.  Since $X = du_2^2 + 2vu_1u_2 -du_2^2$ now, this forces $u_1u_2=0$.  However, since $u_2 \neq 0$ and $u_2^2 = -u_1^2$, $u_1\neq 0$ as well.  Thus we get a contradiction and no singular points in this case.

  If $d \neq 0, v \neq 0, d \neq \pm 2v$, we've shown in the calculations above that the only singular points of $V(v,d)$ are $[1,0,0],[0,1,0]$. As pointed out above, Lemmas \ref{lemmasingtypeV1} and \ref{lemmasingtypeV2} show that the two singular points are both nodes if $d^2 \neq v^2$,  and are both ordinary cusps if $d^2 = v^2$.  The argument after Remark \ref{V2nodes} shows that $V(v,d)$ is an irreducible quartic curve.

If $d = 2v$, we've already seen that a singular point $[u_0,u_1,u_2]$, other than the two points $[1,0,0],[0,1,0]$ must be equal to $[1,-1,1]$; while if $d = -2v$, the ``new" singular point must be $[-1,1,1]$.   Let's show that, in either case, these new singular points are nodes.  This we can determine by looking at the curve in the affine piece where $u_2 \neq 0$, since the new singular point lies in this affine piece.  We will give details only in the case $d = 2v$.  In this case, $V$  is defined by
$$ X u_0^2 -2v(u_2^3 + u_1^2u_2) u_0 + X u_2^2= 0,$$ where
$$X= 2vu_2^2+2vu_1u_2 + 2eu_1^2,$$   so that the equation for $V(2v,v)$ is 
$$ (u_2^2 + u_1u_2 + u_1^2)u_0^2 -(u_2^3 +u_1^2u_2)u_0 + (u_2^2 + u_1u_2 + u_1^2)u_2^2 = 0.$$   In the affine plane with coordinates $(U_0,U_1)$, where $u_2 \neq 0$, $V$ is defined by
$$g = (U_1^2 + U_1 + 1)U_0^2 -(1 + U_1^2)U_0 + (U_1^2 +U_1 + 1) = 0;$$ expanding the polynomial $g$ on the left about $(1,-1)$, we have
$$g = (U_0-1)^2 + (U_1+1)^2 + (U_0-1)(U_1+1)^2 - (U_0-1)^2(U_1+1) + (U_0-1)^2(U_1+1)^2.$$  Therefore, $[1,-1,1]$ is a node. 

Revisiting the discussion immediately after Remark \ref{V2nodes}, a plane quartic, with exactly three nodes and no other singularities, is reducible if and only if it is a union of a line and an irreducible cubic curve.  Moreover, the line must pass through all three nodes.  Let's say the equation of the line is $au_0 + bu_1 + cu_2 = 0$.  Since the line must pass through $[1,0,0]$, $a = 0$, since it passes through $[0,1,0]$, $b = 0$ and since it passes through $[1,-1,1]$, we must have $c = 0$, a contradiction.  Therefore, $V(2v,v)$ is irreducible.

A similar argument shows the claim in the case $d = -2v$.

 \end{proof}

The analogues of Lemma \ref{Velliptic} and Corollary \ref{Hirred} for these pencils are the following; the proofs of these facts do not invoke Bertini's theorem.

\begin{lemma} \label{Vedelliptic} Suppose that $d \neq 0, v \neq 0$.  Then,  the plane quartic curve $V(v,d)$ is an irreducible  curve.

If $d^2 \neq 4v^2$, $V(v,d)$ has exactly two singularities $[1,0,0],[0,1,0]$ which are either both nodes, or both cusps.  The genus-degree formula for singular  plane curves tells us that, in either case,  the desingularization $\tilde{V}(v,d)$ of $V(v,d)$ has genus equal to $\frac{(4-1)(4-2)}{2} - 2= 1$; so that  $\tilde{V}(v,d)$ is a smooth (irreducible)  curve of genus 1.

If $d^2 = 4v^2$, $V(v,d)$ has exactly three singularities.  Either all three singularities are nodes or two are cusps and one is a node.  In either case, the genus-degree formula tells us that the desingularization $\tilde{V}(v,d)$ in this case has genus equal to zero, and is thus a smooth, rational curve.
\end{lemma}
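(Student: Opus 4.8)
The plan is to treat this lemma as a consolidation of the singularity analysis carried out in the preceding lemma, combined with a single application of the genus-degree formula for irreducible plane curves. Irreducibility of $V(v,d)$ in every case with $d\neq 0,\ v\neq 0$ has already been established in the preceding lemma, so the desingularization $\tilde V(v,d)$ is automatically a smooth \emph{irreducible} curve, and all that remains is to compute its genus. The genus-degree formula \cite{GH} for an irreducible plane curve of degree $n$ reads $g=\tfrac{(n-1)(n-2)}{2}-\sum_p \delta_p$, where $\delta_p$ is the delta-invariant at the singular point $p$; here $n=4$, so $\tfrac{(n-1)(n-2)}{2}=3$. The observation that makes the computation uniform is that a node (ordinary double point) and an ordinary cusp both have delta-invariant $1$, so the genus is insensitive to whether a given singularity is a node or a cusp.

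First I would dispose of the case $d^2\neq 4v^2$. By the preceding lemma the only singular points of $V(v,d)$ are $[1,0,0]$ and $[0,1,0]$. Since in the equal-velocity setting $v_{12}^2+a_1a_3=v_{22}^2+a_1a_2=v^2-d^2$, Lemmas \ref{lemmasingtypeV1} and \ref{lemmasingtypeV2} show that both points are nodes when $v^2\neq d^2$ and both are ordinary cusps when $v^2=d^2$. In either situation each contributes $\delta_p=1$, so $\sum_p\delta_p=2$ and $g=3-2=1$; hence $\tilde V(v,d)$ is a smooth irreducible curve of genus $1$.

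Next I would treat the case $d^2=4v^2$, that is $d=\pm2v$. By the preceding lemma, $V(v,d)$ has exactly three singularities: $[1,0,0]$, $[0,1,0]$, and one additional point ($[1,-1,1]$ when $d=2v$, or $[-1,1,1]$ when $d=-2v$), the extra point being a node. At $[1,0,0]$ and $[0,1,0]$ we now have $v^2-d^2=v^2-4v^2=-3v^2\neq 0$ because $v\neq 0$, so Lemmas \ref{lemmasingtypeV1} and \ref{lemmasingtypeV2} force these two points to be nodes as well; thus all three singularities are in fact nodes. (Even in the hypothetical alternative in which some were cusps, each would still have $\delta_p=1$, so the count would be unchanged.) Therefore $\sum_p\delta_p=3$ and $g=3-3=0$, so $\tilde V(v,d)$ is a smooth irreducible curve of genus $0$, i.e. a rational curve.

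There is little genuine obstacle here: the substantive work—proving irreducibility and locating and classifying all the singular points—was already completed in the preceding lemma and in Lemmas \ref{lemmasingtypeV1} and \ref{lemmasingtypeV2}. The only points requiring care are (i) invoking the correct form of the genus-degree formula, using the delta-invariant rather than a naive point count, and recalling that nodes and ordinary cusps alike have $\delta_p=1$, which is exactly why the genus comes out the same regardless of node-versus-cusp; and (ii) confirming that the singularity lists furnished by the preceding lemma are exhaustive, so that no further contributions to $\sum_p\delta_p$ are omitted. Granting these, the conclusion is immediate in each case.
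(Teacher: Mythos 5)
Your proof is correct and takes essentially the same approach as the paper, which presents this lemma without a separate proof precisely because it is the consolidation you describe: the preceding singularity lemma (plus Lemmas \ref{lemmasingtypeV1} and \ref{lemmasingtypeV2}) supplies irreducibility and the exhaustive, classified list of singularities, and the genus-degree formula with $\delta_p=1$ for both nodes and ordinary cusps gives the genus counts. Your remark that in the case $d^2=4v^2$ all three singularities are in fact nodes (since $v^2-d^2=-3v^2\neq 0$) is a slight sharpening of the lemma's hedged phrasing, and is fully consistent with the paper's analysis.
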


Using Lemma \ref{Vedelliptic} we have
\begin{corollary}   Suppose that $d \neq 0, v \neq 0$.  Then,
  $HC_F(v,d)$ is  an irreducible curve, birational to the irreducible curve $V(v,d)$.
  
  If $d^2 \neq 4v^2$, $HC_F(v,d)$ has exactly four singularities, $\bm p_1,\bm p_2, \bm p_3,\bm p_4$.
  
  If $d^2 = 4v^2$, $HC_F(v,d)$ has exactly five singularities. If $d = 2v$, the singularities are $\bm p_1,\bm p_2,\bm p_3,\bm p_4, \bm q$; while if $d = -2v$, the singularities are $\bm p_1,\bm p_2,\bm p_3,\bm p_4, \tilde{\bm q}$; the definitions of $\bm q,\tilde{\bm q}$ are given in  Lemma \ref{Hedsing}.
\end{corollary}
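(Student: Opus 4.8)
The plan is to assemble three ingredients that have all been prepared earlier in the section: the irreducibility of the plane quartic $V(v,d)$ (Lemma \ref{Vedelliptic}), the explicit singularity count for $HC_F(v,d)$ (Lemma \ref{Hedsing}), and the birational equivalence between $HC_F(v,d)$ and $V(v,d)$ furnished by the maps $\alpha,\beta$. This corollary is the exact analogue of Corollary \ref{Hirred} for the equal-velocity pencil, so the only real work is to verify that the birationality argument still goes through at the \emph{non-generic} parameter values ($d^2=v^2$ and $d^2=4v^2$), where the explicit genericity hypotheses of Lemmas \ref{genericdomainalpha} and \ref{genericdomainbeta} can fail.

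First I would set up the birational equivalence \emph{without} invoking the generic-domain lemmas. Recall from Section \ref{alphabeta} that $\alpha$ and $\beta$ restrict to mutually inverse biholomorphisms between the open sets $U \doteq Y(Q,Q_1) - (\cup_{i=1}^4 \hat{\ell}_i)$ and $W \doteq \mathbb{C}P^2 - (H_0\cup H_1\cup H_2)$, and this statement holds for every choice of parameters. Since $d\neq 0$, the quantities $a_1=-d,\ a_2=d,\ a_3=d,\ a_4=-d$ are all nonzero, so Lemma \ref{HintersectsZ} gives $HC_F(v,d)\cap(\cup_{i=1}^4\hat{\ell}_i)=\{\bm p_1,\bm p_2,\bm p_3,\bm p_4\}$; hence $HC_F(v,d)\cap U = HC_F(v,d)-\{\bm p_1,\ldots,\bm p_4\}$ is a dense open subset of the curve $HC_F(v,d)$. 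A direct evaluation of $P$ on each coordinate line (using $a_1=-d\neq0$ on $H_2$, and the nonvanishing binary quartics $-du_2^2-2vu_1u_2-du_1^2$ on $H_0$ and its mate on $H_1$) shows that $V(v,d)\cap(\cup_j H_j)$ is finite, so $V(v,d)\cap W$ is dense and open in $V(v,d)$.

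Next I would conclude that $\alpha$ restricts to a biholomorphism $HC_F(v,d)\cap U \to V(v,d)\cap W$ with inverse $\beta$. By Lemma \ref{alphaonHC}, $\alpha(HC_F(v,d)\cap U)\subseteq V(v,d)\cap W$, and by Lemma \ref{betatoHC}, $\beta(V(v,d)\cap W)\subseteq HC_F(v,d)\cap U$; applying the global identity $\alpha\circ\beta=\mathrm{id}_W$ to the second inclusion upgrades the first to an equality, so $\alpha$ and $\beta$ interchange these two dense open subsets. This is precisely a birational equivalence between $V(v,d)$ and $HC_F(v,d)$, and it uses only $d\neq 0$ (never $v^2\neq d^2$), which is exactly why it survives at the cusp and $d=\pm2v$ parameters where Lemma \ref{lemmaVintersectslines} breaks down.

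Finally, irreducibility transfers: $HC_F(v,d)\cap U$ is biholomorphic to the dense open subset $V(v,d)\cap W$ of the irreducible curve $V(v,d)$ (Lemma \ref{Vedelliptic}), hence is itself irreducible, and its closure $HC_F(v,d)$ is therefore irreducible as well. The singularity statements are then read off directly from Lemma \ref{Hedsing}: the only singular points are $\bm p_1,\ldots,\bm p_4$ when $d^2\neq4v^2$, with one extra node $\bm q$ (if $d=2v$) or $\tilde{\bm q}$ (if $d=-2v$) when $d^2=4v^2$. The main obstacle is the one addressed in the third paragraph, namely ensuring that $\alpha$ and $\beta$ still realize $HC_F(v,d)$ and $V(v,d)$ as birational at the degenerate parameter values; it is overcome by appealing to the global, parameter-independent biholomorphism $U\cong W$ from Section \ref{alphabeta} together with the finiteness of the relevant intersection loci, rather than to the generic-domain Lemmas \ref{genericdomainalpha} and \ref{genericdomainbeta}.
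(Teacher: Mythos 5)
Your proposal is correct, and its skeleton is the same as the paper's: irreducibility of $V(v,d)$ from Lemma \ref{Vedelliptic}, transfer of irreducibility to $HC_F(v,d)$ through the mutually inverse rational maps $\alpha,\beta$, and the singularity statements read off from Lemma \ref{Hedsing}. Where you genuinely diverge is in how the ``finite set of removed points'' is justified. The paper's proof cites Lemma \ref{HintersectsZ} \emph{and} Lemma \ref{lemmaVintersectslines} for this; but in the equal-velocity pencil the hypotheses of Lemma \ref{lemmaVintersectslines} read $v_{12}^2+a_2a_4 = v^2-d^2 \neq 0$ and $v_{22}^2+a_3a_4 = v^2-d^2 \neq 0$, which fail precisely in the cusp case $d^2=v^2$ that the corollary is supposed to cover. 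You spotted this and replaced the citation with what is actually needed: a direct evaluation of $P$ on the three coordinate lines (each restriction is a nonzero binary quartic once $d\neq 0$, since $a_1=-d\neq 0$ handles $H_2$ and $X$, resp.\ its mate, handles $H_0$, $H_1$), giving finiteness of $V(v,d)\cap(H_0\cup H_1\cup H_2)$ for \emph{all} $d\neq 0$, together with the parameter-independent biholomorphism $U\cong W$ of Section \ref{alphabeta} and Lemma \ref{HintersectsZ} (whose hypothesis $a_j\neq 0$ reduces to $d\neq0$ here). So your argument buys uniform validity across the degenerate parameters $d^2=v^2$ and $d^2=4v^2$, whereas the paper's proof, as literally written, leans on a lemma whose stated hypotheses exclude $d^2=v^2$ and would need exactly the repair you supplied.
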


\begin{proof}  Lemma \ref{Vedelliptic} says that $V(v,d)$ is an irreducible curve. 

Referring back to Section 6.1,  the holomorphic map $\beta:V(v,d) \dashrightarrow HC_F(v,d)$ defines a rational map  between the two curves $V(v,d)$ and $HC_F(v,d)$; the holomorphic map $\alpha:HC_F(v,d) \dashrightarrow V(v,d)$ defines a rational map between the two curves $HC_F(v,d )$ and $V(v,d)$.  Moreover, $\alpha$  is an inverse to $\beta$ after removing a finite set of points from both domains, using Lemmas \ref{HintersectsZ} and \ref{lemmaVintersectslines}.  Therefore, since $V(v,d)$ is irreducible, so is $HC_F(v,d)$ and $V(v,d)$ is a curve birational to $HC_F(v,d)$.  

Lemma \ref{Hedsing} proves the statements about the singularities of $HC_F(v,d)$.  

\end{proof}

\subsection{Desingularization of $HC_F(v,d)$} 
\label{desing}

This section is included for those interested, but it's not really necessary to read for subsequent arguments.

 In this subsection, we consider only the case $d \neq 0, v \neq 0, d^2 \neq 4v^2, d^2 \neq v^2$. 
In this case, we've seen that the only singularities $[1,0,0],[0,1,0]$ of $V(e,d)$ are both nodes and $V(v,d)$ is an irreducible quartic plane curve defined by 
$$P =  X u_0^2 -2v(u_2^2+ u_1^2)u_2 u_0 + X u_2^2= 0,$$ where
$$X(u_1,u_2)= du_2^2+2vu_1u_2 + du_1^2.$$ 

The four points $[0,r_j,R_j], [s_j,0,S_j]$ of Lemma \ref{lemmaVintersectslines}, which are the intersections of $V$ with the lines $H_0, H_1$, respectively,  that are not the two nodes, are of the form
\begin{equation} \label{foursmoothpoints} 
[0,1,t],[0,t,1], [1,0,-t],[-t,0,1] 
\end{equation} 
where $dt^2 + 2v t + d = 0$. 
 (In this discussion, $t$ is a constant, and is not one of the variables in the ``original coordinates".  The original coordinates are not used  anywhere in this Section.) 
 Note that $t \neq \pm 1$ since $d^2 \neq v^2$, and 
 consequently the points \eqref{foursmoothpoints}   
 are really four distinct points. Moreover, Lemma \ref{lemmaVintersectslines} says that all four of these points are smooth points of $V$ (which we already know,  since $d \neq 0$ implies $t \neq 0$, independently of that Lemma).

Note also that $t$ is real if and only if $d^2 \leq v^2$ and $v,d$ are real.  Here, we see the ``Cauchy-Schwarz bound" \eqref{CSbound} popping up again to govern the real points on the curve.

 Choose one of these four points \eqref{foursmoothpoints}, say the point $[0,1,t]$.

Now, curve theory tells us, that since $V$ is defined by an irreducible plane quartic curve, with exactly two nodal singularities, we expect that, after doing a {\bf Cremona transformation} on the plane, chosen with two coordinates corresponding to the nodes, and one coordinate chosen with respect to a smooth point on the curve, there is a smooth, irreducible plane cubic curve $\tilde{V}$, birational to the  curve $V$; moreover,  we may take $\tilde{V}$ to be a  desingularization  of  $HC_F$.  (The curve $\tilde{V}$ is also a desingularization of $V$, but, ultimately, our interest is in $HC_F$.)

Let's find such a plane cubic $\tilde{V}$ explicitly, and see how this all works.  

\begin{remark} In this section, we change coordinates a couple of times in the plane:  note that the new coordinate names here have nothing to do with the coordinate names used elsewhere in this paper, in other contexts. \end{remark}

\subsubsection{The Cremona transformation and the plane cubic}

First, we have three  points $A = [1,0,0], B = [0,1,0], C = C(v,d) = [0,1,t]$ on $V$; we have seen that $A,B$ are the two nodes, and $C$ is a smooth point on $V$.  Let us change coordinates using the  linear coordinate change
$$u_0 = v_0, \ \  u_1 = v_1 + v_2,  \  \ u_2 = tv_2,$$ 
so that the polynomial defining $V$ becomes
\begin{equation} X(v_1,v_2)v_0^2 - 2e((v_1+v_2)^2 + t^2 v_2^2) tv_0v_2 + X(v_1,v_2)t^2v_2^2 ,
\end{equation}  where
\begin{align*}
X(v_1,v_2)   &  = dt^2 v_2^2 + 2vt(v_1v_2 + v_2^2) + d(v_1 + v_2)^2  \cr
& = (dt^2 + 2vt + d) v_2^2 + 2(vt +d) v_1v_2 +  dv_1^2  \cr
& = 2(vt+d) v_1v_2 + dv_1^2,
\end{align*} 
since $dt^2 + 2vt + d = 0.$   Recall that $v_0, v_1, v_2$ are  coordinate names and do not refer to components of a velocity vector.   

 The three  points $A,B,C$, in the new coordinates, are
$$A = [1,0,0], \qquad B = [0,1,0],  \qquad C = [0,0,1].$$ 
Applying the rational map (the Cremona transformation) of the plane with coordinates $[v_0,v_1,v_2]$  to the plane with coordinates $[q_0,q_1,q_2]$
$$[v_0,v_1,v_2] \mapsto [1/v_0,1/v_1,1/v_2],$$ the equation defining $\tilde{V}$ is obtained from that defining $V$ by computing the image of the polynomial defining $V$  and ``clearing denominators".  

Starting this process, we begin with 
$$X\left (\frac{1}{q_1}, \frac{1}{q_2} \right) \frac{1}{q_0^2} - 2vt\frac{1}{q_0}\frac{1}{q_2}\left( \left (\frac{1}{q_1} 
	+ \frac{1}{q_2}\right)^2 +  \frac{t^2}{q_2^2} \right)  + X\left(\frac{1}{q_1} , \frac{1}{q_2} \right) \frac{t^2}{q_2^2}.$$

Clearing denominators and collecting terms,  we arrive at the cubic polynomial
$$(2(v t+d)q_1 +  dq_2)q_2^2 - 2vtq_0((q_1+q_2)^2 + t^2 q_1^2) + (2(vt+d)q_1 + dq_2)t^2 q_0^2.$$
In other words, 
$$ \tilde{V}(v,d) = $$
$$\{[q_0,q_1,q_2]  \in \mathbb{C}P^2 \mid (2(vt+d)q_1 +  dq_2)q_2^2 - 2vtq_0((q_1+q_2)^2 + t^2 q_1^2) + (2(vt+d)q_1 + dq_2)t^2 q_0^2=0\}.$$

The curve $\tilde{V}(v,d)$ also depends on the choice of $t$, but we do not denote this dependence.

Note that we have defined changes of coordinates, where the coordinates are  $[u_0, u_1, u_2], [v_0, v_1, v_2]$, and $[q_0, q_1, q_2]$
in the first, second and third planes, respectively:  
$$\mathbb{C}P^2 \stackrel{A}{\longrightarrow} \mathbb{C}P^2 \stackrel{\gamma}{\dashrightarrow} \mathbb{C}P^2$$ 
given by
\begin{align*}
A([u_0,u_1,u_2]) = \left[u_0, u_1-\frac{u_2}{t}, \frac{u_2}{t} \right], \qquad 
\gamma([v_0,v_1,v_2]) = \left[\frac{1}{v_0}, \frac{1}{v_1}, \frac{1}{v_2} \right].
\end{align*}
 The map $A$ is an invertible, projective linear transformation defined everywhere.  
 The map $\gamma$  is defined on the complement of the (standard) coordinate lines in the 
 $[v_0, v_1, v_2]$-plane;  
 moreover, $\gamma$ is invertible where defined and is its own  inverse:
$$[q_0,q_1,q_2] \mapsto \left[\frac{1}{q_0}, \frac{1}{q_1}, \frac{1}{q_2} \right],$$ 
which is also defined on the complement of the coordinate lines in the $[q]$-plane.

\subsubsection{$\tilde{V}(v,d)$ is a desingularization of $HC_F(v,d)$}

Our aim is now to define a map
$$\rho:\tilde{V}(v,d) \rightarrow HC_F(v,d),$$ defined everywhere, that is bijective outside of eight distinct points on $\tilde{V}(v,d)$, and maps pairs of these eight points onto the singular points $\bm p_1,\bm p_2,\bm p_3,\bm p_4$ of $HC_F(e,d)$.  

We already have most of the pieces necessary to define $\rho$.  Give $HC_F(v,d)$ the $[\bm w, x_1]$-coordinates.

Then define $\rho$ on the points $[q]$ of $\tilde{V}(v,d)$ with $q_0q_1q_2 \neq 0$ by
\begin{equation} \label{rhodef} \rho([q_0,q_1,q_2]) 
= [-q_0q_2(q_1+q_2)^2, -q_1q_2^2(q_1 + q_2),\  t^2 q_0^2 q_1(q_1 + q_2),\  t^2 q_0q_1^2 q_2, \  t q_0q_1q_2(q_1 + q_2)].
\end{equation}

Note that even if $v,d$ are real numbers, $\rho$ is not a real map if $t$ is not real; i.e. if $v^2<d^2$.

Of course, the map $\rho$ on $\tilde{V}$ is the restriction of a map defined on $\mathbb{C}P^2-\{[q_0,q_1,q_2] \mid q_0q_1q_2=0\}$; we'll call this map $\rho$ as well.

The reader can check that the image of $\rho$, where defined, is a subset of $HC_F$.  Moreover, there is a map $\hat{\rho}$,
$$\hat{\rho}:HC_F(v,d) \dashrightarrow \tilde{V}(v,d),$$ defined everywhere on $HC_F$ except for $\bm p_1,\bm p_2,\bm p_3,\bm p_4$, given by 
$$\hat{\rho}([w_0,w_1,w_2,w_3,x_1]) = [w_2(tx_1-w_3),\  tw_3x_1, tx_1(tx_1-w_3)].$$

Lemma \ref{HintersectsZ} proves that $\hat{\rho}$ is defined everywhere except at $\bm p_1,\bm p_2,\bm p_3,\bm p_4$.  The reader can check that the image of $\hat{\rho}$ really is a subset of $\tilde{V}(e,d)$.  
The maps $\rho$ and $\hat{\rho}$ 
are constructed using the maps $\alpha, \beta, A$ and $\gamma$; for example, $\rho$  is a rational map corresponding to $\beta \circ A^{-1} \circ \gamma^{-1}$,  where that composite makes sense.  One can check that $\rho$ and $\hat{\rho}$ are inverse to each other, where the composite map makes sense.

It remains to check that, in fact, we can extend the definition of $\rho$ to ALL points of $\tilde{V}(v,d)$, and still get a holomorphic map.  What this means is that we have to examine all the points on $\tilde{V}(v,d)$ where one of the coordinates is equal to zero.  For each of these points for which formula \eqref{rhodef} does not apply,  we have to look at the tangent directions to $\tilde{V}(v,d)$ at each of those points in order to figure out where to send those points.  It's worth presenting the details of that, so we do that in the following.

First, note that the domain of $\rho$ naturally includes some points other than those where $q_0q_1q_2\neq 0$.  For,  if $q_0=0$, then $\rho([0,q_1,q_2]) = [0,-q_1q_2^2(q_1+q_2), 0,0,0], $ so that if $q_1q_2(q_1+q_2) \neq 0$, then
$$\rho([0,q_1,q_2]) = [0,1,0,0,0] = \bm p_3.$$ Similarly, if $q_1=0$ and $q_0q_2 \neq 0$, then 
$$\rho([q_0,0,q_2]) = [1,0,0,0,0] = \bm p_1;$$ if $q_2=0$ and $q_0q_1 \neq 0$, 
$$\rho([q_0,q_1,0]) = [0,0,1,0,0]= \bm p_4.$$

Finally, if $q_1+q_2=0$, but $q_0q_1q_2 \neq 0,$ then
$$\rho([q_0,q_1,-q_1]) = [0,0,0,1,0] = \bm p_2.$$ These calculations help tell us where the points on $\tilde{V}(v,d)$ where $q_0,q_1$ or $q_2 $ equal zero should go.

If $[0,q_1,q_2] \in \tilde{V}(e,d),$ then $(2(vt+d)q_1 + dq_2) q_2^2 = 0$, so that $q_2=0$ or $q_2 = -\frac{2(vt+d)q_1}{d}$ and we get exactly two points on $\tilde{V}$ where $q_0=0$:
$$[0,1,0], \quad \left[0,1, -\frac{2(vt+d)}{d} \right].$$   
Using the equation $dt^2 + 2vt + d = 0$, we may rewrite
$$\frac{2vt + 2d}{d} = \frac{-d-dt^2 + 2d}{d} = 1-t^2 \neq 0,$$ so that the second point above is 
$$[0,1,t^2 -1].$$

If $[q_0,0,q_2] \in \tilde{V}$, then $q_2 = 0$ or $dq_2^2 -2vtq_0q_2 + dt^2 q_0^2 = 0$. In this case, we get exactly three points on $\tilde{V}$ with $q_1=0$:
$$[1,0,0], \quad [r,0,R], \quad [s,0,S],$$ 
where $dR^2 - 2vtrR + dt^2 r^2 = 0, dS^2 - 2vtsS + dt^2 s^2 = 0$. Note that none of $r,R,s,S$ is equal to zero, and $[r,0,R] \neq [s,0,S]$ since $e^2 \neq d^2$, so that we really have three distinct points here.

If $[q_0,q_1,0] \in \tilde{V}$, then $2tq_0q_1(-v(1+t^2) q_1 + (vt+d)tq_0) = 0$ so that we get exactly three points on $\tilde{V}$ with $q_2=0$:
$$[1,0,0],\quad [0,1,0], \quad \left[1,\frac{t(vt+d)}{v(1+t^2)},0 \right].$$
 Since $d^2 \neq v^2$, $t(vt+d) \neq 0$, and there are really three distinct points here.  

In conclusion, we have exactly six points on $\tilde{V}$ where $q_0,q_1$ or $q_2$ is zero:
$$[0,1,0], \quad [0,1, t^2-1], \quad [1,0,0], \quad [r,0,R],\quad [s,0,S], \quad \left[ 1,\frac{t(et+d)}{e(1+t^2)},0 \right],$$

Now, we also need to compute the points on $\tilde{V}$ of the form $[q_0,q_1,-q_1]$.  For such a point, we need
$q_1((2vt+d)q_1^2 - 2vt^3q_0q_1 + (2vt+d)t^2q_0^2) = 0$. Therefore, we get three distinct such points:
$$[1,0,0], \quad [u,U,-U],\quad [z,Z,-Z],$$ 
where $(2vt+d)U^2 - 2vt^3 uU + (2vt+d)u^2 = 0, (2vt+v)Z^2 - 2vt^3zZ + (2vt+d) z^2 = 0$. Again, our assumptions on $v,d,t$ say that the eight points
$$[0,1,0],  \ \  [0,1, t^2-1], \ \  [1,0,0], \ \  [r,0,R],\ \  [s,0,S], \ \  \left[1,\frac{t(vt+d)}{v(1+t^2)},0 \right],  \ \  [u,U,-U], \ \  [z,Z,-Z]$$ are all distinct.

We've seen that
$$[0,1, t^2-1] \stackrel{\rho}{\longmapsto} \bm p_3,  \quad [r,0,R] \stackrel{\rho}{\longmapsto} \bm p_1, \quad [s,0,S] \stackrel{\rho}{\longmapsto} \bm p_1,$$
$$\left[1,\frac{t(et+d)}{e(1+t^2)},0\right] \stackrel{\rho}{\longmapsto} \bm p_4,  \quad [u,U,-U] \stackrel{\rho}{\longmapsto} \bm p_2, 
	\quad [z,Z,-Z] \stackrel{\rho}{\longmapsto} \bm p_2.$$

Thus, the only two points on $\tilde{V}$ on which $\rho$ isn't  defined  by  formula (5.12) are $[0,1,0],[1,0,0]$.  We claim that if we define
\begin{equation}\label{rhoextended} \rho([0,1,0]) = \bm p_3,  \quad \rho([1,0,0]) =\bm  p_4,
\end{equation}  then $\rho$ is holomorphic everywhere on $\tilde{V}$.

Once this is verified, we have
\begin{lemma} The  map $\rho:\tilde{V}(v,d) \rightarrow HC_F(v,d)$, defined by \eqref{rhodef} and \eqref{rhoextended}, is holomorphic and restricts to a bijective map
$$\tilde{V}(v,d) - F \rightarrow HC_F(v,d) - \{\bm p_1,\bm p_2,\bm p_3,\bm p_4\},$$ where
$F = \{[0,1,0], [0,1, t^2-1], [1,0,0], [r,0,R],[s,0,S], [1,\frac{t(vt+d)}{v(1+t^2)},0], [u,U,-U],[z,Z,-Z]\}$ is the set of eight points defined in the above discussion.  

Moreover, for each $j$, the inverse image of $\bm p_j$ with respect to $\rho$ consists of exactly two points and $\tilde{V}(v,d)$ is the  desingularization of $HC_F(v,d)$.

The curve $\tilde{V}(v,d)$ is an irreducible smooth curve of genus 1.
\end{lemma}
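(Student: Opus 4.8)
The plan is to prove the statement in three stages: first that $\tilde{V}(v,d)$ is a smooth irreducible plane cubic of genus $1$; then that, granted this smoothness, $\rho$ extends to a morphism agreeing with \eqref{rhoextended}; and finally that $\rho$ is the normalization map, from which the bijectivity and the two-point fibres follow.

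First I would establish that $\tilde{V}(v,d)$ is smooth of genus $1$, without yet invoking $\rho$. The curve $\tilde{V}(v,d)$ is cut out by the explicit cubic written above, so it is a plane cubic, and it is irreducible because it is the image of the irreducible curve $V(v,d)$ (Lemma \ref{Vedelliptic}) under the birational correspondence $\gamma\circ A$. An irreducible plane cubic has arithmetic genus $\tfrac{(3-1)(3-2)}{2}=1$. Since $\tilde{V}(v,d)$ is birational to $V(v,d)$, whose desingularization has geometric genus $1$ by Lemma \ref{Vedelliptic}, the geometric genus of $\tilde{V}(v,d)$ is also $1$. As any singular point of an irreducible plane cubic would drop the geometric genus strictly below the arithmetic genus, $\tilde{V}(v,d)$ can have no singular point; thus it is smooth, and being a smooth irreducible curve of genus $1$ it is elliptic. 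This settles the last sentence of the Lemma.

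Next, since $\tilde{V}(v,d)$ is a smooth projective curve and $\rho$ is a rational map into the projective variety $HC_F(v,d)\subseteq\mathbb{C}P^4$, the standard extension theorem for rational maps out of smooth curves guarantees that $\rho$ extends uniquely to a morphism defined on all of $\tilde{V}(v,d)$. It remains only to check that this extension takes the values prescribed in \eqref{rhoextended} at the two points $[1,0,0]$ and $[0,1,0]$, where formula \eqref{rhodef} evaluates to the indeterminate $[0,0,0,0,0]$. This is the technical heart of the argument and the step I expect to be the main obstacle: I would work in the affine chart $q_1=1$ near $[0,1,0]$, use the defining cubic to find that the tangent line there is $q_0=0$ (so that $q_2$ is a local parameter and $q_0$ vanishes to order two), substitute $q_0=O(q_2^2)$ into \eqref{rhodef}, and observe that the second coordinate dominates, giving the limit $[0,1,0,0,0]=\bm p_3$; the point $[1,0,0]$ is handled symmetrically, yielding $\bm p_4$. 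These limits confirm \eqref{rhoextended} and hence the holomorphy of $\rho$ everywhere on $\tilde{V}(v,d)$.

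Finally I would identify $\tilde{V}(v,d)$ as the desingularization and deduce the bijection. The discussion preceding the Lemma exhibits $\rho$ and $\hat{\rho}$ as mutually inverse rational maps, so $\rho$ is birational, i.e. of degree one; being a non-constant morphism from a projective curve, it is finite and surjective onto the irreducible curve $HC_F(v,d)$. A finite birational morphism from a smooth curve is the normalization, so $\tilde{V}(v,d)$ is the desingularization of $HC_F(v,d)$, and $\rho$ restricts to an isomorphism over the smooth locus $HC_F(v,d)-\{\bm p_1,\bm p_2,\bm p_3,\bm p_4\}$, those four points being the only singularities of $HC_F(v,d)$ when $d^2\neq 4v^2$ by Lemma \ref{Hedsing}. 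This yields the asserted bijection $\tilde{V}(v,d)-F\to HC_F(v,d)-\{\bm p_j\}$ once we know $\rho^{-1}(\{\bm p_j\})=F$. The explicit computations already carried out show that the eight points of $F$ map onto $\{\bm p_1,\bm p_2,\bm p_3,\bm p_4\}$ two to one; since $F$ is precisely the locus on $\tilde{V}(v,d)$ where some $q_i=0$ or $q_1+q_2=0$, while \eqref{rhodef} sends the complementary open set into the smooth locus on which $\rho$ is injective, $F$ is the entire preimage of the four singular points. Each $\bm p_j$ therefore has exactly two preimages, and the proof is complete.
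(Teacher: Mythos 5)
Your proposal is correct, and it reaches the same conclusions as the paper by a partly different, somewhat more structural route. The technical heart coincides: like the paper, you handle the two points $[0,1,0]$ and $[1,0,0]$ where \eqref{rhodef} degenerates by a local analysis of the cubic (tangent line, order of vanishing, dominant coordinate), and your computation at $[0,1,0]$ — tangent line $q_0=0$, so $q_0=O(q_2^2)$ and the second coordinate dominates, giving $\bm p_3$ — is in fact slightly more careful than the paper's appeal to the values of $\rho$ on the tangent line itself; note only that at $[1,0,0]$ the situation is not literally ``symmetric'' (the tangent line there is $q_2=(t^2-1)q_1$, and it is the \emph{third} coordinate that dominates, giving $\bm p_4$), though the method goes through verbatim. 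Where you genuinely diverge is in the bookkeeping around this computation. First, you \emph{prove} smoothness of $\tilde{V}(v,d)$ by comparing arithmetic genus ($=1$ for an irreducible plane cubic) with geometric genus ($=1$ by birational invariance and Lemma \ref{Vedelliptic}), whereas the paper asserts smoothness from the general theory of Cremona transformations in the discussion preceding the lemma and, in its proof, only applies the genus--degree formula to the ``smooth plane cubic''; your argument fills in something the paper leaves to citation. Second, you obtain holomorphy of $\rho$ from the extension theorem for rational maps from smooth curves to projective varieties, using the local computation only to identify the extension's values, while the paper treats the local limit computation as the entire content of the proof. Third, you deduce the bijectivity and the two-point fibres from the characterization of a finite birational morphism out of a smooth curve as the normalization (together with the observation that points off $F$ have all coordinates, in particular $x_1$, nonzero and hence land in the smooth locus), whereas the paper regards these statements as already established by the explicit mutually-inverse pair $\rho,\hat{\rho}$ and the enumeration of the images of the eight points of $F$. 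Both routes are sound; yours buys a self-contained justification of smoothness and a cleaner conceptual wrap-up, at the cost of invoking normalization theory that the paper's explicit computations avoid.
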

\begin{proof} We need only verify that, using definition \eqref{rhoextended},  $\rho$ is holomorphic at $[0,1,0]$ and $[1,0,0]$.

Examining the Taylor expansion about $(0,0)$  for the polynomial defining $\tilde{V}$  in the affine open set $q_1 \neq 0$, we have the polynomial
$$-2vt(t^2 + 1)q_0 + 2(vt+d)q_2^2 - 4vtq_0q_2 + 2(vt+d)q_0^2 + dt^2 q_0^2 q_2 - 2v q_0q_2^2 + dq_2^3.$$ Thus, the tangent line to the curve at $(0,0)$ is the $q_2$-axis (in the affine $(q_0,q_2)$ plane).  
However, for every point $[0,1,q_2]$  with  $q_2 \neq 0$,  we have  $\rho([0,1,q_2]) = \bm p_3$, so defining $\rho([0,1,0]) =\bm p_3$ gives a map which is holomorphic at $[0,1,0]$.

Examining the Taylor expansion about $(0,0)$ for the polynomial defining $\tilde{V}$ in the affine open set $q_0 \neq 0$, we have the polynomial
$$2(vt+d)t^2 q_1 + dt^2 q_2 - 2vt(t^2+1) q_1^2 - 4vt q_1q_2 - 2v t q_2^2 + 2(vt + d)q_1q_2^2 + d q_2^3.$$ Therefore, the tangent line to the curve at $(0,0)$ is the line
$$q_2 = -\frac{2(vt+d)}{d} q_1 = (t^2-1)q_1$$ in the affine $(q_1,q_2)$-plane.  For points on this tangent line, not equal to $(0,0)$, we see that $\rho$ of such a point is equal to
$$[-a(1+a^2)q_1, \ -a^2(1+a) q_1^2, \ t^2(1+a),\  t^2aq_1,\  ta(1+a)q_1],$$ where $a = t^2-1.$ Letting $q_1$ go to zero, these points must approach
$$[0,0,1,0,0] =\bm  p_4.$$

The last statement follows from applying the genus-degree formula to the smooth plane cubic curve $\tilde{V}(v,d).$
\end{proof}


\section{The FDOA problem and real varieties}	\label{realX}

We first review the notation established in Section \ref{varietydef}.  If $F_1, \ldots, F_k$ are homogeneous polynomials in variables $x_0, \ldots, x_n$,  we have the  projective variety
$V(F_1, \ldots, F_k) \subseteq \mathbb{C}P^n$ which is defined as  the set of solutions to the system of polynomial equations $F_1 =  \ldots = F_k = 0$.  If the coefficients of all the polynomials $F_j$ are real numbers, we say that $V$ is a {\bf real} variety.  In this case,  $V(\mathbb{R})$ is the set of points $[\bxi] \in V$ such that there are real numbers $x_j$ such that
$[\bxi] = [x_0, \ldots, x_n]$.  We then say ``{\bf $V(\mathbb{R})$ is the set of real points of the real variety $V$}".

If $(\bm v,d)$ is a choice of parameters, where the coordinates of $$\bm v = (\bm v_1, \bm v_2) = ((v_{11},v_{12}), (v_{21}, v_{22}))$$ are all real numbers and $d$ is also real, then the varieties 
$$HC_F(\bm v,d), \quad V(\bm v,d)$$ 
from Sections 5,6 and 7 are all real varieties, and it makes sense to talk about their real points
$$HC_F(\bm v,d)(\mathbb{R}),  \quad V(\bm v,d)(\mathbb{R}).$$  

The varieties $\tilde{V}(v,d)$, of Section 7.4.1, however, are only real varieties in the case $0<d^2 < v^2$, for this is the condition that ensures that the scalar $t$ defined therein is a real number.  

In this section, we assume that $\bm v_1, \bm v_2 \in \mathbb{R}^2$, $d \in \mathbb{R}$.  

\subsection{The starting problem}

Let's first go back to the beginning and consider the starting problem in Section 2, from which we have seemingly strayed so far in Section 4, 5,6 and 7.   This is the set of points in $\mathbb{R}^2$ that is the solution set to \eqref{FDOAdef}; giving this set a name, define 
the FDOA isocurve $A_0$ as
$$A_0(\bm v_1, \bm v_2, d) = \left\{\bm y = (y_1,y_2) \in \mathbb{R}^2 \mid  d = \frac{(\bm s_2- \bm y)\cdot \bm v_2}{| \bm s_2 - \bm y |} - \frac{(\bm s_1- \bm y)\cdot \bm v_1}{ |\bm s_1- \bm y |}\right\}.$$  Note that, strictly speaking, the points corresponding to the sensor location $\bm s_1,\bm s_2$ do not lie on $A_0$, since the denominators are zero at those points. 
Using the homeomorphism $\bm y \mapsto [1, \bm y]$ from $\mathbb{R}^2$ to the affine open subset $U_0 \subseteq \mathbb{R}P^2$, we may consider $A_0(\bm v_1,\bm v_2,d)$ as a non-closed subset of $\mathbb{R}P^2$.   Now, the set $A_0 \subseteq \mathbb{R}^2$ is not defined by polynomial equations, so its solution set is not an algebraic variety.  Another problem is that there are variables in the denominators, certainly not a polynomial-like situation.  However, we'll show that $A_0$ is a real piece of a complex algebraic curve $Z$ in $\mathbb{C}P^2$ in this section.

An analogous situation happens in the much simpler case of plane hyperbolas:  if we fix two points $\bm s_1, \bm s_2$ in the plane $\mathbb{R}^2$ and a positive constant $c$, then the set of points $\bm y$ such that
$$c = | \bm s_2-\bm y| - | \bm s_1 - \bm y|$$ is not an algebraic variety, even over the real numbers.  The smallest algebraic variety in $\mathbb{R}^2$ containing this curve is the algebraic variety, a conic section, that we usually 
call a ``hyperbola".  We obtain the polynomial equation defining the conic section in $\mathbb{R}^2$  from the equation above by squaring twice, and then simplifying.  The resulting curve, defined by a polynomial of degree two, has more points that the original curve defined by the equation above.  Fortunately, there are no denominators to worry about in this case of hyperbolas.

So, let's follow a similar process with our set $A_0$, which is a much more complicated curve than a piece of a hyperbola, moreover there are denominators!

Adding the variable $u$, taking us into projective space, 
 we change coordinates to the $[u,\bm y]$ coordinates described in Section 3.1.  Since this is a real, linear coordinate change, it ``keeps real points real".
So we want to take the equation  
\begin{equation} \label{22projectivized} 
d = \frac{(-u-y_1,-y_2)\cdot \bm v_2}{| (-u-y_1,-y_2) |} - \frac{ (u-y_1,-y_2)\cdot \bm v_1}{ | (u-y_1,-y_2) |}, 
\end{equation} 
and first, get rid of the denominators:
\begin{align*}
d | (-u-y_1,-y_2) | &  | (u-y_1,-y_2) | =  \cr
 & ((-u-y_1,-y_2) \cdot \bm v_2)| (u-y_1,-y_2) |- ((u-y_1,-y_2) \cdot \bm v_1)  | (-u-y_1,-y_2) |.
 \end{align*}
    Set 
\begin{equation}	\label{Ldef}
L_1 = (u-y_1,-y_2) \cdot \bm v_1,  \qquad L_2 = (-u-y_1,-y_2) \cdot \bm v_2.
\end{equation}

Next, square both sides, arriving at
$$L_2^2 f_1 -2L_1L_2 | (u-y_1,-y_2) || (-u-y_1,-y_2) | + L_1^2 f_2 = d^2 f_1 f_2,$$ where

\begin{equation}	\label{fdef}
 f_1 = (u-y_1)^2 + y_2^2, \quad  f_2 = (u+y_1)^2 + y_2^2.
 \end{equation}
   Solving for the part that still involves square roots,
we get
$$2L_1L_2| (u-y_1,-y_2) || (-u-y_1,-y_2) |  = L_2^2 f_1 + L_1^2 f_2 - d^2f_1 f_2,$$ squaring again, we get
$$4L_1^2 L_2^2 f_1 f_2 = (L_2^2 f_1 + L_1^2 f_2-d^2f_1f_2)^2$$ 
 so that we obtain the ``FDOA polynomial" 
\begin{equation} \label{polyZ} 
h(u,y_1,y_2) \doteq (L_2^2 f_1 + L_1^2 f_2-d^2f_1f_2)^2 - 4L_1^2 L_2^2 f_1 f_2 .
\end{equation} 
This FDOA polynomial 
appears to be a homogeneous polynomial of degree 8 in the variables $u,y_1,y_2,$ with real coefficients.   
We have shown that if  $(y_1,y_2) $  is on the FDOA isocurve $A_0$,  or, equivalently, if 
$[1,y_1,y_2] \in \mathbb{R}P^2$  is a solution to \eqref{22projectivized} (thus not equal to $[1,1,0],[1,-1,0]$), 
 then $[1,y_1,y_2]$ is also a solution to 
 $h(1,y_1,y_2) = 0$.   Of course, $h(1,1,0) = h(1,-1,0) = 0$ so we see that $Z$ contains two points that $A_0$ does not:  points corresponding to the sensor points.  As we shall see, there are many points $[1,y_1,y_2] $, not equal to $[1, \pm 1,0]$,  such that $h(1,y_1,y_2) = 0$, but $(y_1,y_2) \notin A_0$.

 We cannot 
 simplify the polynomial $h$ unless we know more about its coefficients, which are nonlinear polynomial expressions in the parameters $v_{11},v_{12},v_{21},v_{22},d$.  This nonlinearity means that the family of curves $\mathcal{Z}$, defined by the curves $Z(\bm v,d)$ described below, is not a linear family in $\mathbb{C}P^2$.

Define a   variety in $\mathbb{C}P^2$, whether $\bm v,d$ are real parameters or not,  by
$$Z(\bm v,d)  = \{ [u,y_1,y_2] \in \mathbb{C}P^2 \mid h(u,y_1,y_2) = 0\};$$ 
 if $\bm v_1,\bm v_2$ are real vectors and $d \in \mathbb{R}$, this is a real variety and the set of real points in this variety is
 the real FDOA variety 
$$Z(\bm v,d) (\mathbb{R})= \{[u,y_1,y_2] \in \mathbb{R}P^2 \mid h(u,y_1,y_2) = 0\}.$$   In this case,  the calculations above show that, after the change of variables to $[u,y_1,y_2]$,  $A_0(\bm v,d)$  may be regarded as a subset of  an open subset of $Z(\bm v,d)(\mathbb{R})$ using the homeomorphism $(y_1,y_2) \mapsto [1,y_1,y_2]$.

\subsection{The projection $\rho:HC_F(\bm v,d) \rightarrow Z(\bm v,d)$}

In this section, we discuss in detail the correspondence between the solutions of the FDOA equation \eqref{FDOAdef}, 
which correspond as above to points of the real FDOA variety $Z(\bm v,d)(\mathbb{R})$,  and the points of the 
 Ho-Chen FDOA variety $HC_F(\bm v,d)$ studied in earlier sections. 
 
Consider the projection $\rho : HC_F(\bm v,d) \rightarrow \mathbb{C}P^2$, given in the original coordinates $[u,\bm y, r_1, r_2]$ by
$$\rho([u,y_1,y_2,r_1,r_2]) = [u, y_1,y_2].$$ This projection is defined everywhere on $HC_F$.   The image of $\rho$ is contained in $Z(\bm v,d)$, since if $[u,y_1,y_2,r_1,r_2] \in HC_F$, then $f_1(u,y_1,y_2) = r_1^2, \ f_2(u,y_1,y_2) = r_2^2$ and $L_2(u,y_1,y_2)r_1-L_1(u,y_1,y_2) r_2 - dr_1r_2 = 0$.  Therefore, following exactly the same ``squaring" process on the equation $L_2(u,y_1,y_2)r_1-L_1(u,y_1,y_2) r_2 - dr_1r_2 = 0$ as in Section 8.1, we must have
$$h(u,y_1,y_2) = 0.$$  

If $\bm v,d$ are real, then when restricted to the real points of $HC_F$,
$$\rho : HC_F(\bm v,d)(\mathbb{R}) \rightarrow Z(\bm v,d)(\mathbb{R}).$$

We show below that $\rho$ is a bijection between a certain subset of $HC_F(\bm v, d)$ and  a certain subset of $Z(\bm v, d)$. 

\subsubsection{The projection $\rho:HC_F(\bm v,d) \rightarrow Z(\bm v,d)$ is surjective}	\label{sec:rhosurjective}

\begin{lemma} \label{rhosurjective}  The map $\rho:HC_F(\bm v,d)\rightarrow Z(\bm v, d)$ is surjective.  Moreover, if $\bm v_1,\bm v_2$ are real vectors, and $d \in \mathbb{R}$, then 
$$\rho:HC_F(\bm v,d)(\mathbb{R}) \rightarrow Z(\bm v,d)(\mathbb{R})$$ is surjective.
\end{lemma}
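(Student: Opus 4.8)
The plan is to lift an arbitrary point of $Z(\bm v,d)$ to $HC_F(\bm v,d)$ by choosing suitable square roots for the extra coordinates $r_1,r_2$. Fix $P=[u,y_1,y_2]\in Z(\bm v,d)$, so that $h(u,y_1,y_2)=0$ with $h$ the FDOA polynomial \eqref{polyZ}. By the defining equations of $Y(Q_1,Q_2)$ and of $\tilde Q$, a preimage of $P$ under $\rho$ is a point $[u,y_1,y_2,r_1,r_2]$ satisfying $r_1^2=f_1(u,y_1,y_2)$, $r_2^2=f_2(u,y_1,y_2)$ and $L_2r_1-L_1r_2-dr_1r_2=0$, where $f_1,f_2$ are as in \eqref{fdef} and $L_1,L_2$ as in \eqref{Ldef}. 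So first I would fix once and for all a square root $r_1$ of $f_1$ and a square root $r_2$ of $f_2$ in $\mathbb{C}$; the four candidate lifts are then $(\epsilon_1 r_1,\epsilon_2 r_2)$ with $\epsilon_1,\epsilon_2\in\{+1,-1\}$, and the task reduces to showing that the FDOA condition holds for at least one sign choice.

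The key step is an identity that runs the squaring derivation of Section 8.1 backwards. Writing $G(\epsilon_1,\epsilon_2)=L_2\epsilon_1 r_1-L_1\epsilon_2 r_2-d\,\epsilon_1\epsilon_2 r_1r_2$ for the value of $\tilde Q$ at the candidate lift $(\epsilon_1 r_1,\epsilon_2 r_2)$, a direct computation (pairing the factor indexed by $(+,+)$ with the one indexed by $(-,-)$, and $(+,-)$ with $(-,+)$) gives
\begin{equation*}
\prod_{\epsilon_1,\epsilon_2\in\{\pm 1\}} G(\epsilon_1,\epsilon_2)=\bigl(A-2L_1L_2 r_1 r_2\bigr)\bigl(A+2L_1L_2 r_1 r_2\bigr)=A^2-4L_1^2L_2^2 f_1 f_2,
\end{equation*}
where $A=L_2^2 f_1+L_1^2 f_2-d^2 f_1 f_2$ and I have used $r_1^2=f_1$, $r_2^2=f_2$. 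The right-hand side is exactly $h(u,y_1,y_2)$. Since $P\in Z(\bm v,d)$ forces $h(u,y_1,y_2)=0$, at least one factor $G(\epsilon_1,\epsilon_2)$ vanishes; that choice of signs produces $r_1'=\epsilon_1 r_1$, $r_2'=\epsilon_2 r_2$ with $(r_1')^2=f_1$, $(r_2')^2=f_2$ and $\tilde Q=0$, so $[u,y_1,y_2,r_1',r_2']\in HC_F(\bm v,d)$ and projects onto $P$. This establishes surjectivity over $\mathbb{C}$.

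For the real statement I would observe that, when $\bm v_1,\bm v_2,d$ are real and $P=[u,y_1,y_2]\in Z(\bm v,d)(\mathbb{R})$, one may take a real representative $(u,y_1,y_2)$. Then by \eqref{fdef} the numbers $f_1=(u-y_1)^2+y_2^2$ and $f_2=(u+y_1)^2+y_2^2$ are nonnegative real numbers, hence admit \emph{real} square roots $r_1,r_2$, and $L_1,L_2,d$ are real as well. Consequently all four values $G(\epsilon_1,\epsilon_2)$ are real, their product is still $h(u,y_1,y_2)=0$, and the vanishing factor yields a lift whose coordinates are all real. Thus $[u,y_1,y_2,r_1',r_2']\in HC_F(\bm v,d)(\mathbb{R})$ projects onto $P$, proving the real surjectivity.

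The product identity itself is a routine if slightly fiddly computation, so the only genuinely delicate point is the nonnegativity of $f_1,f_2$ that makes real square roots available in the real case; I expect that to be the crux of the second assertion. A minor point worth checking is that degenerate lifts with $r_1=0$ or $r_2=0$ (the points over the sensor positions) cause no trouble, since the argument is phrased entirely through the polynomial $\tilde Q$ and performs no division.
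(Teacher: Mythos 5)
Your proof is correct and follows essentially the same route as the paper's: choose square roots $R_1,R_2$ of $f_1,f_2$, observe that $h(u,y_1,y_2)$ factors as the product of the four values of $\tilde Q$ over the sign choices $(\pm R_1,\pm R_2)$, and conclude that one factor must vanish, with the real case handled exactly as you do via the nonnegativity of $f_1,f_2$ for real representatives. The only cosmetic difference is that you organize the factorization by pairing $(+,+)$ with $(-,-)$ and $(+,-)$ with $(-,+)$, whereas the paper writes out the four factors $g_1g_2g_3g_4$ directly; the content is identical.
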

\begin{proof}
Start with a point $[u,y_1,y_2] \in Z(\bm v, d)$.  Then, we may choose numbers $R_1,R_2$ such that 
$R_1^2 = f_1(u,y_1,y_2), R_2^2 = f_2(u,y_1,y_2)$; if $u,y_1,y_2 \in \mathbb{R}$, since $f_j(u,y_1,y_2)$ are always nonnegative real numbers, for $j=1,2$,  then $R_1,R_2$ are real numbers.

Substituting these values into \eqref{polyZ}, we get 
\begin{align}	
0 &= (L_2^2 R_1^2 + L_1^2 R_2^2-d^2R_1^2R_2^2)^2 - 4L_1^2 L_2^2 R_1^2 R_2^2  \cr
& = (L_2R_1-L_1R_2-dR_1R_2)(L_2(- R_1)-L_1(-R_2) - d(-R_1)(- R_2))   \cr
&\qquad  \cdot (L_2 (-R_1) - L_1 R_2 - d(-R_1)R_2)(L_2R_1 - L_1(- R_2) - dR_1(- R_2))  \label{factor} \\
& = g_1(u,y_1,y_2,R_1,R_2)g_1(u,y_1,y_2,-R_1,-R_2) g_1(u,y_1,y_2,-R_1,R_2)g_1(u,y_1,y_2,R_1,-R_2) \cr
& \doteq g_1 g_2 g_3 g_4    \nonumber 
\end{align}
where  $L_j = L_j(u,y_1,y_2)$ are given by \eqref{Ldef},
$$g_1(u,y_1,y_2,R_1,R_2) \doteq L_2 R_1 - L_1 R_2 - dR_1R_2$$  and  the $g_j(u,y_1, y_2,R_1, R_2), j >1$ are defined by the last equation. 

Therefore, given our choice of $R_1,R_2$, one of the four  quantities $g_1,g_2,g_3,g_4$ is zero when evaluated at $(u,y_1,y_2,R_1,R_2)$.

If $g_1(u,y_1,y_2,R_1,R_2) = 0$, we are done:  $[u,y_1,y_2,R_1,R_2] \in HC_F(\bm v,d)$ by definition.

If $g_1(u,y_1,y_2,-R_1,-R_2) = 0$, then one can compute that $[u,y_1,y_2,-R_1,-R_2] \in HC_F(\bm v,d)$; if $g_1(u,y_1,y_2,-R_1,R_2) = 0,$ then $[u,y_1,y_2,-R_1,R_2] \in HC_F(\bm v,d)$ and finally if $g_1(u,y_1,y_2,R_1,-R_2) = 0$, then $[u,y_1,y_2,R_1,-R_2] \in HC_F(\bm v,d)$.  

Since any of these points $[u,y_1,y_2, \pm R_1,\pm R_2]$ maps to $[u,y_1,y_2]$ via $\rho$, $\rho$ must be surjective. 
\end{proof}

Now, using Lemma \ref{rhosurjective}, and Corollary \ref{Hirred}, since $HC_F(\bm v,d)$ is irreducible for generic choices of $\bm v,d$, and $Z(\bm v,d)$ is the image of $HC_F(\bm v, d)$ with respect to the projection $\rho$, we must have that $Z(\bm v,d)$ is also irreducible for generic choices of $\bm v,d$.  Thus, generically, $Z$ cannot have the lines $L_1$ or $L_2$ as a component.  This tells us that, generically, neither $L_1$ nor $L_2$ is a polynomial factor of the FDOA polynomial $h$.  In the next section, we will refine this and give precise constraints on the parameters $\bm v,d$ that guarantee that neither $L_1$ nor $L_2$ is a factor of $h$.

 We can also consider  the map (with the same name and formula) $\rho:Y(Q_1,Q_2) \rightarrow \mathbb{C}P^2$, defined everywhere on $Y$.  The proof above of Lemma \ref{rhosurjective} shows us that the set-theoretic inverse image of $Z(\bm v,d)$ with respect to $\rho:Y(Q_1,Q_2) \rightarrow \mathbb{C}P^2$ is a union of four curves, one of which is $HC_F(\bm v,d)$, on $Y(Q_1,Q_2)$.  The other three curves have third defining equations (other than $Q_2 = 0,Q_1=0$)
$$L_2(-r_1) - L_1(-r_2) - d(-r_1)(-r_2) = 0,$$
$$L_2(-r_1)-L_1r_2-d(-r_1)r_2 = 0,$$ or
$$ L_2r_1-L_1(-r_2)-dr_1(-r_2)=0,$$ respectively, in $\mathbb{C}P^4$ (coordinates here are $[u,y_1,y_2,r_1,r_2]$).

Let's call these four curves $HC_F(\bm v,d) \doteq HC_F^{++}, HC_F^{--}, HC_F^{-+}, HC_F^{+-},$ respectively.  There's a projective  automorphism of $\mathbb{C}P^4$ of the form $[u,y_1,y_2,r_1,r_2] \mapsto [u, y_1,y_2,\pm r_2, \pm r_2]$ that maps any one of these four curves to any other of the four, so the curves are all isomorphic. They are not mutually disjoint however.

\subsubsection{The fibres of $\rho$} \label{sec:Gchar}

Here, we consider the map $\rho:HC_F(\bm v,d) \rightarrow Z(\bm v,d)$ and, for a fixed point $[u,y_1,y_2] \in Z$, we set ourselves the problem of determining how many points are in the fibre $\rho^{-1}(\{[u,y_1,y_2]\})$.  We don't want to consider every single choice of parameters $\bm v,d$ for brevity's sake.  So:  we assume that $\bm v_1, \bm v_2$ are both nonzero vectors, and $d \neq 0$ as well.  Then, Lemma \ref{rhosurjective} tells us that there is at least one point in the fibre.

Let $G(\bm v,d) = \{[u,y_1,y_2] \in \mathbb{C}P^2 \mid L_1(u,y_1,y_2)L_2(u,y_1,y_2) = 0\}.$  Since  $\bm v_1,\bm v_2$ are nonzero vectors,  $L_1$ and $L_2$ are nonzero  homogeneous polynomials of degree 1, and  $L_1L_2$ has degree 2.    

\begin{definition} \label{noLfactors} Given parameters $[\bm v, d] \in \mathbb{C}P^4$ with $\bm v_1 \neq \bm 0, \bm v_2 \neq \bm 0, d \neq 0$, consider the following additional requirements:
\begin{itemize}
\item $v_{11}^2 + v_{12}^2 \neq 0,$ and $v_{21}^2 + v_{22}^2 \neq 0$ 
\item $v_{21} \neq 0$ or $ v_{11}^2 -d^2 \neq 0$ 
\item $v_{11} \neq 0$ or $v_{21}^2 -d^2 \neq 0$.
\end{itemize}
\end{definition}

\begin{lemma} \label{linearfactorsh} If the parameters $\bm v,d$ satisfy all conditions in \ref{noLfactors},  then neither $L_1$ nor $L_2$ is a polynomial factor of the FDOA polynomial $h$.  Moreover, if any one of the three conditions in Definition \ref{noLfactors} is not true, then one of $L_1$ or $L_2$ is a polynomial factor of $h$.
\end{lemma}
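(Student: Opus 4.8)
The plan is to exploit the fact that $L_1$ and $L_2$ are nonzero linear forms (since $\bm v_1,\bm v_2 \neq \bm 0$), hence irreducible in the UFD $\mathbb{C}[u,y_1,y_2]$. Thus $L_j \mid h$ if and only if $h$ restricts to the zero polynomial on the line $\{L_j = 0\} \subset \mathbb{C}P^2$. First I would substitute $L_1 = 0$ directly into the formula \eqref{polyZ} for $h$: the cross term and the $-4L_1^2L_2^2 f_1 f_2$ term drop out, leaving $h|_{L_1=0} = f_1^2 (L_2^2 - d^2 f_2)^2$, and symmetrically $h|_{L_2=0} = f_2^2(L_1^2 - d^2 f_1)^2$. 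Since the restriction of polynomials to a line lands in an integral domain, $L_1 \mid h$ iff $f_1$ or $L_2^2 - d^2 f_2$ vanishes identically on $\{L_1=0\}$, and likewise for $L_2$.

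Next I would analyze these two vanishing conditions for $L_1$ separately. Over $\mathbb{C}$ one factors $f_1 = (u - y_1 + iy_2)(u - y_1 - iy_2)$, and since $L_1 = v_{11}(u-y_1) - v_{12}y_2$ is linear (see \eqref{Ldef}, \eqref{fdef}), $L_1 \mid f_1$ exactly when $L_1$ is proportional to one of these factors, which a short computation shows is equivalent to $v_{11}^2 + v_{12}^2 = 0$. For the second condition, the key observation is that $L_2^2 - d^2 f_2$ depends only on $u+y_1$ and $y_2$, i.e.\ it is the pullback of the binary quadratic form $Q(p,q) = (v_{21}p + v_{22}q)^2 - d^2(p^2+q^2)$ under the linear projection $\pi\colon (u:y_1:y_2) \mapsto (u+y_1 : y_2)$ from the point $(1:-1:0)$. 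The restriction of $\pi$ to $\{L_1=0\}$ is an isomorphism onto $\mathbb{C}P^1$ or collapses the line to one point, according as $(1:-1:0)\notin\{L_1=0\}$ or $(1:-1:0)\in\{L_1=0\}$; since $L_1(1,-1,0) = 2v_{11}$, this dichotomy is precisely $v_{11} \neq 0$ versus $v_{11} = 0$. When $v_{11}\neq 0$, vanishing forces $Q \equiv 0$, whose coefficient equations imply $d = 0$, excluded by hypothesis, so no factor arises. When $v_{11}=0$ the line maps to the single point $(1:0)$ and vanishing reduces to $Q(1,0) = v_{21}^2 - d^2 = 0$. Hence $L_1 \mid h$ iff $v_{11}^2+v_{12}^2 = 0$, or else $v_{11}=0$ and $v_{21}^2 = d^2$.

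By the symmetric argument, interchanging the indices $1 \leftrightarrow 2$ and projecting from $(1:1:0)$, where $L_2(1,1,0) = -2v_{21}$, one obtains $L_2 \mid h$ iff $v_{21}^2 + v_{22}^2 = 0$, or else $v_{21}=0$ and $v_{11}^2 = d^2$. Finally I would compare these equivalences with Definition \ref{noLfactors}: the negation ``$L_1 \nmid h$'' reads ``$v_{11}^2+v_{12}^2\neq 0$ and ($v_{11}\neq 0$ or $v_{21}^2 - d^2 \neq 0$)'', which is the first half of the first bullet together with the third bullet; similarly ``$L_2\nmid h$'' is the second half of the first bullet together with the second bullet. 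Therefore neither $L_1$ nor $L_2$ divides $h$ iff all three bullets hold, and if any single bullet fails then reading the relevant equivalence backwards exhibits one of $L_1,L_2$ as a factor, which gives both the lemma and its ``moreover''.

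The main obstacle is genuinely the complex setting: over $\mathbb{C}$ the condition $v_{11}^2 + v_{12}^2 = 0$ is strictly weaker than $\bm v_1 = \bm 0$, so it cannot be waved away, and it is exactly what detects $L_1 \mid f_1$. Matching the three stated bullets requires tracking this complex degeneracy together with the $v_{11}=0$ collapse of the projection $\pi$. Everything else, namely the two substitutions and the binary-form coefficient checks, is routine, and the hypothesis $d \neq 0$ is what rules out the vanishing of $Q$ in the non-collapsing case and so keeps $L_1, L_2$ from dividing $h$ for most parameters.
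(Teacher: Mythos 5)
Your proposal is correct and takes essentially the same route as the paper's proof: both reduce, using irreducibility of the nonzero linear forms, to deciding exactly when $L_1$ divides $f_1$ or $L_2^2 - d^2 f_2$ (and symmetrically for $L_2$), and both settle these questions by the same case split on $v_{11}$ (resp.\ $v_{21}$), exploiting that $L_2^2 - d^2 f_2$ involves only $u+y_1$ and $y_2$. Your restriction-to-the-line and projection-from-$(1:-1:0)$ language is a geometric rephrasing of the paper's change of variables $T = u-y_1$, $S = u+y_1$, with the same resulting characterizations; if anything, your explicit check that $Q \equiv 0$ forces $d=0$ handles the degenerate case $L_2^2 - d^2 f_2 \equiv 0$ a bit more carefully than the paper does.
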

\begin{proof}  First, note that the conditions $v_{11}^2 + v_{12}^2 \neq 0, v_{21}^2 + v_{22}^2 \neq 0$ are equivalent to $\bm v_1 \neq \bm 0, \bm v_2 \neq \bm 0$ for real vectors $\bm v_j$.

Recall that \eqref{Ldef} defines $L_1,L_2$ as
$$L_1 = (u-y_1,-y_2) \cdot \bm v_1 = v_{11}(u-y_2)-v_{12}y_2,   L_2 = (-u-y_1,-y_2) \cdot \bm v_2 = -v_{21}(u+y_1) -v_{22}y_2,$$
while
\ref{fdef} defines $f_1,f_2$ as
$$ f_1 = (u-y_1)^2 + y_2^2,   f_2 = (u+y_1)^2 + y_2^2.$$

Then, the ``FDOA polynomial"  \eqref{polyZ}
is
$$h(u,y_1,y_2) \doteq (L_2^2 f_1 + L_1^2 f_2-d^2f_1f_2)^2 - 4L_1^2 L_2^2 f_1 f_2 .$$  Since $L_1,L_2$ are nonzero linear polynomials, they are irreducible as polynomials. 

Examining $h$, we see then that $L_1$ is a factor of $h$ if and only if $L_1$ divides $(L_2^2-d^2f_2)f_1$, in other words, if and only if $L_1$ divides $f_1$
or $L_1$ divides $L_2^2 -d^2f_2$.  Similarly, $L_2$ is a factor of $h$ if and only if $L_2$ divides $f_2$ or $L_2$ divides $L_1^2-d^2f_1$.

Changing variables to $T = u-y_1, S = u+y_1, y_2 = y_2$, $f_1 = T^2 + y_2^2, f_2 = S^2 + y_2^2, L_1 = v_{11}T -v_{12}y_2, L_2 = -v_{21}S -v_{22}y_2.$ Since the only factors of $f_1$ are $T \pm \sqrt{-1}y_2$,  we must have $v_{12} = \pm \sqrt{-1} v_{11},$ or $v_{11}^2 + v_{12}^2 = 0$ in order for $L_1$ to divide $f_1$.  Similarly, in order for $L_2$ to divide $f_2$, we must have $v_{22} = \pm \sqrt{-1} v_{21},$ or $v_{21}^2 + v_{22}^2 = 0$.

Consider $L_2^2 -d^2f_2 = (v_{21}^2-d^2)S^2 + 2v_{21}v_{22}Sy_2 + (v_{22}^2-d^2)y_2^2$.  If $v_{11} \neq 0$, then clearly $v_{11}T-v_{12}y_2$ can't divide $L_2^2-d^2f_2$, since $T$ does not appear in this polynomial.  If $v_{11} = 0$, then $v_{12} \neq 0$ and  the only way $v_{12}y_2$ can divide $L_2^2-d^2f_2$ is if $v_{21}^2-d^2 = 0$.  Similarly, the only way that $L_2$ can divide $L_1^2 -d^2 f_1$ is if $v_{21} = 0$ and $v_{11}^2-d^2 =0$.

\end{proof}
\begin{corollary}\label{bezout} With the hypotheses of Lemma \ref{linearfactorsh}, there are at most 16 distinct points in $Z(\bm v,d) \cap G(\bm v,d)$. 
\end{corollary}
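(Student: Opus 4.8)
The plan is to deduce this directly from Bézout's theorem applied to the two plane curves $Z(\bm v,d) = \{h = 0\}$ and $G(\bm v,d) = \{L_1 L_2 = 0\}$ in $\mathbb{C}P^2$. Recall that $h$ is homogeneous of degree $8$ and that $L_1 L_2$ is homogeneous of degree $2$ (since $L_1, L_2$ are nonzero linear polynomials under the standing hypotheses $\bm v_1 \ne \bm 0, \bm v_2 \ne \bm 0$). The form of Bézout I would invoke is the statement that if two homogeneous polynomials $F, G$ in three variables have no common nonconstant factor, then $\{F = 0\}$ and $\{G = 0\}$ meet in at most $\deg F \cdot \deg G$ distinct points of $\mathbb{C}P^2$. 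Taking $F = h$ and $G = L_1 L_2$ then yields at most $8 \cdot 2 = 16$ points, which is exactly the asserted bound.

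The single hypothesis of this Bézout statement that must be checked is that $h$ and $L_1 L_2$ share no common factor. Here the earlier work does all the heavy lifting. Because $L_1$ and $L_2$ are nonzero linear forms they are irreducible, so the only candidates for a common factor of $h$ and $L_1 L_2$ are $L_1$ and $L_2$ themselves; equivalently, $h$ and $L_1 L_2$ have a common factor if and only if $L_1 \mid h$ or $L_2 \mid h$. Under the hypotheses of Lemma \ref{linearfactorsh} (the conditions recorded in Definition \ref{noLfactors}), that lemma asserts precisely that neither $L_1$ nor $L_2$ is a polynomial factor of $h$. Consequently $\gcd(h, L_1 L_2) = 1$, and Bézout applies to give the desired bound.

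There is essentially no remaining obstacle, since the substantive content — excluding $L_1$ and $L_2$ as factors of $h$ — was already established in Lemma \ref{linearfactorsh}. The only minor points I would record for completeness are that $\deg h \le 8$ is all Bézout requires (any drop in degree caused by cancellation of top-order terms would only tighten the bound), and that $h$ is not the zero polynomial, which itself follows from $L_1 \nmid h$. These ensure that $Z(\bm v,d)$ is a genuine plane curve and that $Z(\bm v,d) \cap G(\bm v,d)$ is a finite set, so that counting distinct points is meaningful and the corollary follows.
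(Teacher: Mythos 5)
Your proof is correct and follows the same route as the paper, whose entire proof is ``Apply Bezout's Theorem''; you have simply made explicit the details the paper leaves implicit (that $\deg h = 8$, $\deg L_1L_2 = 2$, and that Lemma \ref{linearfactorsh} rules out the only possible common factors $L_1, L_2$, so the curves share no component). No changes needed.
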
 
\begin{proof}  Apply Bezout's Theorem.
\end{proof}

 We return to the consideration of the relationship between $HC_F$ and $Z$, using the projection $\rho$.

\begin{lemma}  \label{fibresrho} If $\bm v_j \neq \bm 0, j = 1,2$ and $d \neq 0$, then for $[u,y_1,y_2] \notin Z(\bm v,d) \cap G(\bm v,d)$ there is exactly one point in $HC_F(\bm v,d)$ that maps to $[u,y_1,y_2]$ via $\rho$.
\end{lemma}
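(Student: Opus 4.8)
The plan is to reduce the computation of a fibre to counting solutions of an explicit system in the two variables $r_1,r_2$, and then to exploit the three hypotheses $\bm v_1\neq\bm 0$, $\bm v_2\neq\bm 0$, $d\neq 0$ together with the assumption $[u,y_1,y_2]\notin G(\bm v,d)$. Fix a representative $(u,y_1,y_2)$ of the given point of $Z(\bm v,d)$. Since the first three coordinates are fixed and not all zero, two points $[u,y_1,y_2,r_1,r_2]$ and $[u,y_1,y_2,r_1',r_2']$ of $\mathbb{C}P^4$ coincide if and only if $(r_1,r_2)=(r_1',r_2')$; hence $\rho^{-1}([u,y_1,y_2])$ is in bijection with the set of pairs $(r_1,r_2)\in\mathbb{C}^2$ satisfying
\[
 r_1^2=f_1(u,y_1,y_2),\qquad r_2^2=f_2(u,y_1,y_2),\qquad E(r_1,r_2)=0,
\]
where $f_1,f_2$ are as in \eqref{fdef} and $E(r_1,r_2)\doteq L_2r_1-L_1r_2-dr_1r_2$ with $L_1,L_2$ as in \eqref{Ldef}; these are exactly the equations $Q_1=Q_2=\tilde{Q}=0$ defining $HC_F$ in the original coordinates. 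Existence of at least one such pair is precisely Lemma \ref{rhosurjective}, so it remains to prove uniqueness.

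For uniqueness I would take two solutions $(r_1,r_2)$ and $(r_1',r_2')$. From the first two equations $r_1'^2=r_1^2$ and $r_2'^2=r_2^2$, so $r_1'=\epsilon_1 r_1$ and $r_2'=\epsilon_2 r_2$ with $\epsilon_1,\epsilon_2\in\{1,-1\}$. Since $[u,y_1,y_2]\notin G(\bm v,d)$ means $L_1\neq 0$ and $L_2\neq 0$ at this point, I would rule out each sign-flip possibility by \emph{adding} the equation $E=0$ evaluated at the two sign choices:
\begin{itemize}
\item If $\epsilon_1=-1,\ \epsilon_2=1$, then $E(r_1,r_2)+E(-r_1,r_2)=-2L_1r_2=0$, forcing $r_2=0$ (as $L_1\neq0$); then $E(r_1,0)=L_2r_1=0$ forces $r_1=0$ (as $L_2\neq0$), so $r_1'=r_1$ and the two points coincide.
\item If $\epsilon_1=1,\ \epsilon_2=-1$, then $E(r_1,r_2)+E(r_1,-r_2)=2L_2r_1=0$, forcing $r_1=0$, and then $E(0,r_2)=-L_1r_2=0$ forces $r_2=0$, again a coincidence.
\item If $\epsilon_1=\epsilon_2=-1$, then $E(r_1,r_2)+E(-r_1,-r_2)=-2dr_1r_2=0$; here $d\neq0$ gives $r_1r_2=0$, and whichever of $r_1,r_2$ vanishes forces the other to vanish via $E=0$ and $L_1,L_2\neq0$, so the two points again coincide.
\end{itemize}
In every case the two solutions are equal, so the fibre has at most one point; combined with Lemma \ref{rhosurjective}, it has exactly one.

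The step I expect to carry the weight is the case analysis above, and in particular the degenerate situations in which $f_1$ or $f_2$ vanishes (so that $r_1$ or $r_2$ is forced to be $0$): these are precisely where a naive ``four sign choices'' count breaks down and where each hypothesis $L_1\neq0$, $L_2\neq0$, $d\neq0$ must be invoked. It is worth confirming that these are exactly the hypotheses recorded in the statement, namely $\bm v_j\neq\bm 0$ (guaranteeing $L_j$ is a genuine nonzero linear form, so that $G$ is the degree-two curve $L_1L_2=0$), $[u,y_1,y_2]\notin G$ (guaranteeing $L_1,L_2$ do not vanish at the point), and $d\neq 0$. One could alternatively route the argument through the factorization $h=g_1g_2g_3g_4$ from the proof of Lemma \ref{rhosurjective}, showing that $[u,y_1,y_2]\notin G$ and $d\neq0$ prevent two of the factors from vanishing simultaneously; the addition-of-equations argument above is a more uniform repackaging that avoids separating the nondegenerate and degenerate cases.
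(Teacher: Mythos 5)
Your proof is correct and follows essentially the same route as the paper's: both reduce the fibre to sign choices $r_1'=\epsilon_1 r_1$, $r_2'=\epsilon_2 r_2$ and then use $L_1\neq 0$, $L_2\neq 0$, and $d\neq 0$ to force all nontrivial sign flips into the degenerate point $(r_1,r_2)=(0,0)$, with existence supplied by Lemma \ref{rhosurjective}. The only differences are cosmetic: you normalize the representative up front instead of carrying the paper's explicit scalar $\lambda$, and in the degenerate sub-cases you conclude directly from $\tilde{Q}=0$ where the paper detours through the defining equation $h=0$ of $Z$; both are equally valid.
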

\begin{proof}
 To see how this works,  choose a point $[u,y_1,y_2] \in Z(\bm v,d) $ such that $L_1(u,y_1,y_2) \neq 0, L_2(u,y_1,y_2) \neq 0$.  Suppose that $[u,y_1,y_2,a,b], [\tilde{u}, \tilde{y}_1,\tilde{y}_2, r,s] \in HC_F(\bm v,d) $ both map to $[u,y_1,y_2]$ via $\rho$.  Then, there is a $\lambda \neq 0$ (if all coordinates are real, $\lambda$ may be chosen to be real) such that
$$\tilde{y}_j = \lambda y_j, \quad \tilde{u} = \lambda u.$$ 
In this case, $\lambda^2[(u-y_1)^2 + y_2^2] = r^2 = \lambda^2 a^2, \ \lambda^2[(u+y_1)^2 + y_2^2] = s^2 = \lambda^2 b^2,$ by definition of $HC_F$.  Therefore, $r = \epsilon_1 \lambda a, s =  \epsilon_2 \lambda b,$ where $\epsilon_j  = \pm 1$.  Since both points lie on $HC_F$, we must have
\begin{equation} \label{epsilon1} L_2(u,y_1,y_2)a-L_1(u,y_1,y_2) b - dab = 0, \end{equation}
$$ L_2(\lambda u, \lambda y_1, \lambda y_2) \epsilon_1 \lambda a - L_1(\lambda u, \lambda y_1, \lambda y_2) \epsilon_2 \lambda b - d \epsilon_1\epsilon_2 \lambda^2 ab =0.$$ After multiplying by $\epsilon_1$ and dividing by $\lambda^2$, the second equation simplifies to
\begin{equation}\label{epsilon2} 
L_2( u,  y_1,  y_2)   a - L_1(u,  y_1,  y_2) \epsilon_1 \epsilon_2 b - d \epsilon_2 ab =0,
\end{equation} 
since $\lambda \neq 0$.

  If $\epsilon_1=\epsilon_2 = 1$, then $[u, y_1,y_2, a,b] = [\lambda u, \lambda y_1, \lambda y_2, \lambda a, \lambda b] = [\tilde{u}, \tilde{y}_1, \tilde{y}_2, r,s]$ as desired.

If $\epsilon_1 = 1, \epsilon_2 = -1$, \eqref{epsilon1},\eqref{epsilon2} become
$$L_2 a-L_1b -dab= 0, \qquad  L_2 a +L_1 b +dab = 0,$$ 
which implies that $L_2a = 0$.  Since we are assuming that $(u,y_1,y_2)$ is such that $L_2(u,y_1,y_2) \neq 0$, this means that $a = r=0$.  Thus,  $f_1 = (u-y_1)^2 + y_2^2 = 0$, so that the equation for $[u,y_1,y_2]$ to be on $Z$ is
$$(L_1^2f_2)^2 = 0.$$  Since $L_1 \neq 0$, we must have $f_2= 0$, or $b=s=0$.  Therefore, again, our two points are the same:  $[u,y_1,y_2,0,0]$.
.

Similarly, if $\epsilon_1 = -1, \epsilon_2 = 1$, we deduce that $a=r=b=s=0$ and our two points are the same: $[u,y_1,y_2,0,0]$.

Finally, if $\epsilon_1=-1,\epsilon_2 = -1$,  then $dab = 0$; this means that $a=r=0$ or $b=s=0$; and just as in the two previous paragraphs, either case implies the other and our two points are the same: $[u,y_1,y_2,0,0]$.

Since the computations did not depend on whether we are considering real points or not, the statement about the real points, when the parameters are real, has been proved as well.
\end{proof} 

\begin{corollary}  With the hypotheses of Lemma \ref{fibresrho},  there are at most 16 points on $Z(\bm v,d)$ above which there are multiple points in the fibre of $\rho:HC_F(\bm v,d) \rightarrow Z(\bm v,d)$
\end{corollary}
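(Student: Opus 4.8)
The plan is to obtain this as an immediate consequence of the fibre description in Lemma~\ref{fibresrho} together with the Bézout bound of Corollary~\ref{bezout}; no new computation is needed, only a careful bookkeeping of where multiple preimages can occur. The guiding observation is that Lemma~\ref{fibresrho} has already localized the phenomenon: for every $[u,y_1,y_2]\in Z(\bm v,d)$ lying \emph{outside} $Z(\bm v,d)\cap G(\bm v,d)$ the fibre $\rho^{-1}([u,y_1,y_2])$ is a single point. Hence the whole corollary reduces to bounding the cardinality of $Z(\bm v,d)\cap G(\bm v,d)$.

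First I would make the containment explicit. Let $B$ be the set of those $[u,y_1,y_2]\in Z(\bm v,d)$ whose fibre $\rho^{-1}([u,y_1,y_2])$ contains two or more points. If such a point lay outside $Z(\bm v,d)\cap G(\bm v,d)$, Lemma~\ref{fibresrho} would force its fibre to be a singleton, a contradiction; therefore $B\subseteq Z(\bm v,d)\cap G(\bm v,d)$. Applying Corollary~\ref{bezout}, which gives $\#\bigl(Z(\bm v,d)\cap G(\bm v,d)\bigr)\le 16$, yields $\#B\le 16$, as claimed.

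The point that genuinely needs care, and which I expect to be the main obstacle, is the compatibility of hypotheses. Corollary~\ref{bezout} is proved under the hypotheses of Lemma~\ref{linearfactorsh}, i.e.\ the conditions of Definition~\ref{noLfactors}, because the Bézout count $8\cdot 2=16$ is only valid when the degree-$8$ curve $Z(\bm v,d)$ and the degree-$2$ curve $G(\bm v,d)=\{L_1L_2=0\}$ have no common component---equivalently, when neither $L_1$ nor $L_2$ divides the FDOA polynomial $h$. Under the strictly weaker hypotheses of Lemma~\ref{fibresrho} alone ($\bm v_1,\bm v_2\neq\bm 0$, $d\neq 0$), Lemma~\ref{linearfactorsh} shows that one of $L_1,L_2$ may in fact be a factor of $h$, so that $Z(\bm v,d)\cap G(\bm v,d)$ contains a whole line and is infinite, and the present method does not deliver the bound. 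The honest reading is therefore that the corollary should carry (at least) the extra requirements of Definition~\ref{noLfactors}; with those in force the containment $B\subseteq Z(\bm v,d)\cap G(\bm v,d)$ and the finite Bézout count combine to give the stated bound immediately.
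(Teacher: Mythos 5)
Your proof is correct and is essentially the paper's own (implicit) argument: by Lemma \ref{fibresrho} every point with more than one preimage lies in $Z(\bm v,d)\cap G(\bm v,d)$, and Corollary \ref{bezout} bounds that intersection by $16$. Your caveat about the hypotheses is also well founded, and it is a genuine correction rather than pedantry: the paper states the corollary under only the hypotheses of Lemma \ref{fibresrho}, but the B\'ezout count is valid only under the conditions of Definition \ref{noLfactors}, and if one of those fails --- for instance $v_{21}=0$ and $v_{11}^2=d^2$ with all parameters real and nonzero --- then $L_2$ divides $h$, the line $L_2=0$ is a common component of $Z(\bm v,d)$ and $G(\bm v,d)$, and a direct check shows that a generic point $[u,y_1,0]$ of that line has the two preimages $[u,y_1,0,-L_1/d,\pm\sqrt{f_2}]$, so the conclusion itself (not merely the method) fails without the extra requirements.
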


Note that there are definitely points on $Z \cap G$ with multiple points in the fibre:  for example, considering $[1,1,0] \in Z \cap G$, two of the singular points $\bm p_j$ map to $[1,1,0]$.

Also, since $L_1,L_2$ are linear, for any given $\bm v,d$ actually finding the points in $Z(\bm v,d) \cap G(\bm v,d)$ amounts to solving quadratic equations, given the nature of the polynomial $h$ that defines $Z$.


\subsection{Summary of the relationship of $HC_F(\bm v, d)$ to $Z(\bm v, d)$}	\label{realSummary}

Looking back at Sections \ref{sec:rhosurjective} - \ref{sec:Gchar}, 
we've seen that, under the assumptions $d \neq 0$, $\bm v_1 \neq \bm 0, \bm v_2 \neq \bm 0$,  we know that 
$$\rho:HC_F(\bm v,d) \rightarrow Z(\bm v,d)$$ is surjective and there is a finite set of at most 16 points $\mathcal{F}(\bm v,d) \doteq Z(\bm v,d) \cap G(\bm v,d)$ on $Z(\bm v,d)$ such that for every point $[u,y_1,y_2]$ of $Z$, not equal to one of these points in $\mathcal{F}$, there is a unique point on $HC_F(\bm v,d)$ which maps to $[u,y_1,y_2]$ via $\rho$. 
  This is independent of whether or not $(\bm v,d)$ has real coordinates.  
  
  Moreover, if $\bm v, d$ are real, and $\mathcal{F}(\mathbb{R})$ is the set of real points in $\mathcal{F}$ (there are at least two of these, namely $[1,1,0],[1,-1,0]$) then
$$\rho:HC_F(\mathbb{R})  \rightarrow Z(\bm v,d)(\mathbb{R})$$ is surjective and the fibre over every point of $Z(\mathbb{R})-\mathcal{F}$ has exactly one point on it.  Thus, we may conclude

\begin{theorem}  If $\bm v_j \neq 0, j=1,2,$ $d \neq 0$ and $\mathcal{F}(\bm v,d)= Z(\bm v,d) \cap G(\bm v,d)$ then $\mathcal{F}$ is a finite set of points, and $\rho$ induces a bijective map
$$\rho:HC_F(\bm v,d) -\rho^{-1}(\mathcal{F}(\bm v,d)) \rightarrow Z(\bm v,d) - \mathcal{F}(\bm v,d).$$  

If $\bm v,d$ are real, then there is a bijective map
$$\rho:HC_F(\bm v,d)(\mathbb{R}) -\rho^{-1}(\mathcal{F}(\bm v,d)(\mathbb{R})) \rightarrow Z(\bm v,d)(\mathbb{R}) - \mathcal{F}(\bm v,d)(\mathbb{R}).$$  

\end{theorem}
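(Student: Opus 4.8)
The plan is to assemble the theorem from three facts already in hand: the surjectivity of $\rho$ (Lemma \ref{rhosurjective}), the single-fibre description away from $G$ (Lemma \ref{fibresrho}), and the B\'ezout bound (Corollary \ref{bezout}). The bulk of the bijection statement is pure bookkeeping once these are available, so I would treat the finiteness of $\mathcal{F}$ and the bijection as two separate tasks with genuinely different hypothesis requirements.

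First I would handle finiteness of $\mathcal{F} = Z(\bm v,d)\cap G(\bm v,d)$. Since $G(\bm v,d)$ is the degree-$2$ curve $\{L_1L_2=0\}$, whose irreducible components are the two lines $L_1=0$ and $L_2=0$, while $Z(\bm v,d)$ is cut out by the degree-$8$ polynomial $h$, the intersection $Z\cap G$ is finite exactly when $Z$ and $G$ share no common component, i.e.\ when neither $L_1$ nor $L_2$ divides $h$. By Lemma \ref{linearfactorsh} this non-divisibility is guaranteed under the conditions of Definition \ref{noLfactors}, and then Corollary \ref{bezout} gives $|\mathcal{F}|\le 16$. I would flag here that the bare hypotheses $\bm v_j\neq\bm 0,\ d\neq 0$ do \emph{not} by themselves force finiteness: if one of the conditions of Definition \ref{noLfactors} fails, then some $L_i$ divides $h$, the line $\{L_i=0\}$ becomes a common component, and $\mathcal{F}$ contains infinitely many points. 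Thus the finiteness clause should be read as carrying those extra non-degeneracy conditions, and this is the one place where the summary hypotheses need supplementing.

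Next I would establish the bijection, which needs only $\bm v_j\neq\bm 0$ and $d\neq 0$. The restricted map $\rho:HC_F(\bm v,d)-\rho^{-1}(\mathcal{F})\to Z(\bm v,d)-\mathcal{F}$ is well defined because the image of $\rho$ lies in $Z(\bm v,d)$ and a point outside $\rho^{-1}(\mathcal{F})$ maps outside $\mathcal{F}$. It is surjective by Lemma \ref{rhosurjective}: any $q\in Z-\mathcal{F}$ has a preimage $p\in HC_F$, and $\rho(p)=q\notin\mathcal{F}$ forces $p\notin\rho^{-1}(\mathcal{F})$. It is injective because any $q\in Z-\mathcal{F}=Z-(Z\cap G)$ satisfies $L_1(q)L_2(q)\neq 0$, so $q\notin G$, and then Lemma \ref{fibresrho} says the fibre $\rho^{-1}(q)$ consists of a single point. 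These three observations together give the bijection over $\mathbb{C}$.

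Finally I would run the identical argument over $\mathbb{R}$ using the real halves of Lemmas \ref{rhosurjective} and \ref{fibresrho}. The only extra point to verify is that the unique complex point over a real $q\in Z(\mathbb{R})-\mathcal{F}(\mathbb{R})$ is itself real; this is immediate since $HC_F(\bm v,d)$ and $\rho$ are defined over $\mathbb{R}$, so complex conjugation permutes the singleton fibre and must fix its unique element (equivalently, the real surjectivity in Lemma \ref{rhosurjective} directly supplies a real preimage, which must coincide with the unique complex one). The hard part, as noted, is not the assembly of the bijection but the finiteness clause and its honest dependence on Definition \ref{noLfactors}, since ruling out $L_1,L_2$ as factors of $h$ is exactly what separates a genuinely finite $\mathcal{F}$ from one containing a whole line.
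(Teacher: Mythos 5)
Your proposal is correct and follows essentially the same route as the paper, which likewise obtains the theorem by assembling the surjectivity of $\rho$ (Lemma \ref{rhosurjective}), the unique-fibre statement away from $G$ (Lemma \ref{fibresrho}), and the B\'ezout bound (Corollary \ref{bezout}). Your flag about the finiteness clause is a legitimate refinement rather than a deviation: the bare hypotheses $\bm v_j \neq \bm 0$, $d \neq 0$ do support the bijection via Lemma \ref{fibresrho}, but the finiteness of $\mathcal{F}$ rests on Corollary \ref{bezout}, whose hypotheses are the conditions of Definition \ref{noLfactors}, a dependence the paper's summary paragraph glosses over.
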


Though we have not presented all details of all calculations, given $\bm v,d$, we can figure out exactly which points on $Z$ have inverse images with more than one point on $HC_F$.  Specifying these points just means using the quadratic formula; moreover, figuring out which points are real, and which are not, will be discovered using the quadratic formula as well.  

\subsection{The singularities of $Z$} 
In this section, we do not assume that $\bm v, d$ are real.
\begin{example} \label{examplesection7} For the examples parameterized by $v,d$ from Section 7, using software (e.g. the subprogram {\bf algcurves} in Maple) one can see that, generically, $Z$ has exactly 6 singular points:
$$[1,1,0], [1,-1,0], [0,1, \pm \sqrt{-1}], [1,0, \pm \sqrt{-1}];$$ furthermore, generically,  the singularity type of each of these points is (one could use the documentation on the {\bf algcurves} subprogram in Maple to find the definitions of ``multiplicity" and ``$\delta$-invariant"):
\begin{itemize}
\item $[1,1,0], [1,-1,0]$ are points of multiplicity 4 and $\delta$-invariant 8.
\item $[1,0, \pm \sqrt{-1}], [0,1, \pm \sqrt{-1}]$ are points of multiplicity 2 and $\delta$-invariant 1.
\end{itemize}

The genus-degree formula yields the genus of the (desingularization) as
$$\frac{(8-1)(8-2)}{2} - 8 - 8 - 1-1-1-1 = 1;$$ and again we have an ``elliptic" curve.

These calculations are not true for every choice of parameters $v,d$ though; for example, 
\begin{itemize}
\item if $d^2 = v^2$, we get the same singularities, but of slightly different type.  However, the genus calculation is the same.
\item If $d^2 = 4v^2$, there is one more singularity, and the curve is of genus 0: a rational curve.
\end{itemize}
\end{example}

Let's return to the general case of $Z(\bm v,d)$ and do calculations without software.  We  assume that the lines defined by $L_1 = 0$, $L_2 = 0$ are not components of $Z$. We do not want to consider cases where we know that $Z$ is reducible; thus, we assume the conditions given in Definition \ref{noLfactors}.

Our assumptions here are:
\begin{enumerate}
\item $d \neq 0,$
\item $v_{11}^2 + v_{12}^2 \neq 0, v_{21}^2 + v_{22}^2 \neq 0,$
\item $v_{21} \neq 0$ or $v_{11}^2 -d^2 \neq 0,$ 
\item $v_{11} \neq 0$ or $v_{21}^2 -d^2 \neq 0.$
\end{enumerate}

We don't assume that ${\bf v}_j$ are real vectors unless explicitly stated.

Recall that $Z({\bf v}, d)$ is defined by
$$h \doteq (L_2^2 f_1 + L_1^2 f_2 - d^2 f_1 f_2)^2 - 4 L_1^2 L_2^2 f_1 f_2 = 0,$$
where 
$$L_1 = v_{11}(u-y_1) - v_{12}y_2,  \qquad L_2 = - v_{21}(u + y_1) - v_{22}y_2,$$
$$f_1 = (u-y_1)^2 + y_2^2, \qquad  f_2 = (u+y_1)^2 + y_2^2.$$

We use the notation $\partial g$ to denote any partial derivative of $g$, if $g$ is a function of $u, y_1,y_2$.

Define $q \doteq L_2^2 f_1 + L_1^2 f_2 - d^2 f_1 f_2.$
Now, 
$$\partial h = 2 q \partial q - 4[L_1^2 L_2^2 f_1 \partial f_2 + L_1^2 L_2^2 f_2 \partial f_1 +  L_2^2 f_1 f_2 \partial(L_1^2) + L_1^2 f_1 f_2 \partial(L_2^2)] $$
$$ = 2q \partial q - 4[L_1^2 L_2^2 f_1  \partial f_2 + L_1^2 L_2^2 f_2 \partial f_1 + 2L_1 L_2^2 f_1 f_2 \partial L_1 + 2 L_2 L_1^2 f_1 f_2 \partial L_2].$$

We first consider points where $f_1 = f_2 = 0$.   At such a point, we must have $h = 0$,  $q = 0$ and also $\partial h = 0$.  Therefore, these points are always singular points of $Z$; it's not hard to see that this gives us four singular points
$$[1,0, \pm \sqrt{-1}], [0,1, \pm \sqrt{-1}].$$  For real vectors ${\bf v}_j$, these points will never show up on the real plot, however, they must be taken into account in order to determine the nature of the complex projective curve.   This curve in turn gives geometric information about its real part.

Next, consider points $[u,y_1,y_2]$ such that $f_1 = 0, L_1 = 0$, or $f_2 = 0, L_2 = 0$. At such a point, note that $h = 0$ and $q = 0$; also $\partial h = 0$.  Thus, any such point is a singular point.

In the case of a point where $L_1 = 0, f_1 = 0$,  since $(u-y_1)^2 + y_2^2 = 0,$ $u-y_1 = \pm \sqrt{-1} y_2$. Therefore, $L_1 = 0$ means that
$$\pm \sqrt{-1}v_{11}y_2 + v_{12} y_2 = 0.$$ 
If $y_2 = 0$, then $u = y_1$, and we have the point $[1,1,0]$. If $y_2 \neq 0$, then $ \pm \sqrt{-1}v_{11} + v_{12} = 0$, which implies that $v_{11} ^2 + v_{12}^2 = 0$, and we are assuming this is not true.  This case yields one singular point:  $[1,1,0]$.

Similarly, at a point $[u,y_1,y_2]$ where $L_2 = f_2 = 0$, we have that $y_1 = -u$ and $y_2 = 0$, yielding the point $[1,-1,0]$ or, if $y_2 \neq 0$, we have $\pm \sqrt{-1} v_{12}  + v_{22} = 0$.  This can't happen since $v_{21}^2 + v_{22}^2 \neq 0$. This case yields one singular point: $[1,-1,0]$.

\subsubsection{ $L_1 = 0, L_2^2 = d^2 f_2$}
We suppose that $[u, y_1,y_2]$ is a point where $L_1 = 0, L_2^2 = d^2 f_2$.  In this case,
$$v_{11} (u-y_1) - v_{12}y_2 = 0,$$ so that $v_{11}(u-y_1) = v_{12}y_2.$

Also,
$$0 = L_2^2 - d^2 f_2 = (-v_{21}(u+y_1) - v_{22}y_2)^2- d^2((u+y_1)^2 + y_2^2).$$ Therefore,

\begin{equation} \label{qe1}  0 = (v_{21}^2- d^2)(u+y_1)^2  + 2 v_{21}v_{22}(u+y_1)y_2 + (v_{22}^2-d^2) y_2^2 \doteq p. \end{equation} 

The polynomial $p$ on the right side of the equation above is a nonzero homogeneous quadratic polynomial in $u, y_1, y_2$ (it's nonzero as a polynomial because if $v_{2j}^2 = d^2 $ for $j = 1,2$, then using our assumption $d \neq 0$,  for $j = 1,2$, $v_{2j} \neq 0$,  so the cross term is nonzero).  

Since $L_1 = v_{11}(u-y_1) -v_{12}y_2 $ is a nonzero linear homogenous polynomial in $u, y_1, y_2$, we either have $L_1$ dividing $p$ as a polynomial, or we get at most two solutions $[u, y_1, y_2]$ to the system $L_1 = 0, L_2^2 - d^2 f_2 = 0$.    However, our hypotheses do not allow $L_1$ to divide $p$ as a polynomial.

Therefore, we've found at most two new singular points here.  Generically, we get two distinct points, and in order to compute exactly what these points are we need only apply the quadratic formula.  

\begin{example}However, note that, for example, if $v_{11} = 0 = v_{21}$ and $v_{12} = v_{22} = v \neq 0$ (Example \ref{examplesection7}) then our quadratic $p$ is
$$-d^2 (u+ y_1)^2 + (v^2 - d^2) y_2^2 = 0$$ and $L_1$ is
$$-vy_2 = 0.$$  Since $v \neq 0$, $y_2 = 0$.    Therefore, since $d \neq 0$, we get the point $[1, -1, 0]$ only.  This point has already been found, so we don't really find two new points here: in fact, for this case, no new points are found.  
\end{example}

For the purposes of understanding which real plots are useful to see, let's now assume that ${\bf v}_j$ are real, $j = 1,2$.  
In this case, equation \eqref{qe1} is a quadratic with discriminant
$$\Delta = 4d^2(v_{21}^2 + v_{22}^2 - d^2).$$  Thus, the singularities obtained here are real if and only if
$$v_{21}^2 + v_{22}^2 - d^2 \geq 0,$$ and there are exactly two real singularities if and only if we have a strict inequality.  It's not difficult to write down exactly the singularities:  first use the quadratic formula to determine the solutions to equation \eqref{qe1} (which will give two linear relationships between $u+y_1$ and $y_2$), and then use 
$$L_1  = 0$$ to find the points.

\subsubsection{$L_2 = 0, L_1^2 = d^2 f_1$}
 In this case, the system to solve is:
 $$v_{21}(u+y_1) + v_{22}y_2 = 0$$ and
 \begin{equation} \label{qe2}
 (v_{11}^2 - d^2) (u-y_1)^2 - 2 v_{11}v_{12}(u-y_1)y_2 + (v_{12}^2 - d^2) y_2^2 = 0.
 \end{equation}
Similarly, analyzing this as in the previous section, we get at most two new singular points here.  Generically, we get two new (i.e., not found in any previous section), distinct points.

Note that, again, in the non-generic case ${\bf v}_1 = (0, v) = {\bf v}_2$, we get no new singular points, obtaining only the solution $[1,1,0]$.  

Also, for the purposes of understanding which real plots are useful to see, let's now assume that ${\bf v}_j$ are real, $j = 1,2$.  In this case, equation \eqref{qe2} is a quadratic with discriminant
$$\Delta = 4d^2(v_{11}^2 + v_{12}^2 - d^2).$$  Thus, the singularities obtained here are real if and only if
$$v_{11}^2 + v_{12}^2 - d^2 \geq 0,$$ and there are exactly two real singularities if and only if we have a strict inequality.  It's not difficult to write down exactly the singularities:  first use the quadratic formula to determine the solutions to equation \eqref{qe2} (which will give two linear relationships between $u-y_1$ and $y_2$), and then use 
$$L_2  = 0$$ to find the points.

\subsubsection{Conclusion}

Generically, we have found at least 10 distinct singular points for $Z$.  Moreover, we've explicitly determined a minimum collection of conditions, using calculations (or directions on how to finish such calculations),  which ``genericity" must include.

However, for example, note that in the case ${\bf v}_1 = (0, v) = {\bf v}_2$ of  Example \ref{examplesection7}, only 6 distinct points have been found.  

In any case, we've found between 6 and 10 distinct singular points for $Z$.
\begin{remark}
Although we do not present  a complete proof, we claim that, generically,  $Z$ is irreducible and there are exactly 10 distinct singular points on $Z$, which can be explicitly found using the previous sections.  Note that we can't invoke Bertini's Theorems for the family $\mathcal{Z}$ since this is not a linear system of plane curves.

Furthermore, 2 of these points have multiplicity 4 and $\delta$ value equal to $\frac{4(4-1)}{2} = 6$, and for the remaining 8 points, the multiplicity is 2, with $\delta$ value  1.  This gives the genus as
$$21-(12+8) = 1.$$
\end{remark}

\begin{remark} In the case ${\bf v}_1 = (0, v) = {\bf v}_2$, we claim to have exactly 6 singular points on $Z$, for generic choices of $v,d$; the $\delta $ value for 2 of these points is 8, and for the remaining 4 points, the $\delta$ value is 1.  This gives the genus as
$$21-(16+4) = 1.$$
\end{remark}

\subsection{The FDOA isocurve $A_0$ and the FDOA variety $Z (\bm v,d)$}

Now, we think about our starting point, 
namely the non-algebraic FDOA isocurve  $A_0(\bm v,d)$.  We can identify $A_0(\bm v,d)$ with its homeomorphic image in $\mathbb{R}P^2$, using the identification $(y_1,y_2) \mapsto [1,y_1,y_2]$ of $\mathbb{R}^2$ with the affine open set $U_0$  in $\mathbb{R}P^2$.   [We recall that $U_0$ was  defined in \eqref{R2inP2}].

Recall from \eqref{Ldef}  that 
$$L_2 = (-1-y_1,-y_2)\cdot \bm v_2, \quad L_1 = (1-y_1,-y_2) \cdot \bm v_1$$ 
and define functions $g_j(\bm v,d):\mathbb{R}^2 \rightarrow \mathbb{R}$, $1 \leq j \leq 4$ as in \eqref{factor}  by 

\begin{definition}\label{gs}
\begin{align} 
g_1(\bm v,d) (y_1,y_2) &= L_2R_1- L_1 R_2 - d R_1R_2 ,\cr
g_2(\bm v,d)(y_1,y_2) &= L_2(-R_1)- L_1 (-R_2) - d(- R_1)(-R_2),\cr
g_3(\bm v,d)(y_1,y_2) &= L_2(-R_1)- L_1R_2 - d(- R_1)R_2 ,\cr
g_4(\bm v,d)(y_1,y_2)& = L_2R_1- L_1(-  R_2) - d R_1(-R_2);
\end{align}
\end{definition}
where 
\begin{equation} 	\label{Rdef}
R_1 = |(1-y_1, -y_2)| = \sqrt{(1-y_1)^2 + y_2^2}, \quad  R_2 = |(1+y_1, -y_2)| = \sqrt{(1+y_1)^2 + y_2^2}.
\end{equation}
Here the quantities formerly denoted by $r_j$ are now denoted by $R_j$ because they are no longer independent variables and instead are now non-algebraic functions of $y_1,y_2$. 

Then, define non-algebraic subsets $A_{\pm,\pm}(\bm v,d)$ of $\mathbb{R}^2 = U_0 \subseteq \mathbb{R}P^2$ by
\begin{definition} \label{Aplusminus}  
$$A_{++}(\bm v,d) \doteq \{(y_1,y_2) \in \mathbb{R}^2 \mid g_1(y_1,y_2) = 0\},  \quad 
A_{--}(\bm v,d) = \{(y_1,y_2) \in \mathbb{R}^2 \mid g_2(y_1,y_2) = 0\},$$
$$A_{-+}(\bm v,d) = \{(y_1,y_2) \in\mathbb{R}^2  \mid g_3(y_1,y_2) = 0\},  \quad 
A_{+-}(\bm v,d) = \{(y_1,y_2) \in \mathbb{R}^2 \mid g_4(y_1,y_2) = 0\}.$$
\end{definition} 
  Here the subscripts  $\pm$  are used because they correspond exactly to the choices of positive or negative square roots for \eqref{Rdef}.

Using \eqref{factor} and the definition \eqref{R2inP2} of $U_0$, 
$$Z(\bm v,d)(\mathbb{R})  \cap U_0 \cong  A_{++}(\bm v,d) \cup A_{--}(\bm v ,d) \cup A_{-+}(\bm v,d)   \cup A_{+-}(\bm v,d).$$ 
  Here $U_0$ specifies exactly which open piece of $\mathbb{R}P^2$  our non-algebraic FDOA curve $A_0$ lies in and ``$\cong$" refers to the identification of $U_0$ with $\mathbb{R}^2$.

So, after identifying $\mathbb{R}^2$ with $U_0 \subset \mathbb{R}P^2$,  $$A_0(\bm v,d) \cong A_{++}(\bm v,d) - \{(1,0),(-1,0)\}.$$ 
This makes explicit the precise relationship between the FDOA isocurve $A_0$ and the real FDOA variety $Z(\bm v, d)(\mathbb{R})  \subset U_0 \subset \mathbb{R}P^2 \subset \mathbb{C}P^2$.   In particular, we see that  the $U_0$ part of $Z(\mathbb{R})$ (a real part of a complex algebraic curve) breaks up into four nonalgebraic pieces, namely the $A_{\pm,\pm}$.  
This shows exactly how  $ \mathbb{R}P^2$ can be used to study the FDOA isocurve $A_0$.   
 We have also seen, via  section \ref{realSummary},  exactly how points on the Ho-Chen FDOA variety $HC_F(\bm v,d)(\mathbb{R})$ are related to points on the FDOA isocurve $A_0(\bm v,d)$.

Thus  we have shown in this section that there is a complex algebraic variety $Z(\bm v, d) \subset \mathbb{C}P^2$,  whose real points $Z(\bm v, d)(\mathbb{R})$ contain,  as a proper subset, a non-algebraic curve $A_{++}$, lying in $U_0$, that is (mostly) bijective with $A_0$.  Moreover, we've shown exactly how to obtain the ``other" points of $Z(\mathbb{R})$, that are not on $A_0$, thought of as a subset of $U_0$ using the explicit homeomorphism $\mathbb{R}^2 \cong U_0$.


\subsection{Some plots for $Z(\mathbb{R})$}
This section  considers only the cases ${\bf v}_1 = (0, \pm v), {\bf v}_2 = (0, \pm v)$, with $v \neq 0, d \neq 0$, $v,d \in \mathbb{R}$.   One should not consider these plots representative of what happens for other choices of $\bm v_j$.

 We should choose coordinates to measure ${\bf v}_j$; so, we assume that $v >0$.  Thus our choices of velocities are 
 \begin{definition}\label{velchoice} Velocity choices for this section.
\begin{enumerate}
\item ${\bf v}_1 = (0,v), {\bf v}_2 = (0,v)$
\item ${\bf v}_1 = (0,-v), {\bf v}_2 = (0,-v)$
\item ${\bf v}_1 = (0,v), {\bf v}_2 = (0,-v)$
\item ${\bf v}_1 = (0,- v), {\bf v}_2 = (0,v).$
\end{enumerate}
\end{definition} 
 In the following,  set $\alpha=d/v $.  The Cauchy-Schwarz bound \ref{CSbound} tells us that $-2 \leq \alpha \leq 2$.  Note that the real parameter $\alpha$ is of course not the same as the function $\alpha$ considered in section \ref{alphabeta}.
 
We use the $[u, y_1,y_2]$ coordinates, and look only in the chart where $u = 1$.    For all pairs ${\bf v}_j, j = 1,2$ in the list above, the equation defining $Z$  in this chart is 
$$h \doteq (v^2y_2^2(f_1 + f_2)- d^2f_1f_2)^2 - 4 d^4 y_2^4f_1f_2 = 0$$   or, equivalently,
$$(y_2^2(f_1 + f_2)- \alpha^2f_1f_2)^2 - 4 \alpha^4 y_2^4f_1f_2 = 0,$$ where
$$f_1 = (1-y_1)^2 + y_2^2, \qquad f_2 = (1 + y_1)^2 + y_2^2.$$ 
This defines a family of plane algebraic curves $Z(\alpha)$ in $\mathbb{R}^2$.

Given $(y_1,y_2) \in \mathbb{R}^2$, define real numbers
$$R_1 = \sqrt{(1-y_1)^2 + y_2^2}, \qquad R_2 = \sqrt{(1+y_1)^2 + y_2^2}.$$ 
These are nonnegative square roots by definition. 

Using Definition \ref{gs}, one sees that
$$g_1((0,v),(0,v), d) \doteq (-vy_2)R_1 - (-vy_2)R_2 - dR_1R_2 = vy_2(R_2-R_1) - dR_1R_2$$
$$g_2((0,v),(0,v), d) \doteq vy_2(R_1-R_2) - dR_1 R_2 = g_1((0,-v),(0,-v), d)$$
$$g_3((0,v),(0,v), d) \doteq vy_2R_1  + vy_2R_2+dR_1R_2) = -g_1((0,-v),(0,v), d)$$
$$g_4((0,v),(0,v), d) \doteq -vy_2R_1 - vy_2R_2 + dR_1 R_2 =  -g_1((0,v),(0,-v), d).$$

Using \ref{factor},   $Z$ is also defined by
$$ g_1((0,v), (0,v), d) g_2((0,v), (0,v), d)g_3((0,v), (0,v), d)g_4((0,v),(0,v), d) = 0$$ or
$$g_1((0,v), (0,v), d)g_1((0,-v), (0,-v), d)g_1((0,v), (0,-v), d)g_1((0,-v), (0,v), d) = 0.$$

Define
$A_j((0,v), (0,v), d)$ to be the solution set for $g_j((0,v), (0,v), d) = 0$, $1 \leq j \leq 4$.  These curves were previously labelled labeled $A_{++}, A_{- -}, A_{- +}, 
A_{+ -}$.  

We know that $Z = A_1 \cup A_2 \cup A_3 \cup A_4$.  None of the curves $A_j = A_j((0,v), (0,v), d)$ is an algebraic curve.  

\subsubsection{Plots}
In the plots, each $A_j((0,v), (0,v), d)$ is given a different color;  in order, Red, DarkGreen, NavyBlue, Teal.

  Note that for $j \leq 2$, $A_j((0,v),(0,v), d)$ is equal to some $A_1$ for another pair in Definition \ref{velchoice}.  Thus, even if we are only interested in the behavior of $A_1$ for a pair of vectors in Definition \ref{velchoice},  it really is useful to see the plots of each $A_j((0,v),(0,v), d)$ as these are plots of $A_1$ for a different choice of vectors in \ref{velchoice}.

In terms of $\alpha$, $A_j((0,v),(0,v), d)$ is the zero set of
$$\tilde{g}_j(\alpha) = \frac{g_j((0,v),(0,v),d)}{v};$$
for example,
$$\tilde{g}_1(\alpha) = y_2(R_2-R_1) - \alpha R_1 R_2.$$

Thus the multicolored curve $Z$ is the union of the colored pieces (in this chart, and for real values of $y_1,y_2$ only).

\begin{remark}. Some comments on the plots.  
\begin{enumerate}
\item All plots only plot real points and the parameter $a$ in a plot label is equal to $\alpha$.    We used Maple to make the plots.  
\item The colors are $A_1$-Red, $A_2$-DarkGreen, $A_3$-NavyBlue, $A_4$-Teal.
\item If $\alpha \geq 2$, the only real points of $Z$ or any of the $A_j$, in the $u=1$ affine chart, are the sensor points.  They are on every plot and colored in every color, even if not apparent from the Maple depiction.  This is due to the Cauchy-Schwarz bound $-2v \leq d \leq 2v$.
\item In all of the plots, $Z$ is the multicolored curve with four colors Red, DarkGreen, NavyBlue, Teal.  If it seems that Red and DarkGreen are absent in a plot, that's because the only real points with that color are the points $(-1,0),(1,0)$.  
\item 
\begin{enumerate}
\item In the equal velocity cases $(0,v), (0,v)$ or $(0,-v), (0,-v)$, the $A_1$ piece (which is Red in the first case and DarkGreen in the second) of $Z$ consists of two loops (so it's bounded) that shrink to single points (either $(1,0)$ or $(-1,0)$) as $\alpha$ approaches 1. 
\item In this case, the $A_1$ piece has only these two points where $1 \leq \alpha <2$.
\end{enumerate}
\item Scales change as you go from page to page, so note when this happens.  In particular, as $\alpha$ goes from $0 $ to 1, the size of the Red and DarkGreen loops shrinks, and the NavyBlue and Teal parts expand. 
\item  All four parts of the curve $Z$ are present up until $\alpha = 1$, and as $\alpha$ approaches 1,  the Red and DarkGreen loops shrink rapidly to a single point.  The NavyBlue and Teal parts remain as $\alpha$ travels between 1 and 2.
\item  The (real part of the) curve $Z$ is always unbounded; however,  in these plotted cases, the piece $A_1$ is always bounded.  (Look at the Red and DarkGreen pieces).  One can show that this is not true for more general choices of $\bm v_j$.
\end{enumerate}
\end{remark}
\newpage
\begin{figure}[h!]
\includegraphics{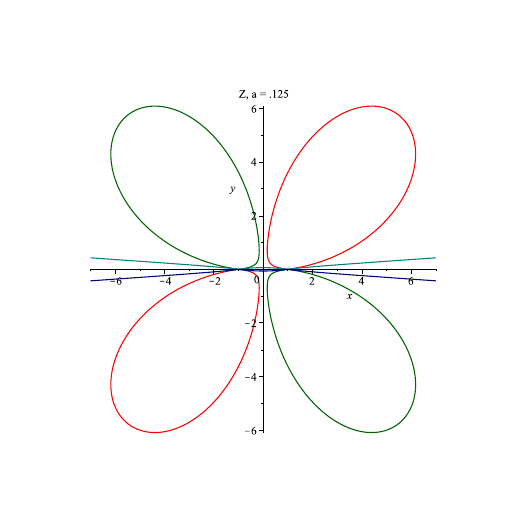}
\includegraphics{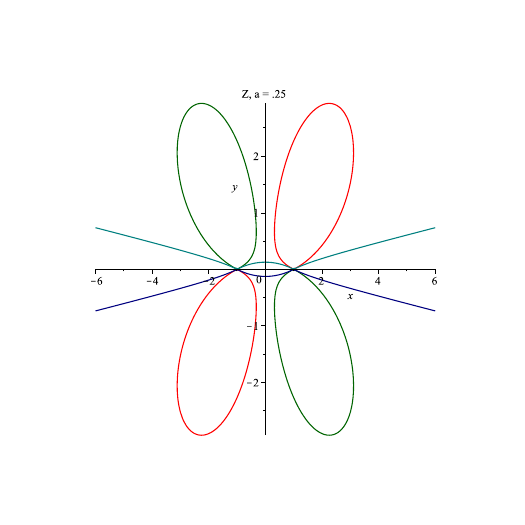}
\end{figure}
\begin{figure}[h!]
\includegraphics{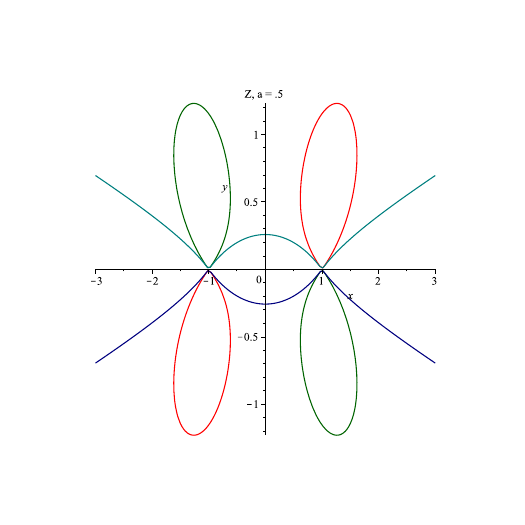}
\includegraphics{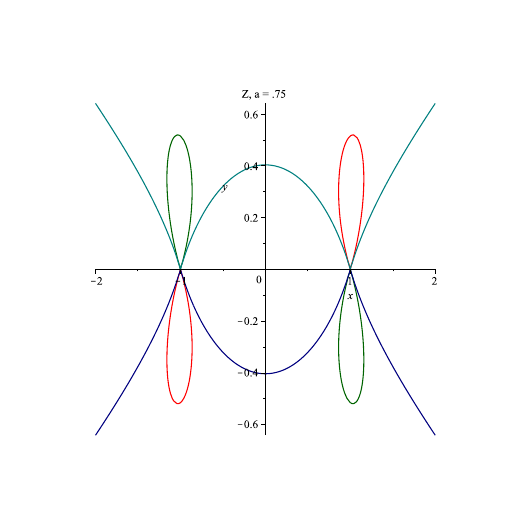}
\end{figure}
\newpage
\begin{figure}[h!]
\includegraphics{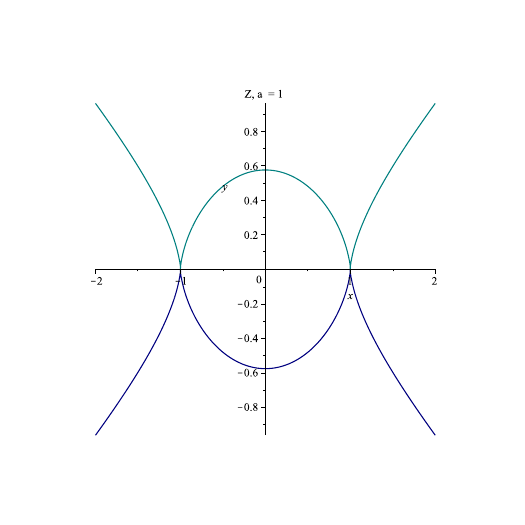}
\includegraphics{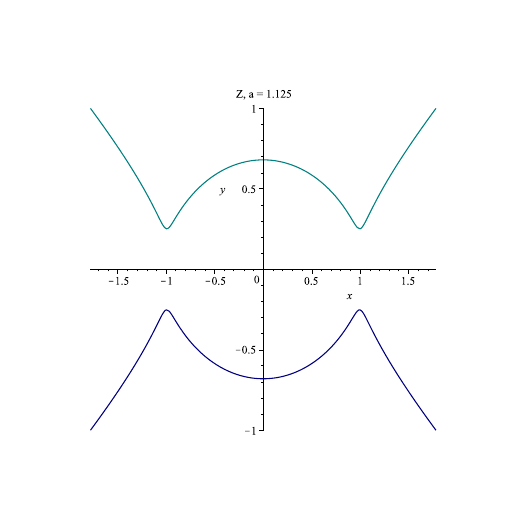}
\end{figure}
\begin{figure}[h!]
\includegraphics{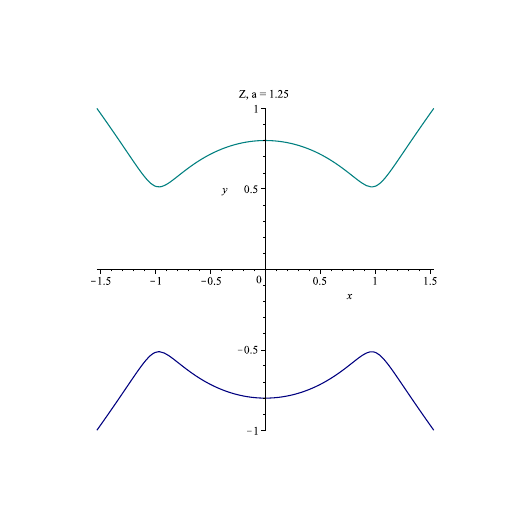}
\includegraphics{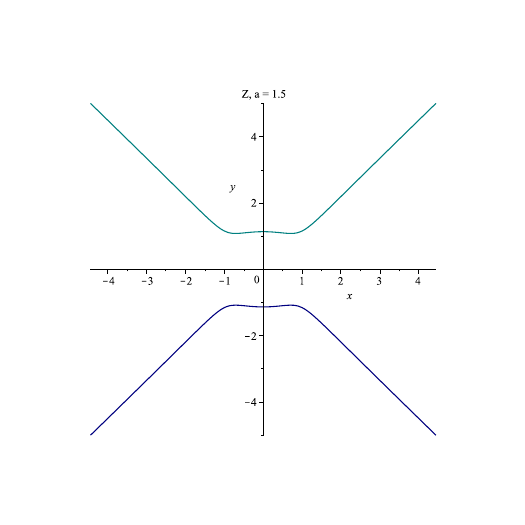}
\end{figure}
\newpage
\begin{figure}[h!]
\includegraphics{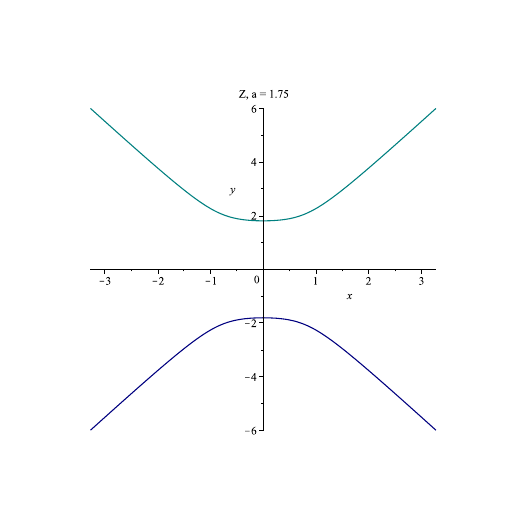}
\includegraphics{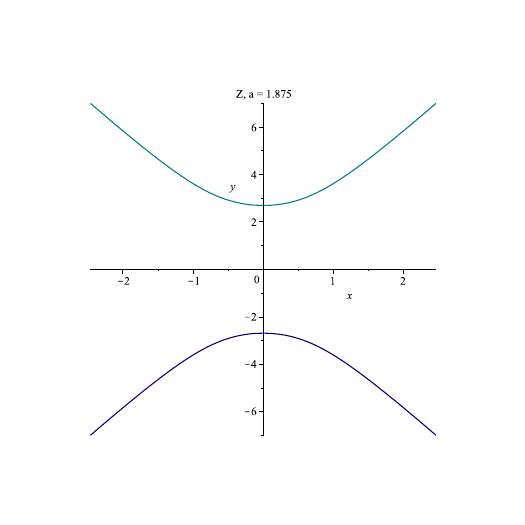}
\end{figure}

\section{Conclusions}

We've accomplished the following in this paper.  

\begin{enumerate}
\item In Section 2, after first describing the starting problem   \ref{sumFDOA}.1,  determined by \eqref{FDOAdef}, we enunciated a modified problem  \ref{sumFDOA}.2,  in terms of the Ho-Chen FDOA subvariety
$HC_F(\bm v,d),$ which is a subvariety of complex projective four-space $\mathbb{C}P^4$.

\item The curves $HC_F(\bm v,d)$ were  described as curves lying on a complex surface $Y =Y(Q_1,Q_2)$, the ambient variety,  in $\mathbb{C}P^4$.  The classification of this ambient variety $Y$, as a complex algebraic surface, were described in Section 4.  In particular, in Section 4, we established that $Y(Q_1,Q_2)$ was a rational surface and wrote down the rational parameterization of its points in terms of a particular open subset of $\mathbb{C}P^2$.  

\item   In Section 5,  still working in complex projective space,  we established that the entire family of curves $\mathcal{H}$ consisting  of all curves $HC_F(\bm v, d)$, as $[\bm v,d]$ varies over $\mathbb{C}P^4$, form a linear system of curves on the ambient variety $Y$.  This enabled us to use one of  Bertini's Theorems, BTGS \cite{EOM}, to conclude that, for most choices of parameters $\bm v,d$, $HC_F(\bm v,d)$ has only the four singularities shared with the ambient variety.   The use of Bertini's Theorem, though powerful, does not tell us exactly what ``most choices of parameters" means.

\item  In Section 6, we introduced a linear system of plane curves $\mathcal{V}$, depending on the same parameters $\bm v,d$ and described the relationship between $HC_F(\bm v, d)$ and the corresponding curve  $V(\bm v,d)$ in the linear system $\mathcal{V}$:  namely, the curves $HC_F(\bm v,d)$ and $V(\bm v,d)$ are {\bf birational} to each other, and the explicit birational map is described.  The curves $V(\bm v,d)$ were determined by using the rational parameterization of the ambient surface $Y$ given in Section 4.   Advantages of working with the plane curves $V(\bm v,d)$  are 1) every curve $V(\bm v,d)$ is a plane quartic curve, and there is some useful theory that helps to understand the nature of such curves; 2) the entire collection $\mathcal{V}$ of all these plane quartics is a linear system, giving us a way to understand 3) the nature of the singularities of a ``generic" such curve $V(\bm v,d)$.  Then, calculations showed that this enabled us to show that, generically, $V(\bm v,d)$ was irreducible, and had only two very nice singularities.  This enabled us to conclude what we expected given Bertini's Theorem on generic irreducibilty \cite{EOM}(without invoking the theorem): generically, $HC_F(\bm v,d)$ is an irreducible curve. Moreover, some conditions for which ``genericity" is achieved were explicitly determined.  Finally, as expected, we could classify the sort of curve one gets (generically) by resolving the singularities of $V(\bm v,d)$:  namely, a smooth curve, of genus one.  Sometimes this is referred to as an ``elliptic curve".  

This is significant because it has implications for the topological type of the real points on $HC_F(\bm v,d)$, though the nature of these implications is not explored in this paper.  Moreover, the fact that these curves are (generically) genus one curves tells us that we can expect the families of curves $\mathcal{H}, \mathcal{V}, \mathcal{Z}$ as well as  any particular random curve (whether considering real points or not) in any of these families,  to be geometrically complicated.  

\item In Section 7, an extended Example was worked through:  $\bm v_1 = \bm v_2 = (0,v).$ 
In this case, there are only two parameters, namely 
$v$ and $d$.  One point of working through this Example is to avoid entirely the use of Bertini's theorems, and to classify the singularities of $HC_F(v,d)$ for various $v$ and $d$, as well as  to describe explicitly the smooth curves $\tilde{V}(v,d)$ of genus 1 that one gets after resolving the singularities of $HC_F(v,d)$, when those curves are irreducible.  Moreover, these curves $\tilde{V}(v,d)$   are also plane curves, indeed, smooth irreducible cubic curves.  Such curves have a {\bf Weierstrass normal form} which could be determined using the appropriate software.  

\item In Section 8, we described explicitly another  plane curve $Z(\bm v,d)$, a projection of $HC_F(\bm v,d)$, whose real points are more directly connected to the  original non-algebraic  FDOA isocurves $A_0$ in $\mathbb{R}^2$.  In fact, the plane curves $Z(\bm v,d)$ are the smallest projective algebraic variety containing the 
FDOA isocurves $A_0$.  Unfortunately, this set of plane curves does not vary in a linear system of curves, and has more, and ``worse" singularities than $HC_F(\bm v,d)$ does, as discussed in the section about the singularities of $Z$.   Finally,  for real choices of $\bm v,d$, the connection between the real points on $HC_F(\bm v,d)$ and the real points on the curves $Z(\bm v,d)$ was explained, precisely, in terms of the projection.

\end{enumerate}
Problems that we do not address in this paper,  
but are of interest,  include
\begin{enumerate} 
\item Studying  intersections of  TDOA hyperbolas and FDOA varieties $Z(\mathbb{R})$ in $\mathbb R^2$. 
\item Repeating the work in this report, ``one dimension up": in other words, study the ``3D-2 sensor" FDOA problems parallel to the work done in this report.  The starting point here is equation \eqref{FDOAdef}, where the sensor locations, velocities and source are in  $\mathbb R^3$.  This is quite an interesting problem, from the geometric point of view.
\item For multiple sensor pairs, studying the intersections of the associated FDOA varieties and the combinations of values of FDOA that can arise.
\end{enumerate}


\section{Acknowledgements}
This material is based upon work supported in part by the Air Force Office of Scientific Research under award numbers FA9550-18-1-0087 and FA9550-21-1-0169, and in part by the U.S. Office of Naval Research under award number N00014-21-1-2145.   Any opinions, findings, and conclusions or recommendations expressed in this material are those of the authors and do not necessarily reflect the views of the Office of Naval Research or of the  United States Air Force.


\appendix

\section{Example:  The smooth quadric surface in $\mathbb{C}P^3$}  \label{quadric}


Giving $\mathbb{C}P^3$ the coordinates $w = [w_0,w_1,w_2,w_3]$,  and consider the degree 2 polynomial
$$Q = w_0w_3-w_1w_2.$$  It's not hard to see that this polynomial is irreducible.  The smooth quadric surface in $\mathbb{C}P^3$ is
$$Y(Q) \doteq \{w \in \mathbb{C}P^3 \mid w_0w_3-w_1w_2=0 \}.$$ This is a smooth projective variety; we call it ``the" smooth quadric surface in $\mathbb{C}P^3$ because any other such is isomorphic (with a complex, though not necessarily real, projective isomorphism) to $Y(Q)$.

As a topological space  $Y(Q) \cong \mathbb{C}P^1 \times \mathbb{C}P^1$ using the Segre embedding
$$\sigma:\mathbb{C}P^1 \times \mathbb{C}P^1 \rightarrow \mathbb{C}P^3,$$
$$\sigma([u_0,u_1],[v_0,v_1]) = [u_0v_0,u_0v_1,u_1v_0,u_1v_1].$$    One can show that $\sigma$ is an embedding with image equal to $Y(Q)$; for example, a local inverse to $\sigma$ on the open set of $Y(Q)$ where $w_0 \neq 0$ has the description $[w_0,w_1,w_2,w_3] \mapsto ([w_0,w_2],[w_0,w_1])$.    Also, the following diagram commutes, and both horizontal arrows are homeomorphisms:
$$\begin{array}{ccc}
\mathbb{C}P^1 \times \mathbb{C}P^1 & \stackrel{\sigma}{\longrightarrow} & Y(Q) \\
\cup & & \cup \\
\mathbb{R}P^1 \times \mathbb{R}P^1 & \stackrel{\sigma}{\longrightarrow} & Y(Q)(\mathbb{R}).
\end{array}
$$
This identifies the topological type of $Y(Q)$ as $S^2 \times S^2$, since $\mathbb{C}P^1$ is homeomorphic to $S^2$; and the topological type of $Y(Q)(\mathbb{R})$ as $S^1 \times S^1$ (the torus) since $\mathbb{R}P^1$ is homeomorphic to $S^1$.  

The Segre isomorphism $Y(Q) \cong \mathbb{C}P^1 \times \mathbb{C}P^1$ also allows us to give a definition of an $(m,n)$-curve on $Y(Q)$:  we'll define an $(m,n)$-curve on $\mathbb{C}P^1 \times \mathbb{C}P^1$ and then use the isomorphism $\sigma$ to say that an $(m,n)$-curve on $Y(Q)$ is $\sigma(E)$, for an $(m,n)$-curve $E$ on $\mathbb{C}P^1 \times \mathbb{C}P^1$.  So, suppose that $F(u_0,u_1,v_0,v_1)$ is a nonzero polynomial in $\mathbb{C}[u_0,u_1,v_0,v_1]$ that is ``bihomogeneous" of degree $m \geq 0$ in $u_0,u_1$ and of degree $n \geq 0$ in $v_0,v_1$, where at least one of $m,n$ is positive.  Then, for every $[u_0,u_1] \in \mathbb{C}P^1, [v_0,v_1] \in \mathbb{C}P^1$, whether or not $F(u_0,u_1,v_0,v_1) = 0$ is independent of the choice of $u_0,u_1$ and the choice of $v_0,v_1$.  Therefore, it makes sense to define
$$E = E(F) = \{([u_0,u_1],[v_0,v_1])\in \mathbb{C}P^1 \times \mathbb{C}P^1 \mid F(u_0,u_1,v_0,v_1) = 0\}.$$  In this case, we say that $E$ is an $(m,n)$-curve in $\mathbb{C}P^1 \times \mathbb{C}P^1.$  

For example, if $g(w_0,w_1,w_2,w_3)$ is a homogeneous polynomial of degree $d>0$ in the $w$s, then making the substitution $w_0 = u_0v_0,w_1 = u_0v_1,w_2 = u_1v_0,w_3 = u_1v_1$, we see that if 
$$F(u_0,u_1,v_0,v_1) \doteq  g(u_0v_0,u_0v_1,u_1v_0,u_1v_1),$$ then $F$ defines a $(d,d)$-curve $E(F)$ in $\mathbb{C}P^1 \times \mathbb{C}P^1$ such that $\sigma(E(F)) = V_Q(F),$ and therefore $V_Q(F)$ is a $(d,d)$-curve on $Y(Q)$.  It's not as straightforward to write down an $(m,n)$-curve in $Y(Q)$ where $m \neq n$, since $\sigma^{-1}$ has different definitions locally.  

If we make the (real) change of coordinates  $w = w_3-w_0, x = w_2-w_1, y = w_0+w_3, z = w_1+w_2$, we see that 
$$Y(Q) = \{[w,x,y,z] \in \mathbb{C}P^3 \mid -w^2 + x^2 + y^2 -z^2 = 0\}.$$ 
 In real Euclidean space $\mathbb{R}^3$, obtained by setting $w=1$,  
 $Y(Q)$ corresponds to the familiar quadric surface from our undergraduate days:  the hyperboloid of one sheet, with axis the $z$-axis. 
 

\end{document}